\documentclass{scrartcl}

\usepackage{algorithm}
\usepackage[noEnd,indLines]{algpseudocodex}
\usepackage{amsmath}
\usepackage{amssymb}
\usepackage{amsthm}
\usepackage{authblk}
\usepackage[shortlabels]{enumitem}
\usepackage{framed}
\usepackage{ifthen}
\usepackage{mathtools}
\usepackage{thmtools}

\usepackage[colorlinks]{hyperref}
\usepackage{cleveref}
\usepackage{mathtools}

\hypersetup{linkcolor=blue,citecolor=blue}

\newtheorem{theorem}{Theorem}[section]
\newtheorem{lemma}[theorem]{Lemma}
\newtheorem{corollary}[theorem]{Corollary}
\newtheorem{observation}[theorem]{Observation}
\newtheorem{proposition}[theorem]{Proposition}

\theoremstyle{definition}
\newtheorem{definition}[theorem]{Definition}

\newcommand{\N}{\mathbb{N}}

\newcommand{\alg}[1]{\textup{\texttt{#1}}}
\newcommand{\problem}[1]{\textup{\textsc{#1}}}
\newcommand{\Time}{\mathbb{T}}
\newcommand{\pDTM}{\problem{Decremental Tree Minima}}
\newcommand{\pDTR}{\problem{Decremental Tree Roots}}
\newcommand{\pCT}{\problem{Cartesian Tree}}
\newcommand{\pPM}{\problem{Path Minima}}
\newcommand{\pTS}{\problem{Tree Sorting}}
\newcommand{\pTTS}{\problem{Top-$k$ Tree Sorting}}

\newcommand{\Ind}{\mathcal{C}} %
\newcommand{\ETs}{\mathcal{E}} %
\newcommand{\Lin}{\mathcal{N}} %
\newcommand{\IndE}{\Ind^{\mathrm{E}}} %
\newcommand{\EPTs}{\mathcal{E}^{\mathrm{E}}} %
\newcommand{\LEs}{\mathcal{L}} %

\newcommand{\wmin}{w^{\mathrm{min}}}

\newcommand{\fO}{\mathcal{O}}
\newcommand{\fP}{\mathcal{P}}

\newcommand{\Prios}{\mathfrak{P}} %

\newcommand{\gminus}{-} %

\algrenewcommand\textproc{\normalfont\texttt} %
\makeatletter
\algnewcommand\algorithmicforeach{\textbf{for each}}
\algdef{S}[FOR]{ForEach}[1]{%
	\algpx@startCodeCommand\algpx@startIndent\algorithmicforeach\ #1\ \algorithmicdo%
}
\pretocmd{\ForEach}{\algpx@endCodeCommand}{}{}
\makeatother

\title{Universally Optimal\\Decremental Tree Minima}
\author{Benjamin Aram Berendsohn}
\affil{\normalsize{}Max Planck Institute for Informatics}
\date{\vspace{-5ex}}

\begin{document}
	
\maketitle

\begin{abstract}
An algorithm on weighted graphs is called \emph{universally optimal} if it is optimal for \emph{every} input graph, in the worst case taken over all weight assignments.
Informally, this means the algorithm is competitive even with algorithms that are optimized for only one specific input graph.
Universal optimality was recently introduced [Haeupler et al.\ 2024] as an alternative to the stronger, but often unachievable \emph{instance optimality}.

In this paper, we extend the concept of universal optimality to data structures.
In particular, we investigate the following dynamic graph problem:
Given a vertex-weighted forest, maintain the minimum-weight vertex of every tree under edge deletions.
The problem requires $\Theta(\log n)$ amortized time per operation in general, but only $\mathcal{O}(1)$ time if the initial forest is a path.

We present a data structure that has optimal total running time for every fixed initial forest and every fixed \emph{number} of operations/queries $m$, when taking the worst case over all weight assignments and operation sequences of length~$m$. This definition of universal optimality is easily adapted to other data structure problems.

Our result combines two techniques: (1) A decomposition of the input into paths, to take advantage of the $\mathcal{O}(1)$-time path-specific data structure; and (2) splay trees [Sleator and Tarjan 1985], which, informally speaking, are used to optimally handle a certain sorting-related subproblem.
We apply our data structure to solve problems related to \emph{Cartesian trees}, \emph{path minimum queries}, and \emph{bottleneck vertex/edge queries}, each with a certain universal optimality guarantee. Our data structure also can be modified to support \emph{edge} weights instead of vertex weights. Further, it generalizes to support \emph{semigroup sum} queries instead of \emph{minimum} queries, in universally optimal time.
\end{abstract}

\pagebreak
\tableofcontents
\pagebreak

\section{Introduction}

In this paper, we study the following dynamic graph problem, which we call \pDTM{} (DTM). We are given a vertex-weighted forest, and need to support the following two operations: $\alg{delete-edge}(u,v)$, which does the obvious; and $\alg{tree-min}(v)$, which returns the minimum vertex in the connected component (i.e., the tree) that contains $v$. Note that vertex weights cannot be changed and edges cannot be inserted. We assume weights are pairwise distinct and can only be accessed via comparisons.

To the author's knowledge, previous works only considered the \emph{fully-dynamic} variant of this problem, where edge insertions are also allowed~\cite{GoldbergGrigoriadisEtAl1991}.
Standard dynamic forest data structures like \emph{link-cut trees}~\cite{SleatorTarjan1983,SleatorTarjan1985} %
solve the fully-dynamic problem in $\fO(n \log n)$ time, when there are $n$ vertices and $\Theta(n)$ operations.
This is tight by a reduction from dynamic connectivity~\cite{PatrascuDemaine2006}, even when the forest is a collection of paths at all times.

For the \pDTM{} problem (where insertions are not allowed), the same $\Omega(n \log n)$ lower bound holds by a reduction from sorting, so the above solutions are worst-case optimal.
What makes the DTM problem interesting is that it is \emph{easy} when the input tree is a path: Using a standard \emph{range minimum query} data structure~\cite{BenderFarach-Colton2000}, we can achieve $\fO(m+n)$ time for $m$ operations and $n$ vertices.

This begs the question whether we can beat $\Theta(\log n)$ amortized time per operation on other input forests. In this paper, we give a complete answer to this question: A~data structure that has optimal running time on \emph{every input forest}.
To formalize this, we extend the notion of \emph{universal optimality}, recently defined for (non-dynamic) graph algorithms~\cite{HaeuplerHlad'ikEtAl2024}, to data structures.
We also give several applications of our data structure, which yields universally optimal \emph{algorithms}.
In the following, we first discuss universal optimality in more detail, and then present our results.

\paragraph{Universal optimality for algorithms.}

In classical worst-case complexity, the running time of an algorithm is usually measured in terms of input size only; i.e., for a fixed input size, take the worst case over all inputs of that size. For graph algorithms, sometimes both the number $n$ of vertices and the number $m$ of edges is fixed, and parametrized graph algorithms additionally fix the value of some parameter.
In universal optimality, the parameter is the \emph{graph itself}, and the worst case is only taken over additional data like vertex weights.

More precisely, consider a problem where the input is a graph $G$ with an (edge- or vertex-)weight function $w$, and let $A$ be an algorithm for this problem. Let $\Time(A,G,w)$ denote the running time of $A$ on input $(G,w)$, and let $\Time(A,G) = \max_w \Time(A,G,w)$, where the maximum is taken over all possible weight functions for $G$.\footnote{We assume here that running times are integers and that the maximum exists.}
Then $A$ is \emph{universally optimal} if, for every graph $G$, and every other algorithm $A'$, we have $\Time(A,G) \le c \cdot \Time(A', G)$ for some globally fixed constant $c$.
In particular, universal optimality means that $A$ is competitive with algorithms that are specifically engineered for a single graph.
In contrast, \emph{instance optimality} means that we have $\Time(A,G,w) \le c \cdot \Time(A', G,w)$ for \emph{every input} $(G,w)$ and some globally fixed constant $c$. Instance optimality can be seen as the ``ultimate'' optimality guarantee;\footnote{If we ignore constant factors.} unfortunately, it is often unachievable. This is in particular true for problems that contain sorting as a special case. Indeed, every sequence can be sorted in linear time by \emph{some} algorithm, but for every sorting algorithm there exists a sequence that requires $\Omega(n \log n)$ time.

We now discuss some problems where universally optimal algorithms are known.
First, in the \emph{distance ordering problem}, we are given a directed edge-weighted graph $G$ and a starting vertex $s$, and need to order the vertices by distance from $s$.
Haeupler, Hladík, Rozhon, Tarjan, and Tětek~\cite{HaeuplerHlad'ikEtAl2024} showed that Dijkstra's algorithm, equipped with a special priority queue, solves this problem in universally optimal running time.

A related problem is \emph{partial order sorting}. Here, we are given a set of elements and the Hasse diagram of a partial order on these elements. We need to sort the elements using comparisons, under the promise that the sorted order is a linear extension of the given partial order. For universal optimality, the Hasse diagram corresponds to the graph~$G$, and the underlying order takes the role of the weight assignment.
Two different universally optimal algorithms for this problem have been found recently~\cite{HaeuplerHladikEtAl2025a,vanderHoogRotenbergEtAl2025}.

Finally, there is an algorithm for \emph{convex hull}~\cite{KirkpatrickSeidel1986,AfshaniBarbayEtAl2017} that has been called universally optimal~\cite{HoogRotenbergEtAl2025}. This is obviously not a graph problem; however, the concept of universal optimality can be generalized to arbitrary input types, as long as we can identify a \emph{primary input part} (replacing the graph), and a \emph{secondary input part} (replacing the weight assignment). For convex hull, the primary input part is taken as the \emph{set} of input points, and the secondary input part is the \emph{order} in which they are given.
A universally optimal algorithm by this definition is \emph{instance-optimal} among all algorithms that behave the same irrespective of input order; accordingly, the term used originally was \emph{instance optimality in the order-oblivious model}~\cite{AfshaniBarbayEtAl2017}.

Note that all three problems have worst-case lower bounds of $\Omega(n \log n)$ (where $n$ denotes the number of vertices in the graph, the number of elements to sort, or the number of points). These lower bounds are achieved if the input is a star, a partial order consisting of a single antichain, or a set of points in convex position, respectively. In all three cases, worst-case optimal algorithms have been known for a long time.

On the other hand, the following easy cases can be identified: For the distance-ordering problem, if $G$ is a directed path starting at $s$, then the problem can be solved in linear time, regardless of the edge weights. Similarly, if the given partial order is a chain, then no comparisons are necessary to determine the sorted order. Finally, it is known that a convex hull can be computed in linear time if it only contains a constant number of points, regardless of the input order~\cite{KirkpatrickSeidel1986,Chan1996}.

Thus, worst-case running times range between $\Theta(n \log n)$ and $\Theta(n)$ depending on the primary input part (i.e., the graph, partial order, or point set). It is reasonable to ask for some \emph{interpolation} between these extremes.
A well-known result in that direction are the output-sensitive convex hull algorithms of Kirkpatrick--Seidel~\cite{KirkpatrickSeidel1986} and Chan~\cite{Chan1996}, which take $\fO(n \log h)$ time, where $h$ is the number of points on the convex hull.
Universally optimal algorithms can be seen as a \emph{best possible} such interpolation.

\paragraph{Universal optimality for data structures.}
Recall the \pDTM{} (DTM) problem defined above. We start with a vertex-weighted forest, we can delete an edge with $\alg{delete-edge}(u,v)$, and we can find the minimum-weight vertex in the tree containing a given vertex $v$ with $\alg{tree-min}(v)$.
In the following, we call vertex weights \emph{priorities}, to distinguish them from numerical weights. We assume priorities in the input forest are pairwise distinct and can only be accessed via comparisons.

When the input forest $G$ has $n$ vertices, then $\Theta(n)$ DTM operations can be performed in $\fO(n \log n)$ time~\cite{GoldbergGrigoriadisEtAl1991}.
This is tight if $G$ is a star where the central vertex has priority $\infty$. Indeed, in this case a DTM data structure can \emph{sort} the leaf priorities with alternating \alg{tree-min} queries and edge deletions, which requires $\Omega(n \log n)$. On the other hand, if $G$ is path, then DTM can be solved with $\fO(n)$ time preprocessing and $\fO(1)$ time per operation (details are given in \cref{sec:pre:pm}).

We are thus in a similar situation as for the aforementioned algorithmic problems: The worst-case running time for $\Theta(n)$ operations a fixed graph can be $\Theta(n \log n)$ or $\Theta(n)$, depending on the graph (in fact, it can be anything in between).
However, it is not immediately obvious how to define universal optimality for data structures.
In particular, the restriction to $\Theta(n)$ operations may not adequately capture the complexity of the problem.
On the other hand, if the number of operation is arbitrary, the worst-case total running time for any fixed graph is unbounded.

We hence use the following definition, which can be easily adapted to a wide range of dynamic graph problems (see below). We model a problem instance as a tuple $(G,p,\sigma)$, where $G$ is the initial tree, $p$ is the priority assignment and $\sigma$ is the sequence of operations.
The primary input part for universal optimality is $(G,|\sigma|)$, i.e., the graph and the \emph{number} of operations.
Note that $\sigma$ is revealed one-by-one, and not even $|\sigma|$ is known at the start.

More precisely, if $D$ is a data structure for the DTM problem, $G$ is a graph, and $m \in \N$, then $\Time(D,G,m)$ denotes the total time $D$ requires for an input $(G,p,\sigma)$, in the worst case over all possible priority assignments $p$ and all valid operation sequences $\sigma$ with $|\sigma| = m$. We call $D$ universally optimal when, for each $G$, each $m$, and each other data structure $D'$, we have $\Time(D,G,m) \le c \cdot \Time(D',G,m)$ for some globally fixed constant~$c$.
Our main result %
is:

\begin{theorem}\label{p:dtm-main}
	There is a universally optimal data structure for the \pDTM{} problem.
\end{theorem}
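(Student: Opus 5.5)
We will prove the theorem by exhibiting a concrete data structure $D$ together with a matching lower bound that holds for every data structure on every fixed pair $(G,m)$. The natural candidate for the complexity measure is a quantity $\Phi(G,m)$ expressed through a \emph{path decomposition} of $G$. The plan has three steps: define $\Phi$, show that $D$ runs in $\fO(n+m+\Phi(G,m))$ time, and show that every $D'$ needs $\Omega(n+m+\Phi(G,m))$ time on some priority assignment and operation sequence of length $m$. The additive $n$ term needs some care when $m \ll n$. In that case we will use a lazy initialization, so that the structure only touches parts of the forest when operations reach them. The lower bound then only needs to pay for what is forced, for instance $\Omega(m)$ just to read the operations.

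\textbf{Upper bound.} We take a heavy-path-style decomposition of each tree into vertex-disjoint paths, and equip each path with the $\fO(1)$-time path structure of \cref{sec:pre:pm}. A deletion inside a path splits it in $\fO(1)$ amortized time. The minimum of a tree is then the minimum over the path-minima of the paths that tree currently contains. We maintain these path-minima per tree in a splay tree keyed by priority. A deletion may split a tree into two components, which requires splitting this collection of path-minima, and it also changes one path's minimum. The point of using splay trees is that their \emph{entropy-type} (static finger / working-set) bounds let us charge the cost against the number of comparison outcomes that are genuinely undetermined. We will need to design the decomposition so that splitting a tree's set of paths by a deletion is cheap, for instance by attaching child paths under their parent path's node and handling splits by heavy-light style smaller-half arguments.

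\textbf{Lower bound and main obstacle.} The lower bound will be information-theoretic. For a fixed $(G,m)$ we choose an adversarial priority assignment and a deletion order whose answers to the $m$ \alg{tree-min} queries encode a sorting of many path-minima. A star-like subtree of size $k$ yields $\log k!$ bits, and more generally we want to sort the tops of the paths in the decomposition subject to the tree order, which is essentially a top-$k$ \pTS{} problem in the paper's terminology. The hard part, and where I expect the real work to be, is making the upper and lower bounds \emph{match} for every $(G,m)$. This means identifying exactly the right measure, something like $\max$ over sets $S$ of at most $m$ vertices of $\log$ of the number of linear extensions of the tree order restricted to $S$. The task then splits in two. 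On one side we must show that the splay-tree costs sum to $\fO$ of this log-count, which I would attempt with an entropy bound on splay trees over the decomposition's paths. On the other side we must exhibit an adversary achieving it, by building priorities so that every linear extension is consistent and the queries reveal it. I would first attempt the case $m\ge n$ with the pure measure $\log e(G)$, the number of heap orderings of the tree. I would then reduce general $m$ by restricting to the $m$ most relevant vertices.
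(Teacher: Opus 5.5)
Your lower-bound plan is essentially the paper's. With $T$-monotone priorities, $\lfloor m/2\rfloor$ rounds of \alg{tree-min} at the root followed by \alg{cut} solve \pTTS{}, and the generalized hook-length count of $|\LEs_S(T)|$ gives $\Omega(H_m(T)-m)$. The set $S$ should be a descendant-closed set of $\lfloor m/2\rfloor$ nodes, so that $|\LEs_k(T)|\ge|\LEs_S(T)|$; this loses nothing, since $H_k$ is attained by such a set, and it has nothing to do with tops of paths. The additive $n$ needs no lazy initialization: for every $m\ge 1$ a first \alg{tree-min} already forces $n-1$ comparisons, so $\Omega(n)$ is part of the lower bound.

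The genuine gap is the upper bound, and the design you sketch does not achieve it. In a splay tree keyed by priority, the path-minima on the two sides of a cut are interleaved arbitrarily, so a topological split is not an interval split. Your smaller-half fallback moves each key $\Theta(\log\ell)$ times, and rebuilding a side of $s$ keys by insertion can cost $\Theta(s\log s)$, i.e.\ $\Theta(\ell\log^2\ell)$ overall. Take a long chain on top of a complete binary tree with $\ell\approx n/\log n$ leaves and cut the binary part top-down: this is $\Theta(n\log n)$, while $H(T)=\Theta(n+\ell\log n)=\Theta(n)$. Similarly, when a cut changes a path's minimum, a priority-keyed tree must re-locate that key by search, and nothing bounds this cost by $\fO(1+\log(n/|T^0_v|))$; this is the paper's footnoted caveat about priority-sorted splay trees.

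The missing ideas are as follows.
\begin{itemize}
\item Contract \emph{maximal chains}, so that a cut inside a chain is a clean \alg{split}; heavy paths do not give you this, because light children hang off interior path nodes.
\item Keep internal super-nodes in an ordinary dynamic forest. This is affordable because by \cref{p:num-cuts-splits} only $\fO(\ell)$ structural changes ever touch them, at $\fO(\log n)$ each, and $\ell\log n\le H(T)$.
\item Handle the root chain directly.
\item Store only the super-leaves in a splay tree ordered by \emph{left-to-right position}, augmented with subtree minima.
\end{itemize}
Then every split is an interval split (endpoints from an extremal-leaves structure) costing $\fO(\log n)$. Changing a leaf chain's minimum is a \alg{replace} that splays through a pointer without any search. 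By the access lemma this costs $\fO(\log(W/w(y)))$ under the analysis-only weights $w(y)=\max_{v\in C(y)}|T^0_v|$. These weights sum to at most $n$ within one tree, because the heaviest nodes of distinct super-leaves are pairwise independent in $T^0$.

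Your ``restrict to the $m$ most relevant vertices'' step for $m<n$ also hinges on this mechanism, since the structure cannot know which nodes will be cut. The paper simply redoes the analysis with $w'(v)=\tfrac nk|V(T^0_v)\cap S|$ for the set $S$ of cut nodes. This is valid only because the splay-tree bound holds for weights unknown to the algorithm.
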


\paragraph{Alternative definitions.} Our definition is not the only reasonable interpretation of universal optimality. For the DTM problem specifically, one could require \alg{tree-min} to run in constant time, and measure the total time for preprocessing plus the at most $n$ edge deletions. This is in line with the idea of ``maintaining'' tree minima. However, our definition is stronger and more directly generalizes to other decremental graph problems.

Another definition may count each operation separately. For example, for DTM, we may consider pairs $m_1, m_2 \in \N$ instead of a single value $m$, and take the worst case over all sequences of $m_1$ edge deletions and $m_2$ \alg{tree-min} queries. 
Our data structure does not achieve this definition of universal optimality, but perhaps it could be adapted with some mechanism that postpones edge deletions until required.

For an even stronger definition, we could fix the whole sequence of operations, taking the worst case over only the input weights. An algorithm satisfying this definition would have to match the running time of an optimal \emph{offline} algorithm, which is impossible for DTM, as is explained at the end of this section.
However, for a different problem where operations have weight parameters, the following definition could be interesting: Fix the operations and their vertex parameters, but not the weight parameters.\footnote{This definition was suggested by an anonymous reviewer.}

For \emph{incremental} graph data structures, we can easily adapt our definition by taking the final graph instead of the initial graph. For \emph{fully-dynamic} problems, one could take the graph containing all edges ever present in the graph.\footnote{This may seem crude, but there are applications where this graph is meaningful. For example, a maximum flow algorithm of Goldberg and Tarjan~\cite{GoldbergTarjan1988} uses a dynamic forest data structure, and all edges ever inserted into the dynamic forest are contained in the input graph.}

\paragraph{Cartesian trees on graphs.}

\begin{figure}
	\centering
	\begin{minipage}{.2\textwidth}
		\centering
		\includegraphics[scale=1.5]{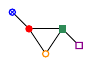}
	\end{minipage}%
	\hspace{5mm}%
	\begin{minipage}{.2\textwidth}
		\centering
		\includegraphics[scale=1.5]{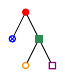}
	\end{minipage}%
	\hspace{5mm}%
	\begin{minipage}{.4\textwidth}
		\centering
		\includegraphics[scale=1.5]{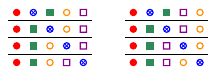}
	\end{minipage}
	\caption{A graph (right), an elimination tree on that graph (center), and the eight priority-orderings of vertices inducing that elimination tree (right).}\label{fig:et-ind}
\end{figure}

Our second result %
is a universally optimal algorithm for computing a certain graph decomposition, using our \pDTM{} data structure. We start with some definitions.
An \emph{elimination tree} on a connected graph $G$ is a rooted tree $T$ with $V(T) = V(G)$, constructed as follows (see \cref{fig:et-ind}, left/center): First, pick an arbitrary vertex $r \in V(G)$ as the root of $T$. Then, recursively construct elimination trees on each component of $G \gminus r$, and add them as child subtrees to $r$. Note that if $G$ is a path, then $T$ is a binary tree, and fixing a direction on the path induces the usual left and right children. Elimination trees have been studied in the context of matrix factorization, vertex search, and structural graph theory; see \cref{sec:related} for more details.

A priority assignment $p$ uniquely determines an elimination tree $T$ on $G$ as follows (see \cref{fig:et-ind}): Whenever a subtree root is picked, pick the currently available vertex with minimum priority. Alternatively, $T$ is the unique elimination tree on $G$ such that for each ancestor-descendant pair $u, v$, we have $p(u) < p(v)$. We call $T$ the \emph{Cartesian tree on $(G,p)$} (since it generalizes the concept of so-called Cartesian trees on sequences~\cite{Vuillemin1980}), and write $\Ind(G,p) = T$. The \pCT{} problem consists of computing $\Ind(G,p)$, given $G$ and $p$. Some variants of the \pCT{} problem have been considered before~\cite{Liu1986,Chazelle1987,ZhuMutchler1994,DeanMohan2013,DemaineLandauEtAl2014}.

In this paper, we give an algorithm for \pCT{}. Our algorithm runs in time $\fO( \log |\ETs(G)| )$, where $\ETs(G)$ denotes the set of elimination trees on $G$. Since every elimination tree on $G$ is the Cartesian tree for some priority function, any algorithm needs $\log |\ETs(G)|$ comparisons by the information-theoretic lower bound. Thus, our algorithm is universally optimal (with primary input part $G$).

If $G$ is a tree, then we can use our \pDTM{} data structure. Simply determine the minimum-priority vertex $v$, and remove it by deleting all adjacent edges. The resulting graph contains a component for each deleted edge. Thus, we can make $v$ the root and recurse on each component.
We can show that this algorithm runs in $\fO(\log |\ETs(G)|)$ time using techniques developed for \pDTM{}.

The general case, where $G$ is an arbitrary connected graph, reduces to the tree case by computing a certain spanning tree~\cite{ZhuMutchler1994}. We can compute this spanning tree sufficiently fast with a variant of the textbook Dijkstra-Jarník-Prim algorithm, using the analysis of the aforementioned universally optimal variant of Dijkstra's algorithm~\cite{HaeuplerHlad'ikEtAl2024}.

\begin{restatable}{theorem}{restateCartTree}\label{p:univ-opt-et}
	There is an algorithm that, given a graph $G$ and a priority assignment~$p$, computes $\Ind(G,p)$ in universally optimal time $\fO( |E(G)| + \log |\ETs(G)| )$.
\end{restatable}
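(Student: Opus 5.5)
The plan is in three parts: reduce the general case to a tree via a spanning tree, solve the tree case with the \pDTM{} data structure, and prove matching lower bounds.

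\textbf{Step 1: reduction to trees.} Following Zhu and Mutchler~\cite{ZhuMutchler1994}, I would define an ``elimination spanning tree'' $S$ of $(G,p)$ such that $\Ind(S,p)=\Ind(G,p)$. Assign each edge $uv$ the key $\max(p(u),p(v))$ and take the \emph{maximum} spanning tree with respect to these keys. Equivalently, run a Prim-like search from the minimum-priority vertex: repeatedly extract the frontier vertex of minimum priority and attach it to an already visited neighbour. I would verify that this search order gives a tree $S$ in which, for every threshold $t$, the components of $G[\{v: p(v)\ge t\}]$ and $S[\{v:p(v)\ge t\}]$ coincide as vertex sets. This is the invariant that makes the Cartesian trees equal, since the root of each recursive subproblem is the minimum vertex of a component of such a superlevel set.

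\textbf{Step 2: cost of building $S$.} The search is exactly Dijkstra's algorithm with the ``distance'' of $v$ replaced by $p(v)$. I would implement it with the working-set priority queue of Haeupler et al.~\cite{HaeuplerHlad'ikEtAl2024}. Their analysis bounds the cost by $\fO(|E(G)|+\sum_v \log(\text{working set size at extraction}))$. I would then argue that this sum is $\fO(\log|\ETs(G)|)$ by mapping the extraction order to an injective encoding of the resulting elimination tree.

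\textbf{Step 3: the tree case.} On $S$, build the \pDTM{} structure and recurse. Find the minimum $v$, delete its incident edges, make $v$ the root, and recurse on each new component via \alg{tree-min}. This uses $\fO(n)$ operations, so by the data structure's guarantee the cost is $\fO(\Time(D',S,n))$ for any competitor $D'$. To connect this to $\log|\ETs(S)|$, I would use the structural quantity the paper develops for \pDTM{}, a path decomposition plus splay-tree sorting cost, and show it is bounded by $\log|\ETs(S)|$ plus linear terms. Finally, $|\ETs(S)|\le|\ETs(G)|$ because elimination trees of a spanning subgraph inject into those of $G$.

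\textbf{Step 4: lower bounds.} The $\Omega(|E(G)|)$ term holds because every edge must be read. The $\Omega(\log|\ETs(G)|)$ term is the information-theoretic comparison bound, since each elimination tree is realised by some $p$.

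\textbf{Main obstacle.} The hardest step is the bound $\log|\ETs|$ for both the Prim phase and the \pDTM{} phase. In each case I must exhibit, from an arbitrary run, enough distinct elimination trees to pay for the logarithmic costs. My first approach would be to count the valid priority orderings inducing a given tree, i.e.\ its linear extensions, and compare $n!$ divided by that count against the sum of logarithms of working-set or component sizes.
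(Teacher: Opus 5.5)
\textbf{Step 1 is wrong in both versions, and Steps 2--3 are built on it.} Take the $6$-cycle with vertices $s,a,c,d,e,b$ in cyclic order and priorities $p(s)=1$, $p(a)=10$, $p(c)=2$, $p(d)=3$, $p(e)=12$, $p(b)=11$. Since $G-s$ is connected, $s$ has a single child in $\Ind(G,p)$. So any spanning tree $S$ with $\Ind(S,p)=\Ind(G,p)$ must have $s$ as a leaf.

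The maximum spanning tree for the key $\max(p(u),p(v))$ drops the unique lightest edge $cd$ (key $3$) and keeps both $sa$ and $sb$. Your min-priority-first search from $s$ extracts $a,c,d$, and then $b$ (as $11<12$) while $e$ is still unvisited, so $b$ can only be attached to $s$. Again $sa,sb\in E(S)$, and $s$ gets two children in $\Ind(S,p)$ whatever you do with $e$.

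The invariant you state is the right one. But an edge $uv$ lies in $G[\{p\ge t\}]$ iff $\min(p(u),p(v))\ge t$, so your invariant says precisely that $S$ is a maximum spanning tree for $\wmin_p$. This is \cref{p:max-sptree}, derived from the Zhu--Mutchler cycle lemma \cref{p:min-edge}, and it is the tree the paper computes with the Prim-type algorithm of \cref{p:jp}. There, a frontier vertex $v$ has key $\max_u \min(p(u),p(v))$ over visited neighbours $u$, keys are updated as the visited set grows, and the largest key is extracted. In particular, the extraction order is not by $p(v)$. The working-set heap analysis still applies, because it only needs the extraction order to be a linearization. The cited bound on $\sum_x\log(t_x-s_x)$ then gives $\fO(|E(G)|+n+\log|\Lin(G,s)|)$, and $|\Lin(G,s)|\le|\ETs(G)|$ because linearizations correspond to degenerate elimination trees rooted at $s$.

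\textbf{Step 3 is missing the argument linking the tree phase to $\log|\ETs(G)|$.} Universal optimality of the \pDTM{} structure only compares it with other data structures on $S$. What you need is the explicit bound of \cref{p:ub-partial}, which gives time $\fO(H(T))$ for a rooting $T$ of $S$ (\cref{p:ct-algo-trees}).

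Counting linear extensions of the \emph{output} tree $E$ cannot supply the matching lower bound. Indeed, $\log(n!/|\LEs(E)|)=\sum_v\log|E_v|$ is $\Theta(n\log n)$ on a path with monotone priorities, while $\log|\ETs(G)|=\Theta(n)$ when $G$ is a path. The count that works is the linear extensions of $T$ itself:
every $T$-monotone $p$ satisfies $\Ind(G,p)=\Ind(S,p)$ (\cref{p:reduce-deg-et}), and distinct such orders give distinct degenerate trees (\cref{p:diff-mon-deg}). Hence $|\ETs(G)|\ge|\LEs(T)|$. The hook-length formula (\cref{p:tree-les}) together with $\log|\ETs(G)|\ge n-1$ (\cref{p:et-lb-easy}) then yields $H(T)=\fO(\log|\ETs(G)|)$.

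This route also makes your inequality $|\ETs(S)|\le|\ETs(G)|$ unnecessary. As stated, its justification is wrong: an elimination tree of $S$ need not be one of $G$ (take the path $a$--$b$--$c$ inside a triangle, rooted at $b$). The inequality itself is true, but it needs an argument. For example, $|\ETs(G)|$ is the average over priority orders $p$ of $\prod_v|\Ind(G,p)_v|$, and each subtree size can only grow when edges are added.
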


Note that, crucially, we assume vertex priorities are unsorted. If they are pre-sorted, we can compute $\Ind(G,p)$ in linear time~\cite{DemaineLandauEtAl2014}.\footnote{The reference~\cite{DemaineLandauEtAl2014} discusses a variant of the problem for trees with \emph{edge} priorities, but the idea is easily adapted to vertex priorities, and generalized to graphs using the spanning tree technique.}

Cartesian trees on paths~\cite{Vuillemin1980} are useful to implement \emph{range minimum queries}~\cite{GabowBentleyEtAl1984}. Our Cartesian trees on graphs can be used similarly, as we discuss next.

\paragraph{Path minimum and bottleneck queries.}

Suppose we are given a tree $G$ with priority assignment $p$, and we want to preprocess it to answer the following query:
\begin{itemize}
	\item $\alg{path-min}(u,v) \rightarrow w$ returns the minimum-priority vertex $w$ on the path between $u$ and $v$.
\end{itemize}

Note that since $G$ is a tree, the path between $u$ and $v$ is unique. It is not hard to see that if $T$ is the Cartesian tree $\Ind(G,p)$, then the answer to $\alg{path-min}(u,v)$ is the lowest common ancestor (LCA) of $u$ and $v$ in $T$. Since we can preprocess $T$ in $\fO(n)$ time to answer LCA queries in $\fO(1)$ time~\cite{HarelTarjan1984}, we have:

\begin{restatable}{theorem}{restatePMDS}\label{p:pmds}
	There is a data structure that, given a tree $G$ and a priority assignment~$p$, can answer \alg{path-min} queries in $\fO(1)$ time, with $\fO( \log |\ETs(G)| )$ preprocessing time.
\end{restatable}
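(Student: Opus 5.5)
We prove \cref{p:pmds} by computing the Cartesian tree once and then reducing every path query to a lowest common ancestor query in that tree. In the preprocessing phase we run the algorithm of \cref{p:univ-opt-et} on $(G,p)$ and obtain $T = \Ind(G,p)$. Since $G$ is a tree, $|E(G)| = n-1$, so this takes $\fO(n + \log |\ETs(G)|)$ time. We then build a standard LCA structure on $T$~\cite{HarelTarjan1984} in $\fO(n)$ time, which answers LCA queries in $\fO(1)$ time. A query $\alg{path-min}(u,v)$ simply returns the LCA of $u$ and $v$ in $T$.

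The main step is the claim that this LCA is the minimum-priority vertex on the $u$--$v$ path $P$. We argue via the recursive construction of $T$. Let $w$ be the LCA, and let $H$ be the component of $G$ on which the subtree rooted at $w$ was built. Both $u$ and $v$ lie in $H$, and $H$ is connected, so $P \subseteq H$. The root $w$ has the minimum priority in $H$, and therefore also the minimum priority among the vertices of $P$, provided $w \in P$. To see that $w \in P$, suppose $w \notin P$. Then $P$ lies entirely inside a single component of $H - w$, so $u$ and $v$ lie in the same child subtree of $w$. That contradicts $w$ being their lowest common ancestor. The case $w \in \{u,v\}$ is covered by the same argument, since then $w$ is trivially on $P$.

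The remaining step is to absorb the additive $\fO(n)$ into $\fO(\log|\ETs(G)|)$. We show $\log|\ETs(G)| = \Omega(n)$ for trees with $n \ge 2$, say $|\ETs(G)| \ge 2^{\lfloor n/2\rfloor}$. We induct by removing a leaf $\ell$ with neighbor $x$. Every elimination tree on $G-\ell$ yields an elimination tree on $G$ by attaching $\ell$ as a child of $x$. Alternatively, take a priority order inducing a given elimination tree $T'$ of $G - \{\ell, x\}$'s components' combination and insert $\ell$ and $x$ in the two possible relative orders; placing $\ell$ just before or just after $x$ gives distinct trees, because $\ell$ is either an ancestor of $x$ or a leaf child of $x$. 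Turning this into a clean doubling argument is the delicate part. A simpler route is to take a maximal matching-like decomposition of $G$ into about $n/2$ disjoint edges, or to argue directly that swapping priorities on each such edge independently changes $T$ injectively. I expect this counting lemma to be the only real obstacle, and I would first try the leaf-pair induction.

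Finally, the case $n = 1$ is trivial with $\fO(1)$ preprocessing. A forest input can be handled componentwise, using that $\ETs$ multiplies over components up to the choice of interleavings.
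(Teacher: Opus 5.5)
Your reduction (compute $\Ind(G,p)$, build an LCA structure on it, answer $\alg{path-min}(u,v)$ with the LCA) is exactly the paper's approach. So is your proof that the LCA lies on the $u$--$v$ path and is its minimum.

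The gap is the step you flagged yourself, and it is needed. Computing $\Ind(G,p)$ via \cref{p:univ-opt-et} costs $\fO(|E(G)| + \log|\ETs(G)|)$ with $|E(G)| = n-1$, and the LCA preprocessing costs $\Theta(n)$. So the claimed $\fO(\log|\ETs(G)|)$ bound requires $\log|\ETs(G)| = \Omega(n)$, which your proposal does not establish. Of the routes you list, the matching one cannot work: a star has no two disjoint edges, so there is nothing to ``swap independently''.

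The paper closes this with \cref{p:et-lb-easy}. Every connected graph with at least two vertices has two vertices whose removal leaves it connected; for a tree, any two leaves. So when building a degenerate elimination tree top-down, there are at least two choices for the root at every step but the last. Distinct choice sequences give distinct root-to-leaf sequences, hence $|\ETs(G)| \ge 2^{n-1}$.

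Your leaf induction also works once stated precisely, and you do not need $G - \{\ell, x\}$ at all. Take a leaf $\ell$ with neighbor $x$ and any $T' \in \ETs(G-\ell)$, and form two trees:
\begin{itemize}
\item attach $\ell$ as a leaf child of $x$;
\item insert $\ell$ directly above $x$, taking $x$'s place under its former parent, with $x$ as its only child.
\end{itemize}
Since $\ell$'s only neighbor is $x$, both are elimination trees of $G$. For every ancestor of the insertion point, $\ell$ simply joins the component containing $x$. The two trees differ in whether $\ell$ is above or below $x$, and splicing $\ell$ out recovers $T'$. Hence $|\ETs(G)| \ge 2\,|\ETs(G-\ell)|$, and induction gives $2^{n-1}$.
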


This data structure is not universally optimal in any strong sense. For example, if $G$ is a star, we can compute path minimum queries in constant time, without any preprocessing. On the other hand, we have $\log |\ETs(G)| \in \Theta(n \log n)$ if $G$ is a star. In fact, it is known that the problem can be solved with less than $\fO(n \log^* n)$ preprocessing time and constant query time~\cite{AlonSchieber1987}.

However, note that our data structure does not need to make any comparisons after the preprocessing. We can show universal optimality under this restriction:

\begin{restatable}{theorem}{restatePMDSUOpt}
	The data structure of \cref{p:pmds} has universally optimal preprocessing time if we disallow comparisons in \alg{path-min} queries.
\end{restatable}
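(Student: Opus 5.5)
We show that any data structure answering \alg{path-min} queries on a tree $G$ without comparisons must spend $\Omega(\log |\ETs(G)|)$ comparisons during preprocessing, in the worst case over priority assignments $p$. Together with the upper bound of \cref{p:pmds}, this gives universal optimality, up to the additive $\fO(n)$ term. That term is harmless: $\log|\ETs(G)| \ge n-1$ would suffice, but for a path this fails, since $|\ETs(G)|$ is a Catalan number and its logarithm is only $\Theta(n)$. The clean way to handle it is to note that any preprocessing must at least read the input. Hence the lower bound to prove is $\Omega(n + \log|\ETs(G)|)$, and the upper bound of \cref{p:pmds} is really $\fO(n + \log|\ETs(G)|)$, with $n$ absorbed.

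The core of the lower bound is an information-theoretic argument. Fix $G$ and a comparison-based data structure $D$ whose queries make no comparisons. After preprocessing, the memory state of $D$ depends only on the outcomes of the comparisons made. Since queries perform no comparisons, the answer to every $\alg{path-min}(u,v)$ is a function of this state alone. It remains to show that the collection of all answers $\{\alg{path-min}(u,v)\}_{u,v}$ determines $\Ind(G,p)$ uniquely.

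To see this, recall that $\alg{path-min}(u,v)$ is the LCA of $u$ and $v$ in $T=\Ind(G,p)$. In particular, $\alg{path-min}(u,v)=u$ holds exactly when $u$ is an ancestor of $v$ in $T$. So the answers determine the ancestor relation of $T$, and therefore $T$ itself.

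Every elimination tree $T\in\ETs(G)$ arises as $\Ind(G,p)$ for some $p$: take any linear extension of the ancestor order. Two priority assignments with different Cartesian trees must therefore lead to different final states, and so to different comparison outcome sequences. The decision tree of the preprocessing consequently has at least $|\ETs(G)|$ leaves, so its depth is at least $\log_2|\ETs(G)|$. This gives $\Time \ge \log|\ETs(G)|$ in the worst case over $p$.

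The main subtlety is the model. A query might branch on stored data, but this is not a comparison of priorities, so the argument above is unaffected. The data structure might also try to store the priorities themselves and compute answers from them later, but that would require comparisons at query time, which are disallowed. We therefore formalize the model so that query answers are a function of the preprocessing transcript together with $G$. The real content, that query answers pin down the Cartesian tree, is short.
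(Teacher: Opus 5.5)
Your proof is correct, and its skeleton is the paper's. Both show that the answers of a comparison-free \alg{path-min} structure determine $\Ind(G,p)$. Hence every comparison needed to tell apart the $|\ETs(G)|$ possible Cartesian trees must happen during preprocessing.

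Where you differ is in how the tree is recovered from the answers.
\begin{itemize}
\item The paper simulates the recursive Cartesian-tree construction. To find the minimum of a connected subgraph $H$, it chains queries $v_i = D.\alg{path-min}(u_i, v_{i-1})$ over the leaves $u_1,\dots,u_k$ of $H$; this needs the small fact that these paths together cover $H$. It then recurses on the components left after removing that minimum.
\item You instead use the LCA characterization proved just before the statement, and read the ancestor relation of $T$ directly off the answer table via $\alg{path-min}(u,v)=u$.
\end{itemize}
Your version is non-adaptive and avoids the covering argument, at the cost of relying on the LCA lemma. The paper's version uses only that answers are correct path minima, and it yields an explicit zero-comparison algorithm that builds $\Ind(G,p)$.

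One correction to your opening remark: $\log|\ETs(G)|\ge n-1$ does not fail for paths. The Catalan numbers satisfy $C_{n+1}/C_n = 2(2n+1)/(n+2)\ge 2$, hence $C_n \ge 2^{n-1}$. More generally, \cref{p:et-lb-easy} gives $|\ETs(G)|\ge 2^{n-1}$ for every connected $G$. So no separate $\Omega(n)$ term is needed. That is just as well, because your justification that preprocessing must read the input is shaky when queries may take unbounded time and could defer reading $G$.
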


A generalization of \alg{path-min} queries are \emph{bottleneck vertex queries}~\cite{ShapiraYusterEtAl2011}. Here, the underlying graph $G$ is not necessarily a tree. The \emph{bottleneck} of a path is the minimum priority on that path, and the query $\alg{bottleneck}(u,v)$ returns the maximum bottleneck among all paths between $u$ and $v$ in $G$.
Our data structure can also handle $\alg{bottleneck}$ queries via a simple reduction~\cite{ShapiraYusterEtAl2011}.

\paragraph{Edge-weighted graphs.}

All of our results generalize to a setting where priorities are given for \emph{edges} instead of vertices. The \problem{Decremental Tree Minima} problem and the static \alg{path-min} data structure problem generalize in the obvious way, and bottleneck vertices become bottleneck \emph{edges}, another well-studied concept~\cite{VassilevskaWilliamsEtAl2007APBE,DemaineLandauEtAl2014}.

Elimination trees also can be adapted as follows: An \emph{edge-partition tree} (EPT) on a connected graph $G$ is a rooted binary tree $T$ with $V(T) = E(G) \cup V(G)$, where edges of $G$ correspond to inner nodes of $T$, and vertices of $G$ correspond to leaves of $T$. An EPT is constructed as follows: Pick an arbitrary edge $e$ as the root of $T$, then recursively construct up to two child subtrees on the up to two components of $G \gminus e$.

An edge-priority assignment $p$ again uniquely determines an EPT on the given graph~$G$. These \emph{Cartesian EPTs} have been studied before under several different names~\cite{Chazelle1987,Neto1999,BoseMaheshwariEtAl2004,BrodalDavoodiEtAl2011,DeanMohan2013,DemaineLandauEtAl2014}.

All these edge-weighted variants reduce to the corresponding vertex-weighted problem with the following reduction: Place a vertex ``on'' each edge, i.e., replace each edge with a path of length two. That vertex receives the priority of the original edge, and all original vertices are assigned priority $\infty$. We can then apply our solutions to the resulting graph, with only minor modifications.

Since the number of newly introduced vertices is linear, worst-case optimality is maintained. We can also prove our universal optimality guarantees with some additional work.
In particular, our universally optimal algorithm for computing Cartesian EPTs strictly improves upon early $\fO(n \log n)$ solutions for trees~\cite{Chazelle1987}, as well as the $\fO(n \log \ell)$ algorithm of Dean and Mohan for trees with $\ell$ leaves~\cite{DeanMohan2013}.

\newcommand{\pDSTS}{\problem{Decremental Semigroup Tree Sums}}

\paragraph{Semigroups.} Consider the following variation of the DTM problem, which we call \pDSTS{}.
Instead of priorities, each vertex $v$ is assigned a weight $w(v)$ from some \emph{commutative semigroup} $(S,+,0)$.
Instead of \alg{tree-min} queries, we need to answer queries $\alg{tree-sum}(v)$, which return the sum of the weights of all vertices in the same component as $v$.
Since the set of priorities under taking minima forms
a commutative semigroup, this is a strict generalization of the DTM problem.
It turns out that our data structure can be generalized to solve the this problem as well.

\begin{restatable}{theorem}{restateDSTSMain}\label{p:dsts-main}
	There is a universally optimal data structure for the \pDSTS{} problem.
\end{restatable}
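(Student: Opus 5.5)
The plan is to take the universally optimal DTM data structure of \cref{p:dtm-main} and check that its upper-bound analysis and its lower-bound argument both still work once minima are replaced by semigroup sums. Concretely, there are two halves: show that the same algorithm, run over the semigroup, costs no more than $\fO(\Time_{\mathrm{DTM}}(G,m))$; and show that any data structure for \pDSTS{} on $(G,m)$ needs $\Omega(\Time_{\mathrm{DTM}}(G,m))$ time. Here $\Time_{\mathrm{DTM}}(G,m)$ is the universally optimal DTM bound. The lower bound is immediate. The minimum over a totally ordered set is a commutative semigroup, so every DTM instance is a \pDSTS{} instance. Hence any \pDSTS{} structure $D'$ yields a DTM structure with the same running time on every $(G,p,\sigma)$. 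By \cref{p:dtm-main}, our DTM bound is at most $c\cdot \Time(D',G,m)$.

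For the upper bound I would follow the two building blocks of the DTM structure. The first is the path decomposition together with the $\fO(1)$-time path structure. The second is the splay-tree component that handles the sorting-like interaction between paths. Each must be replaced by a semigroup-sum analogue with the same cost.

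For the path part, range-minimum structures rely on idempotence and comparisons. I would replace them with a decremental interval-sum structure on a path. Deletions split an interval, and a query asks for the sum of a maximal interval. For this I would use the standard block decomposition into blocks of size $\Theta(\log n)$. Each block gets prefix and suffix sums; these are recomputed on a split, but only inside one block, which is amortized against the fact that splits are permanent. Above the blocks I keep a sparse-table-free structure of block sums, in which a component spanning many blocks stores its middle sum explicitly. Each split recomputes sums for the smaller side only, in the style of ``split the smaller half'', so that each element is charged $\fO(\log)$ times. This has to be tuned so that the total cost stays $\fO(n+m)$ for a path.

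For the cross-path part, the DTM structure presumably stores, for each tree, the minima of its constituent path pieces in a splay tree and charges the work to a sorting-type lower bound. In the semigroup version I would store the piece sums in a splay tree augmented with subtree sums, so that subtree sums are maintained under rotations with $\fO(1)$ semigroup additions per rotation. The total amount of restructuring is then unchanged, and the cost remains the one bounded in the DTM analysis.

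The main obstacle is the path case. Without idempotence and inverses, $\fO(1)$-per-operation decremental sums on a path are not obvious, and an inverse-free splitting scheme might cost an extra $\log^*$ or inverse-Ackermann factor. I would first try the smaller-half recomputation argument, which gives $\fO(n\log n)$ on a path and is too weak. I would then refine it with the micro/macro block decomposition so that only $\fO(n/\log n)$ macro elements are recomputed. If the path case truly requires more than linear time, the fallback is to argue that the matching lower bound for semigroups on a path is also larger, and that the decomposition charges correctly against it. That would need a semigroup-specific lower bound rather than the simple reduction above.
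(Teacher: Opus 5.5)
Your lower-bound reduction is the paper's: taking minima is a commutative semigroup, so \pDSTS{} contains \pDTM{}, and the bound $\Omega(H_m(T)+m+n)$ transfers. You are also right that the splay-tree leaf structure only needs subtree sums in place of subtree minima. The same holds for the dynamic forest $D$, which you do not mention. It must likewise aggregate sums, with the $\infty$ placeholders for super-roots and super-leaves replaced by $0$. The genuine gap is the step you flag yourself and leave open: a decremental semigroup-sum structure on a path with $\fO(n+m)$ total time. Every chain of the decomposition relies on it, so without it there is no matching upper bound. Your fallback cannot close the gap either. On a path the available lower bound is only $\Omega(n+m)$. A superlinear semigroup-specific lower bound for the decremental path problem cannot exist, because linear time is achievable.

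Your concrete suggestions do not reach linear time. Recomputing ``only the smaller side'' after a split tacitly uses subtraction: without inverses, the sum of the larger side cannot be obtained from the old total and the smaller side's sum. Recomputing the prefix and suffix sums of a new piece inside a block costs time proportional to the piece. Peeling single elements alternately off the two ends of a block forces such a recomputation on every split. That gives $\Theta(b^2)$ per block of size $b=\Theta(\log n)$, and $\Theta(n\log n)$ overall; permanence of splits bounds their number, not their cost.

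The missing idea comes from Weiss's linear-time Cartesian-tree construction together with Gajewska and Tarjan's deque with heap order. First, precompute bottom-up, in $\fO(n)$ time, the sums of all dyadic intervals $I_{p,q}$. Then represent every current subpath by its minimal cover by dyadic intervals. Store this cover in a deque implemented as two stacks, where each entry holds the sum of itself and everything below it. All stored aggregates are then formed by additions only, and the sum of a subpath is the sum of the two stack aggregates. Weiss's procedure tolerates arbitrary edge deletions, not only those next to the current minimum. Its amortized analysis, together with that of the two-stack deque, yields $\fO(1)$ amortized time per operation.
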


It should be noted that the semigroup itself is neither given as part of the input nor is it fixed; rather, data structures for \pDSTS{} are expected to work with arbitrary semigroups. That is, data structures may use the $+$ operation and the neutral element $0$, and may assume that associativity, etc., hold. This is sometimes called the \emph{semigroup model}~\cite{Yao1982a}.

In the \emph{group model}, where subtraction is possible, our data structure still works, but is no longer universally optimal. Indeed, computing group $\alg{tree-sum}$s on a star is possible in linear total time: Simply subtract $w(v)$ from the total weight whenever deleting an edge to a leaf $v$.

\paragraph{Incremental and offline variants.}
Finally, let us briefly consider the \emph{incremental} tree minima or tree semigroup sum problem. Adding an edge merges two trees into one, and the new minimum (or sum) can be computed in constant time. Thus, the problem boils down to finding the component containing a query vertex. This is the well-known \emph{union-find} problem, which can be solved in almost-linear time\footnote{In this paper, \emph{almost-constant} means one of the variants of the inverse Ackermann function, and \emph{almost-linear} means linear times almost-constant.}~\cite{Tarjan1975}. The offline variant of the incremental problem, where all operations are given in advance, can even be solved in linear time~\cite{GabowTarjan1983UnionFind}. Since the offline incremental and offline decremental variants are equivalent, offline \pDTM{} can also be solved in linear time.

\subsection{Techniques}\label{sec:tech}

In this section, we briefly present the main ingredients of our universally optimal data structure for \pDTM{} and the associated lower bound. For simplicity, we only assume the initial forest is a tree, the number of operations is linear in the number of vertices, and every edge is deleted eventually.

\begin{figure}
	\includegraphics[width=\textwidth]{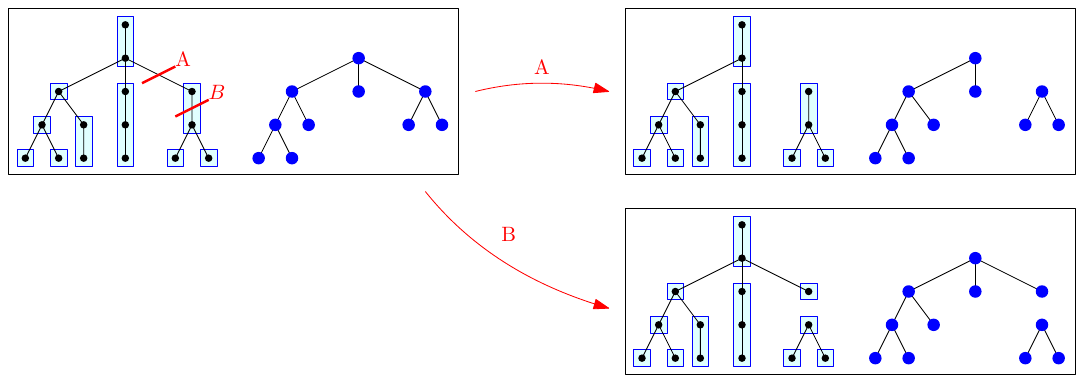}
	\caption{A tree partitioned into chains and the corresponding compressed tree (top left), and the effect an edge deletion between or within super-nodes (right). Chains are light blue rectangles, and super-nodes are blue.%
	}
	\label{fig:compr-tree-intro}
\end{figure}

\paragraph{Dynamic trees with chain compression.}
As mentioned above, classical dynamic tree data structures achieve a running time of $\fO( \log n )$ per operation, but for \emph{paths}, we can achieve $\fO(1)$ time via range minimum queries. The first idea is to combine these two data structures. To do this, we partition the initial tree into induced paths (called \emph{chains}), as shown in \cref{fig:compr-tree-intro}. For each of the chains, we maintain the constant-time DTM data structure. We also maintain the tree obtained by contracting each chain into a single node (called a \emph{super-node}), using a logarithmic-time DTM data structure $D$ based on dynamic forests.

The priority of each super-node is always the minimum of the associated chain; thus, a tree minimum can always be found with $D.\alg{tree-min}$. An edge deletion may affect up to one of the constant-time DTM data structures, and either deletes an edge in $D$ (see \cref{fig:compr-tree-intro}, top right) or ``splits'' a super-node in $D$ (see \cref{fig:compr-tree-intro}, bottom right). The latter operation is not standard in dynamic forests, but easy to implement.

Observe that when the initial tree has $\ell$ leaves, then it can be partitioned into $\fO(\ell)$ chains. Thus, the compressed tree stored in $D$ initially has $\fO(\ell)$ super-nodes. While ``splitting'' can create new super-nodes, it can be shown that the number of super-nodes in a single tree in $D$ cannot ever be more than $\fO(\ell)$. This implies that every dynamic tree operation takes $\fO(\log \ell)$ time. Since the operations on the chain data structures only take constant time, the overall running time for $\fO(n)$ operations is $\fO( n \log \ell )$. This is strictly better than $\Theta(n \log n)$ for some initial trees, but not universally optimal.

\paragraph{Lower bound.}
Before refining the data structure above, we describe our lower bound.
Recall that if the initial tree is a star, then we can sort the leaf priorities by repeatedly deleting the (edge to the) minimum-priority leaf. Actually, for any initial tree, we can sort its leaves this way, which implies a general $\Omega( \ell \log \ell )$ lower bound.
Thus, for trees with $\Theta(n)$ leaves, our previous algorithm is optimal. However, for some trees, neither the $\fO( n \log \ell )$ upper bound nor the $\Omega( \ell \log \ell )$ lower bound is tight.

A tight lower bound can be obtained as follows. Make the initial tree $T$ rooted by choosing an arbitrary root.
Suppose that for each node $v$ in $T$, each ancestor of $v$ has larger priority. This means that \alg{tree-min} will always return a leaf. Thus, we can again sort the nodes in $T$ by priority by repeatedly deleting the minimum-priority leaf. This is not \emph{general} sorting, since the set of possible priority assignments is restricted by the condition above. Instead, we get a special case of the aforementioned \emph{partial order sorting} problem. Taking $T$ as the Hasse diagram of a partial order $P$, then the possible orderings respecting the above condition is the set $\LEs(P)$ of linear extensions of $P$. By the information-theoretic lower bound, we need $\log |\LEs(P)|$ comparisons to sort, so we have an $\Omega( \log |\LEs(P)| )$ lower bound for $2n$ DTM operations on $T$.

\paragraph{Matching the lower bound with splay trees.}

Now, if we continue assuming that the minimum is always a leaf, then we can match the $\Omega(\log |\LEs(P)|)$ bound using known results on partial order sorting. In particular the optimal algorithm of Haeupler et al.~\cite{HaeuplerHladikEtAl2024c}, in our language, maintains the set of leaves in a special fast priority queue. It repeatedly extracts the minimum leaf from the priority queue, removes it from the tree, and then, if this makes another vertex a leaf, inserts that vertex into the priority queue.

For our data structure, we follow the same basic framework. However, we cannot use the same priority queue implementation: In the final data structure, general edge deletions will be allowed, which sometimes require \emph{splitting} the set of leaves in a certain way. To support this, we instead use a \emph{splay tree}~\cite{SleatorTarjan1985} that maintains the leaf set in left-to-right order, and supports finding the minimum as well as arbitrary splits.\footnote{A negative result for a related problem~\cite{HaeuplerHlad'ikEtAl2024} may seem to imply that splay trees are unsuitable for such a task; however, note that they only consider splay trees which keep nodes sorted by priority, which we avoid.} The analysis is very different.

\paragraph{Putting things together.}

We now go back to the original algorithm using dynamic forests and chain compressions. We modify it as follows. The dynamic tree $D$ is now no longer responsible for maintaining the priorities of the root and leaf super-nodes of the compressed forest; their priorities are permanently set to $\infty$ in $D$. The leaf super-nodes of each tree are stored in a splay tree, and the root super-node priority is maintained explicitly for each tree. %
We can now find the minimum of a tree by checking (1) the root super-node, (2) the corresponding tree in $D$, and (3) the leaf splay tree.

The running time analysis very roughly goes as follows. We always explicitly update tree minima when deleting edges, so \alg{tree-min} takes constant time. For edge deletions, we distinguish between three cases:
When we delete an edge within a root super-node, only that super-node needs to be updated, in constant time.
When we delete an edge within a leaf super-node, only the corresponding splay tree needs to be updated, in optimal amortized time by the above discussion.

The only remaining case are edge deletions within other super-nodes, or between super-nodes, which take (essentially) $\fO( \log \ell )$ amortized time to update $D$ and to split a splay tree. However, it can be shown that such edge deletions only happen $\fO( \ell )$ times overall. Thus, the total time for these edge deletions is $\fO( \ell \log \ell)$, which is accounted for by the basic $\Omega( \ell \log \ell )$ leaf sorting lower bound.

\paragraph{Semigroups.}
Most parts of our data structure generalize to the semigroup model without change. The only exception is the path-specific DTM implementation that uses range minimum queries. In fact, for the corresponding \emph{semigroup range query} problem, a super-linear lower bound is known~\cite{Yao1982a}.%

However, with a different data structure based on previous work by Gajewska and Tarjan~\cite{GajewskaTarjan1986} and Weiss~\cite{Weiss1994}, we can still solve the \pDSTS{} problem in linear time on paths. With this, we can generalize our main data structure to semigroup sums.

\paragraph{Cartesian trees and path-minimum queries.} We discussed above how to compute Cartesian trees using a DTM data structure. To prove this algorithm is universally optimal, we give an estimate of the information-theoretic lower bound, which again is based on partial order sorting.

For path-minimum queries, recall that we claimed that using a Cartesian tree together with an LCA data structure has universally optimal preprocessing time if no comparisons are allowed in \alg{path-min} queries. To prove this, we show that a Cartesian tree can be constructed using only \alg{path-min} queries. Although the number of queries needed is quite large, this shows that any \pPM{} data structure can be used to compute an elimination tree. If that data structure performs no comparisons in \alg{path-min} queries, all comparisons necessary to compute the elimination tree must occur in preprocessing, and thus our information-theoretic lower bound applies.

\subsection{Conclusions}

We defined universal optimality for data structures, and gave a universally optimal data structure for \pDTM{}.
Our definition could be applied to other data structure problems, not necessarily restricted to dynamic weighted graphs. In particular, any problem where both ``easy'' special cases and tight worst-case bounds are known could be a good candidate for universal optimality.
Our techniques for \pDTM{} are robust enough to allow replacing vertex weights with edge weights, and generalize tree minima queries to arbitrary semigroup tree sum queries.

We applied our data structures to obtain a universally optimal algorithm to compute Cartesian trees. An interesting generalization is \emph{weighted centroid trees}. Here, a vertex-weighted tree $(G,w)$ is given, and we need to construct the elimination tree on $G$ where the root of every subtree is the \emph{centroid} of the respective subgraph $H$ of $G$. A vertex $v$ is a centroid of $H$ if removing $v$ splits $H$ into components whose weight is at most half the total weight of $H$~\cite{KarivHakimi1979a,BerendsohnEtAl2023a}.

The special case where all weights are distinct powers of two is equivalent to Cartesian trees; thus, our lower bounds hold. However, our approach does not work for centroid trees; essentially, the problem of computing centroids does not decompose as nicely as the problem of computing minima. We leave open the question whether there exists a universally optimal algorithm for computing weighted centroid trees.

We used Cartesian trees to give a \emph{static} data structure for \pPM{}, with a restricted universal optimality guarantee, which holds only if all \alg{path-min} queries perform zero comparisons. If comparisons in \alg{path-min} queries are allowed, an inverse-Ackermann tradeoff between preprocessing and query time is possible and optimal~\cite{Chazelle1987,Pettie2006}. Since \pPM{} on \emph{paths} (i.e., the \emph{range minimum queries} problem) can be solved with linear preprocessing and constant query time, this is another candidate for universal optimality.

As a final note, observe this paper and several recent results~\cite{HaeuplerHlad'ikEtAl2024,HaeuplerHladikEtAl2025a,Berendsohn2025} use \emph{adaptive data structures} (splay trees here and working-set heaps in the other references). In this context, adaptivity means that these data structures are able to exploit certain \emph{sequential patterns} in access sequences. These results suggest that they are also useful to exploit the input graph structure, i.e., the \emph{topology} of an input.

\paragraph{Organization of this paper.}
In the remainder of this introduction, we discuss related work. In \cref{sec:prelims}, we fix some notation and define our problems and results more formally.
\Cref{sec:lb,sec:ds,sec:short-seqs} are devoted to the proof of our main result. We start with a simplified  lower bound for the special case of $\Theta(n)$ operations (\cref{sec:lb}) and then give a data structure that matches it (\cref{sec:ds}). Then, we refine the lower bound and running time analysis to show optimality for an arbitrary number of operations (\cref{sec:short-seqs}).
In \cref{sec:ct-pm}, we describe our two applications (the \pCT{} and \pPM{} problems). Finally, we discuss variants of our problems with semigroup weights (\cref{sec:semigroups}) and edge weights (\cref{sec:edge-weights}).
The appendix contains some omitted proofs.

\subsection{Related work}\label{sec:related}

In this section, we briefly review relevant literature that was not mentioned above.

\paragraph{Dynamic and semi-dynamic graph data structures.}

A \emph{dynamic graph data structure} maintains a changing graph, e.g., under edge insertions and deletions, while maintaining certain properties of the graph or allowing certain queries.

The special case of \emph{forests} is also well-studied, from the early \emph{link-cut trees} of Sleator and Tarjan~\cite{SleatorTarjan1983,SleatorTarjan1985}, to the more recent \emph{top trees} of Alstrup, Holm, De Lichtenberg, and Thorup~\cite{AlstrupEtAl2005,TarjanWerneck2005,HolmRotenbergEtAl2018}.
The latter provide a very flexible framework to support a variety of queries (including \alg{tree-min}) under edge insertions and deletions, in $\fO(\log n)$ time per operation. An earlier data structure for \alg{tree-min} queries in forests, based on link-cut trees, was given by Goldberg, Grigoriadis, and Tarjan~\cite{GoldbergGrigoriadisEtAl1991} for use in a maximum flow algorithm.

Semi-dynamic problems, either \emph{decremental} (only edge deletions) or \emph{incremental} (only edge insertions) often allow for better running times or simpler algorithms~\cite{Henzinger2018}. In the special case of forests, the running time per operation can drop from the usual $\Theta(\log n)$ to constant or almost-constant. For example, fully-dynamic connectivity (where we need to answer whether two vertices are connected or not) requires $\Omega(\log n)$ amortized time per operation on forests~\cite{PatrascuDemaine2006}. In contrast, the decremental variant requires only $\fO(1)$ time per operation~\cite{AlstrupSecherEtAl1997}, and the incremental variant can be solved with almost-constant (inverse-Ackermann) time per operation via the classical union-find data structure~\cite{Tarjan1975}.

An interesting problem that is closely related to \pDTM{} is the \emph{list splitting} problem~\cite{Gabow1985}. While not originally formulated in the language of graph theory, it is equivalent to DTM on \emph{collections of paths}, with an additional operation that allows decreasing the priority of a vertex. List splitting has several applications~\cite{Pettie2006}, and the best known data structure has almost-constant amortized running time per operation~\cite{PettieRamachandran2005}.
When we drop the priority-decreasing operation, we can solve the problem with linear preprocessing and constant amortized time per operation, using the aforementioned range minimum query data structure.

In another related paper, Brodal, Davoodi, and Rao~\cite{BrodalDavoodiEtAl2011} design data structures for \alg{path-min} queries (and their generalization to semigroups) in a dynamic tree, for a comparably limited set of graph-editing operations.

\paragraph{Cartesian trees and path minimum queries.}

Cartesian trees have been first studied in the special case where $G$ is a path~\cite{Vuillemin1980}. Recall that elimination trees on paths can be seen as binary trees on a sequence. The name \emph{Cartesian tree} comes from an interpretation of vertices as points in the plane, where index and priority are x- and y-coordinate, respectively.

Notable applications of Cartesian trees (on paths) are lowest common ancestor queries in rooted trees~\cite{BenderFarach-Colton2000}, as well as range minimum queries and their higher-dimensional equivalents~\cite{GabowBentleyEtAl1984,FischerHeun2006}.
Further, \emph{treaps}~\cite{AragonSeidel1989}, a well-known dynamic binary search tree data structure, are simply Cartesian trees with random priorities.

In the introduction, we defined generalizations of Cartesian trees to vertex-weighted and edge-weighted graphs. In particular the edge-weighted version (\emph{Cartesian EPTs}) restricted to \emph{trees} has been rediscovered several times, mainly for use in \alg{path-min} or \alg{bottleneck} query data structures~\cite{Chazelle1987,BoseMaheshwariEtAl2004,DemaineLandauEtAl2014}. There are further applications to clustering~\cite{Neto1999}, and the problem of computing Cartesian EPTs has been studied on its own~\cite{DeanMohan2013,DemaineLandauEtAl2014}. Specifically \alg{path-min} queries play an important role in the minimum spanning tree verification problem~\cite{Komlos1985,DixonRauchEtAl1992,King1995,Pettie2006}, and a generalization to semigroups has been studied~\cite{Tarjan1979,AlonSchieber1987}

The (vertex-weighted) \pCT{} problem on \emph{general graphs} is relevant to matrix factorization algorithms, and has been studied in the setting where the vertices are pre-sorted by priority~\cite{Liu1986,ZhuMutchler1994}.

\paragraph{Elimination trees and vertex search.}

Our vehicle to generalize Cartesian trees on paths to arbitrary graphs are \emph{elimination trees}, which have been intensely studied in their own right. They have been have been first defined in the study of matrix factorization~\cite{Liu1990}, and have strong connections to the concepts of \emph{tree-depth}~\cite{SanchezVillaamil2017} and \emph{tree-width}~\cite{BodlaenderEtAl1992}.\footnote{Note that there are multiple slightly different structures called \emph{elimination tree} in these contexts.}

Another, more recent application concerns a certain search model in trees. Given a tree, the task is to identify a vertex $v$ by \emph{querying} vertices. A query at $u$ reveals whether $u = v$, and, if $u \neq v$, reports the first edge on the path from $u$ to $v$. Here, an elimination tree serves as a data structure for tree search, similarly to a binary search tree for searching in sorted sequences.~\cite{DereniowskiNadolski2006,OnakParys2006,DereniowskiEtAl2017a,BoseCardinalEtAl2020,HoegemoBergougnouxEtAl2021,BerendsohnKozma2022a,BerendsohnEtAl2023a,Berendsohn2024b,DereniowskiWrosz2024,SadehKaplanEtAl2025}

In combinatorics, elimination trees define a class of polytopes called \emph{graph associahedra}~\cite{CarrDevadoss2004,Devadoss2009}, whose properties have been studied intensely in recent years~\cite{CarrDevadoss2004,Devadoss2009,CardinalEtAl2018b,CardinalEtAl2021a,CardinalPourninEtAl2021a,Berendsohn2022,ItoKakimuraEtAl2023,CardinalPourninEtAl2024,CunhaSauEtAl2025}.

\paragraph{Further results related to universal optimality.}
There are several more results that predate the term \emph{universal optimality}, but nonetheless fall under that definition in some sense. The \emph{partial order sorting} problem mentioned above had attracted some attention before, and several algorithms exist~\cite{Kislitsyn1968,Fredman1976,Linial1984,KahnKim1992,BrightwellFelsnerEtAl1995,CardinalFioriniEtAl2010,vanderHoogRutschmann2024a} that have worse running times, but nonetheless use an universally optimal \emph{number of comparisons}.
Also, this problem has recently been generalized from partial orders to so-called \emph{antimatroids}~\cite{Berendsohn2025}.

A somewhat related problem is \emph{local sorting}, where an undirected vertex-weighted graph is given, and each edge needs to be oriented towards the heavier endpoint. Goddard, King, and Schulman~\cite{GoddardKingEtAl1990} give a randomized algorithm that uses a universally optimal number of comparisons. Note that this problem can always be solved with one comparison per edge, and the interesting case is where fewer comparisons are needed. The \emph{running time} alone is not interesting in the context of universal optimality, since, trivially, it also is at least linear in the number of edges.

\section{Preliminaries}\label{sec:prelims}

In the following, we fix some notation and give basic graph-theoretic definitions. In \cref{sec:pre:dtm,sec:pre:comp-labeling,sec:pre:pm}, we define relevant data structure problems and make some observations.

All graphs in this paper are undirected. For a graph $G$, we let $V(G)$ denote the set of vertices and $E(G)$ the set of edges. If $e \in E(G)$, then $G \gminus e$ is the graph obtained by removing $e$. If $v \in V(G)$, then $G \gminus v$ is the graph obtained by removing $v$ and its incident edges.

\paragraph{Rooted forests.}
Internally, our algorithms and data structures usually operate on \emph{rooted} forests. We use the following notation for a rooted forest $F$. The sets of nodes is $V(F)$, the set of edges is $E(F)$, and we write $|F| = |V(F)|$ for short.
For each node $v \in V(F)$, we denote by $F_v$ the subtree rooted at $v$, i.e., the subtree induced by $v$ and all its descendants.

\begin{figure}
	\centering
	\includegraphics{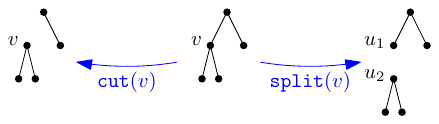}
	\caption{\alg{cut} and \alg{split} in a rooted tree.}\label{fig:cut-split}
\end{figure}

We define two operations that modify rooted forests; one for basic edge deletion, and one for internal use. \Cref{fig:cut-split} shows examples.

\begin{definition}\label{def:cut-split}
	Let $F$ be a rooted forest, and let $v \in V(F)$
	The operation $\alg{cut}(v)$ deletes the edge between $v$ and its parent, assuming that $v$ is not a root.
	The operation $\alg{split}(v)$ replaces $v$ by two new nodes $u_1, u_2$, such that $u_1$ inherits the parent of $v$, and $u_2$ inherits the children of $v$. The tuple $(u_1, u_2)$ is then returned by \alg{split}.
\end{definition}

Observe that \alg{split}, when applied to a leaf or root, does nothing but create
a new isolated node. This case will be easy to handle by our data structures, which makes the following lemma useful:

\begin{lemma}\label{p:num-cuts-splits}
	Let $F$ be a forest with $n$ nodes, and consider any valid sequence of \alg{cut}s and \alg{split}s applied to $F$. Then, the sequence contains no more than $n-1$ \alg{cut}s, and at most $n-2$ \alg{split}s are applied to non-root non-leaf nodes.
\end{lemma}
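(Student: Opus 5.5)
Both bounds come from potential arguments on the number of edges, nodes and components of the current forest.

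For \alg{cut}s: every \alg{cut} removes exactly one edge and creates none. Every \alg{split} preserves the edge count, since $u_1$ takes over the parent edge of $v$, $u_2$ takes over the child edges, and no edge is added between $u_1$ and $u_2$. So the edge count is nonincreasing under \alg{split} and drops by one per \alg{cut}. It starts at $|E(F)| \le n-1$ and stays nonnegative, hence there are at most $n-1$ \alg{cut}s.

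For \alg{split}s at non-root non-leaf nodes, I would use the potential $\Phi = $ (number of edges) $-$ (number of non-isolated components), i.e.\ the number of edges minus the number of trees with at least one edge. Consider a split of $v$ where $v$ has a parent and at least one child. Then $u_1$ has the parent edge and $u_2$ keeps at least one child edge, so both resulting trees contain an edge. The number of non-isolated components therefore rises by one while the edge count stays the same, so $\Phi$ drops by exactly one.

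The other operations must not increase $\Phi$.
\begin{itemize}
\item A split at a root or leaf only creates an isolated node (and possibly leaves an isolated node behind, if $v$ was isolated). Edges and non-isolated components are unchanged.
\item A \alg{cut} removes one edge and increases the number of non-isolated components by at most one. If both parts keep an edge, $\Phi$ decreases by $2$; if one part becomes isolated, it decreases by $1$; if the cut tree had only one edge, $\Phi$ is unchanged, since edges and components each drop by one. In every case $\Phi$ does not increase.
\end{itemize}

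Initially, a forest with $n$ nodes has edges $= n - c$, where $c$ is its number of components. So $\Phi_0 = n - c - c'$, where $c' \ge 1$ counts the non-isolated components, assuming there is at least one edge. This gives $\Phi_0 \le n-2$; if there are no edges, no non-root non-leaf node exists at all.

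The key step is the lower bound on $\Phi$. Every tree with $k \ge 1$ edges contributes $k-1 \ge 0$, so $\Phi \ge 0$ always. Since each relevant split lowers $\Phi$ by one and nothing raises it, there are at most $\Phi_0 \le n-2$ such splits.

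The main care is in the boundary cases: a cut that isolates nodes, and a cut of the last edge of a tree, which leaves $\Phi$ unchanged. Both are settled by the case check above.
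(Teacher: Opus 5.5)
Your proof is correct. The first bound is argued exactly as in the paper, but for the second bound you use a different potential.

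The paper counts the non-root non-leaf nodes directly. There are at most $n-2$ initially, and each relevant \alg{split} turns one of them into a root and a leaf. The paper leaves implicit that no other operation creates such a node; this holds because a \alg{cut} only turns nodes into roots or leaves, and a root/leaf \alg{split} only adds an isolated node.

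You instead use $\Phi = |E| - c'$, which equals $\sum (k_T - 1)$ over the trees with $k_T \ge 1$ edges. It therefore upper-bounds the number of non-root non-leaf nodes. This costs you a case analysis for \alg{cut}, where $\Phi$ drops by $0$, $1$ or $2$, but it makes the monotonicity under every operation explicit. It also depends only on the unrooted shape of the forest.

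The paper's potential is more direct and gives a sharper count: the initial number of non-root non-leaf nodes, which can be far below $n-2$ (it is $0$ for a star rooted at its center). Yours always yields $n - c - c'$. For the later application, where the lemma bounds the number of expensive operations on the compressed forest by $\fO(\ell)$, either bound suffices.
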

\begin{proof}
	Observe that no new edges are ever introduced.
	Since each \alg{cut} removes one of the up to $n-1$ edges, the first bound is immediate.
	For the second bound, observe that the number of non-root-non-leaf nodes is at most $n-2$ at the start. Each $\alg{split}(v)$ on a non-root non-leaf node $v$ replaces $v$ by a root and a leaf. Thus, there are at most $n-2$ such \alg{split}s.
\end{proof}

\paragraph{Priorities.}
In this paper, $\Prios$ denotes some totally ordered universe of \emph{priorities}. We assume priorities can be compared in constant time, but not accessed otherwise (see below). Our algorithms may internally use a special priority $\infty$ that is larger than all other priorities given in the input. A \emph{priority function} maps a set of elements (usually the vertices of a graph) to \emph{distinct} priorities.

\paragraph{Machine model.} We assume the standard Word RAM model with logarithmic word size and the usual modification for comparison-based algorithms, specified in the following. Priorities are not given explicitly. Instead, our algorithms have access to an oracle that answers queries of the form $p(v) < p(u)$, where $u$ and $v$ are vertices in the input graph (represented as integers), and $p$ denotes the input priority function.
Whenever an algorithm ``stores'' a priority somewhere, it really stores the \emph{node} with that priority. The special priority $\infty$ is stored as a special integer, and comparisons with it are resolved without calling the oracle.

\subsection{Decremental Tree Minima}\label{sec:pre:dtm}

For the rest of the paper, we define our central problem \problem{Decremental Tree Minima} as follows, using \emph{rooted forests} and the \alg{cut} operation defined above:

\begin{definition}\label{def:dtm}
	A data structure for \problem{Decremental Tree Minima} (DTM) maintains a rooted forest $F$ with a fixed priority function $p \colon V(F) \rightarrow \Prios$ under
	the following operations:
	\begin{itemize}
		\item $\alg{tree-min}(v) \rightarrow u$ returns the vertex $u \in V(G)$ with minimum priority $p(u)$ that is in the same tree as $v$.
		\item $\alg{cut}(v)$ removes the edge from $v$ to its parent (as in \cref{def:cut-split}).
	\end{itemize}
\end{definition}

It is easy to see that the unrooted variant is equivalent.
In the remainder of the paper, we will always assume that the initial forest $F^0$ is a \emph{tree}. If $F^0$ consists of several trees, we can build a separate data structure on each tree.\footnote{We can find the tree containing a query vertex with an auxiliary data structure described in \cref{sec:comp-labeling}.}
Our lower bound, due to its information-theoretic nature, is easily generalized to arbitrary initial forests. Thus, all of our optimality results carry over to the general forest case.

\subsection{Decremental Tree Roots}\label{sec:comp-labeling}\label{sec:pre:comp-labeling}\label{sec:pre:roots}

In this section, we describe a data structure for rooted forests that maintains a mapping from each node $v$ to the root of the tree containing $v$. This will be useful if we want to store some information about a \emph{tree} (like the minimum-priority vertex), since we can store that information in the root.

\begin{definition}
	A data structure for the \pDTR{} problem maintains a rooted forest $F$ under
	the following operations:
	\begin{itemize}
		\item $\alg{root}(v) \rightarrow r$ returns the root $r$ of the tree containing $v$:
		\item $\alg{cut}(v)$ removes the edge from $v$ to its parent (as in \cref{def:cut-split})
		\item $\alg{split}(v) \rightarrow (u_1, u_2)$ splits $v$ into two new nodes $u_1, u_2$ (see \cref{def:cut-split}).
	\end{itemize}
\end{definition}

It was observed before~\cite{AlstrupHusfeldtEtAl1998a,DemaineLandauEtAl2014} that a data structure by Alstrup, Secher, and Spork~\cite{AlstrupSecherEtAl1997} can be adapted to partially solve this problem with linear preprocessing time and constant time per operation. This data structure supports the \alg{root} and \alg{cut} operations.
Below, we show how to also support the \alg{split} operation.

\begin{lemma}
	There is a data structure for \pDTR{} that performs $m$ operations with an initial forest on $n$ nodes in $\fO(m+n)$ time.
\end{lemma}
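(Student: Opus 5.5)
We reduce \alg{split} to operations the Alstrup--Secher--Spork structure already supports, by simulating splits inside a fixed, larger initial forest. Before any operations, transform the initial forest $F$ into $F'$: for every node $v$ of $F$, create two nodes $v^{\mathrm{top}}$ and $v^{\mathrm{bot}}$ joined by an edge, with $v^{\mathrm{bot}}$ the child of $v^{\mathrm{top}}$. The edge from $v$ to its parent $u$ in $F$ becomes an edge from $v^{\mathrm{top}}$ to $u^{\mathrm{bot}}$. Thus $F'$ has $2n$ nodes and is built in $\fO(n)$ time, and we run the known structure for \alg{root} and \alg{cut} on it, with $\fO(n)$ preprocessing and $\fO(1)$ time per operation.

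Throughout, we maintain a correspondence between nodes of the current forest $F$ and connected pieces of $F'$, where each piece is either a pair $\{v^{\mathrm{top}}, v^{\mathrm{bot}}\}$ or a single top or bottom node. The operations are simulated as follows.
\begin{itemize}
\item $\alg{cut}(x)$ in $F$ becomes $\alg{cut}$ of the topmost $F'$-node representing $x$, which is $v^{\mathrm{top}}$ or a lone top node.
\item $\alg{split}(x)$ for $x$ represented by the pair $(v^{\mathrm{top}}, v^{\mathrm{bot}})$ becomes $\alg{cut}(v^{\mathrm{bot}})$ in $F'$. It returns $u_1$ as the new handle for $v^{\mathrm{top}}$ and $u_2$ as the new handle for $v^{\mathrm{bot}}$.
\item $\alg{root}(x)$ calls $\alg{root}$ in $F'$ on any representative of $x$, and returns the $F$-node associated with the resulting $F'$-root.
\end{itemize}
The last step is valid because every $F'$-root is a top node, or a lone bottom node created by a split, and in either case it represents a root of $F$.

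The main obstacle is that a node may be split repeatedly, while each $v$ has only one internal edge to cut. We resolve this using the observation before \cref{p:num-cuts-splits}. A node created by a split is always a root (like $u_2$) or a leaf (like $u_1$), possibly becoming more so after further cuts. After one split, both resulting nodes are represented by a single $F'$-node. Any further $\alg{split}$ on such a node is a split of a root or leaf, which only creates a new isolated node. We handle these without touching $F'$:
\begin{itemize}
\item if the node is a leaf, its representative $F'$-node keeps the parent, and the other new node is a fresh isolated node;
\item if it is a root, the representative keeps the children, and a fresh isolated node is created for the parent side.
\end{itemize}
Fresh isolated nodes are kept outside $F'$ with a flag, and $\alg{root}$ of such a node returns the node itself. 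More generally, a split of any current root or leaf, including an original one, is handled this way. A split of an inner node, by contrast, must still be a pair whose internal edge is uncut. This holds because single top or bottom nodes arise only from earlier splits and hence remain roots or leaves. To make the correspondence explicit, we keep a constant-size record per current node, consisting of pointers to its top and/or bottom representative, and a back-pointer from each $F'$-node to its current $F$-node. Each operation updates $\fO(1)$ records.

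Finally, we bound the total work. Operations on $F'$ number at most $m$, each taking $\fO(1)$ time, and preprocessing of $F'$ is $\fO(n)$. Every operation does $\fO(1)$ extra bookkeeping. Hence the total running time is $\fO(m+n)$.
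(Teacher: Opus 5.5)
Your proof is correct and follows essentially the same route as the paper. Both subdivide every original node into a parent-side/child-side pair joined by an edge, run the Alstrup--Secher--Spork structure on the resulting $2n$-node forest, implement \alg{split} of an original node as a \alg{cut} of its internal edge, and handle all remaining splits (which only ever hit roots or leaves) by creating a fresh isolated node; you also make explicit two details the paper leaves implicit, namely that fresh nodes live outside $F'$ with \alg{root} returning the node itself, and the back-pointers mapping an $F'$-root to its current $F$-node.
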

\begin{proof}
	To support \alg{split}s, we first preprocess the initial forest $F^0$ as follows. For each node $v$, we apply a \alg{split} operation to $v$, obtaining a node $p_v$ that inherits the parent of $v$, and a node $c_v$ that inherits the children. Call $V(F^0)$ the \emph{original} nodes, and let $F^1$ be the modified forest.
	
	Now initialize the data structure mentioned above (without support for \alg{split}) with the modified forest $F^1$. Call this data structure $X$.
	At any given point, let $F$ denote the forest we are maintaining (initially $F = F^0$), and let $F'$ denote the modified forest maintained by $X$ (initially $F' = F^1$). Besides $X$, we also explicitly maintain $F$, and keep track of which nodes in $F$ are original. (New non-original nodes will be introduced every time we perform a \alg{split} in $F$.)
	
	We maintain following two invariants. First, at any point, $F'$ is precisely the forest obtained from $F$ by applying the above \alg{split} operation for each original node still contained in $F$. Second, every non-original node in $F$ is a root or leaf.
	
	We now show how to implement the three operations. Consider $\alg{cut}(v)$. If $v$ is original, we call $X.\alg{cut}(p_v)$. Otherwise, we call $X.\alg{cut}(v)$.
	
	Consider now $\alg{split}(v)$. If $v$ is original, we call $X.\alg{cut}(c_v)$, and return $(p_v, c_v)$. Otherwise, $v$ must be a root or leaf. We can thus simply create a new isolated node $u$ and return either $(u,v)$ (if $v$ is a root) or $(v,u)$ (if $v$ is a leaf).
	
	Finally, consider $\alg{root}(v)$. if $v$ is original, we compute $r \gets X.\alg{root}(p_v)$; otherwise, we compute $r \gets X.\alg{root}(v)$.
	Then, if $r = p_u$ for some original node $u$ that is contained in~$F$, we return $u$. Otherwise, we return $r$.
	
	Each operation clearly takes $\fO(1)$ time. Further, it is easy to see that \alg{cut} and \alg{split} maintain the invariants, and thus \alg{root} is correct.
\end{proof}

\subsection{Path Minima and Decremental Tree Minima on paths}\label{sec:pre:pm}

We now define the \emph{static} data structure for path minimum queries; here, it is more convenient to use unrooted trees.

\begin{definition}
	A data structure for \pPM{} (PM) represents a (static) tree with a priority function $p$ and supports the following query:
	\begin{itemize}
		\item $\alg{path-min}(v_1,v_2) \rightarrow u$ returns the vertex $u$ on the path between $v_1$ and $v_2$ with minimum priority $p(u)$.
	\end{itemize}
\end{definition}

If the input tree $G$ is a path, then the \problem{Path Minima} problem is equivalent to the well-known \problem{Range Minimum Query} (RMQ) problem (RMQ is usually defined on a \emph{sequence} of comparable elements instead of a path with comparable vertex priorities). RMQ is known to be solvable with $\fO(n)$ preprocessing time and $\fO(1)$ query time~\cite{BenderFarach-Colton2000}.

\begin{lemma}\label{p:pm-paths}
	Let $G$ be a path on $n$ vertices with a priority function $p$. Then the \pPM{} problem on $G$ can be solved with $\fO(n)$ preprocessing time and $\fO(1)$ worst-case time per operation.
\end{lemma}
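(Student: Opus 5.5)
The plan is to reduce the statement directly to the standard \problem{Range Minimum Query} result cited just before it~\cite{BenderFarach-Colton2000}.

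First, in $\fO(n)$ time we linearize the path. We pick an endpoint (a vertex of degree at most one) and walk along the path, assigning consecutive indices $1,\dots,n$ to the vertices. We store this as an array $A$ with $A[i]$ the $i$-th vertex, and as an inverse array $\mathrm{idx}$ mapping each vertex (an integer) to its position. Comparisons between entries of $A$ are resolved via the priority oracle, $p(A[i]) < p(A[j])$. Since the RMQ construction of Bender and Farach-Colton is comparison-based on the entries (it builds a Cartesian tree and then uses lookup tables on the Euler tour), it works unchanged in our oracle model. Its preprocessing takes $\fO(n)$ time and hence $\fO(n)$ comparisons.

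Second, to answer $\alg{path-min}(v_1,v_2)$ we set $i=\min(\mathrm{idx}(v_1),\mathrm{idx}(v_2))$ and $j=\max(\mathrm{idx}(v_1),\mathrm{idx}(v_2))$. Because $G$ is a path, the unique $v_1$--$v_2$ path consists exactly of the vertices $A[i],A[i+1],\dots,A[j]$. A range minimum query on $[i,j]$ therefore returns the minimum-priority vertex on that path, in $\fO(1)$ worst-case time.

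The only point needing any care is that the RMQ structure operates in the comparison/oracle model rather than on explicit numbers. This holds because after building the Cartesian tree the query phase uses only the tree structure (LCA via $\pm1$-RMQ tables), not the priorities. Beyond that, the correctness of the path-to-interval correspondence is immediate.
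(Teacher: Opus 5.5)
Your proposal is correct and takes the same route as the paper, which justifies the lemma only by noting that Path Minima on a path is Range Minimum Query on the sequence of vertices along the path, and then citing Bender and Farach-Colton. Your extra observation that this construction needs only the comparison oracle (building the Cartesian tree uses $\fO(n)$ comparisons, while the LCA and $\pm 1$-RMQ tables work on integer depths) is a valid detail that the paper leaves implicit.
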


This extends to the DTM problem in the same special case of the initial tree being a path.

\begin{corollary}\label{p:dtm-single-path}
	Let $T$ be a path with $n$ nodes rooted at one of its endpoints. Then the \pDTM{} problem with initial tree $T$ can be solved with $\fO(n)$ preprocessing time and $\fO(1)$ worst-case time per operation.
\end{corollary}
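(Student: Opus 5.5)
Write the path as $v_1, v_2, \dots, v_n$, where $v_1$ is the root and $v_{i}$ is the parent of $v_{i+1}$. At every point in time the current forest is a set of subpaths. Each of them is a contiguous interval $[a,b]$ of indices. The plan is to preprocess the path with the \pPM{} structure of \cref{p:pm-paths}, which takes $\fO(n)$ time. Once the endpoints of a component are known, its minimum is $\alg{path-min}(v_a,v_b)$, computed in $\fO(1)$ worst-case time. What is left is to find the endpoints $a$ and $b$ of the interval containing a query vertex $v$ in $\fO(1)$ worst-case time.

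The interval structure is maintained with two arrays, $\mathit{left}[i]$ and $\mathit{right}[i]$. These are valid only at endpoints. At an interval's left endpoint $a$, $\mathit{right}[a]$ stores $b$. At its right endpoint $b$, $\mathit{left}[b]$ stores $a$. This alone does not locate the endpoints from an arbitrary interior vertex. For that I would use the \pDTR{} data structure from \cref{sec:pre:comp-labeling}: after $\fO(n)$ preprocessing, it supports \alg{cut} and \alg{root} in $\fO(1)$ time per operation. In our rooted path, the root of $v$'s tree is exactly the left endpoint $v_a$ of its interval. We then read $b = \mathit{right}[a]$. A $\alg{cut}(v_i)$ removes the edge $v_{i-1}v_i$ and splits $[a,b]$ into $[a,i-1]$ and $[i,b]$. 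To handle it, first find $a$ via $\alg{root}(v_{i-1})$ before the cut, then $b=\mathit{right}[a]$. Then perform the cut in the \pDTR{} structure and set $\mathit{right}[a]=i-1$, $\mathit{left}[i-1]=a$, $\mathit{right}[i]=b$, $\mathit{left}[b]=i$. All of this takes constant time.

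A $\alg{tree-min}(v)$ query computes $r=\alg{root}(v)$, which gives $a$, then reads $b$ and returns $\alg{path-min}(v_a,v_b)$. Everything is constant time per operation. There is one subtlety: \pDTR{} was stated with an amortized total bound $\fO(m+n)$ rather than a worst-case one. If worst-case time is needed, I would avoid it. The alternative is to store the minimum explicitly per interval and answer endpoint lookups differently. Even simpler, the statement's use only needs total time $\fO(m+n)$, and the amortized bound already gives that. I expect this worst-case-versus-amortized point to be the main obstacle.

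If a genuinely worst-case bound is required, I would first try to replace \pDTR{} with the constant-time decremental connectivity structure of Alstrup, Secher, and Spork specialized to paths. This is the worst-case standard for paths: block the path into $\Theta(\log n)$-size words with bitmasks of cut positions. The nearest cut to the left or right of $i$ is then found with constant-time word operations, namely a most/least-significant-bit lookup within the word plus a summary-level lookup. This yields $a$ and $b$ directly, with no need for the arrays. Combined with \cref{p:pm-paths}, this gives the claimed bounds.
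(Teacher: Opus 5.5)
Your construction is the paper's: a \pDTR{} structure finds the root $v_a$ of the query vertex's component, and a pointer stored at each root gives the component's leaf (your $\mathit{right}[a]$ is the paper's $\ell_r$). The minimum is then one \alg{path-min} query between these two endpoints on the static structure of \cref{p:pm-paths}, and a \alg{cut} updates the pointers exactly as you describe.

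The amortized-versus-worst-case tension you point out is also in the paper. Its proof only claims $\fO(1)$ amortized time per operation for the underlying structures, so your main argument proves exactly what the paper's does, including the total $\fO(m+n)$ bound that is used later. Your word-packed fallback is therefore unnecessary. As sketched it would not give worst-case $\fO(1)$ anyway, because a constant number of summary levels over $\Theta(\log n)$-bit words covers only polylogarithmically many positions.
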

\begin{proof}
	We maintain the following structures. First, a \pDTR{} data structure $R$ (\cref{sec:comp-labeling}). Second, a \pPM{} data structure $M$ on the initial path~$T$. Note that both structures can be build with $\fO(n)$ preprocessing time and can execute every operation in $\fO(1)$ amortized time.
	
	At any point during execution, for each root $r$ of a component $C$ of the current forest, we maintain a pointer to the unique leaf $\ell_r$ of $C$.
	
	Whenever $\alg{tree-min}(v)$ is called, we first compute $r \gets R.\alg{root}(v)$. We then answer the query with $M.\alg{path-min}(r,\ell_r)$.
	
	Now consider a call $\alg{cut}(v)$. First, determine the parent $u$ of $v$, and compute $r \gets R.\alg{root}(v)$. The \alg{cut} splits the path from $\ell_r$ to $r$ into two paths with endpoints $\ell_r, v$ and $u, r$, respectively. Thus, we set $\ell_v \gets \ell_r$ and then $\ell_r \gets u$. Then, we can call $R.\alg{cut}(v)$ and $M.\alg{cut}(v)$.
\end{proof}

Finally, \cref{p:dtm-single-path} can be extended to a \emph{collection} of paths:

\begin{corollary}\label{p:dtm-paths}
	Let $F$ be a rooted forest on $n$ nodes where each component is a path rooted at one of its endpoints. Then the \pDTM{} problem with initial forest $F$ can be solved with $\fO(n)$ preprocessing time and $\fO(1)$ worst-case time per operation.
\end{corollary}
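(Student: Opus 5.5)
The idea is to reduce \cref{p:dtm-paths} to \cref{p:dtm-single-path} by running one single-path structure per component. A forest of paths is an easy case: the components never interact, because \alg{cut} only ever splits an existing path further. Concretely, in preprocessing I traverse $F$ once in $\fO(n)$ time and identify its components $P_1,\dots,P_k$, each a path rooted at one of its endpoints. For each $P_i$ I build the data structure $D_i$ of \cref{p:dtm-single-path}, and I store for every node $v$ a pointer to the index $i$ of the component that originally contains $v$. The preprocessing cost is $\sum_i \fO(|P_i|) + \fO(n) = \fO(n)$.

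The operations are then delegated directly. For $\alg{tree-min}(v)$, I look up the original component index $i$ of $v$ and return $D_i.\alg{tree-min}(v)$. For $\alg{cut}(v)$, I look up $i$ and call $D_i.\alg{cut}(v)$. Correctness rests on one invariant: at all times, the current forest restricted to $V(P_i)$ is exactly the forest maintained by $D_i$, and no edges run between different $V(P_i)$. This holds initially. It is preserved because \alg{cut} removes an edge $(v,\mathrm{parent}(v))$ that lies inside the original component of $v$, and edges are never inserted. So the tree containing $v$ in the current forest is exactly the tree containing $v$ in $D_i$, and the minimum returned by $D_i$ is the correct answer. Each operation costs one pointer lookup plus one worst-case $\fO(1)$ operation of \cref{p:dtm-single-path}.

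There is one point to check. The proof of \cref{p:dtm-single-path} invokes the \pDTR{} structure, whose bound is stated as $\fO(m+n)$ total, i.e.\ amortized rather than worst-case. The bound claimed in \cref{p:dtm-single-path} is worst-case, and I simply inherit it as an earlier result. If one wants to be careful, the per-component amortized bounds sum to $\fO(n+m)$ overall, since the potential charged to each $D_i$ is at most $\fO(|P_i|)$ and these sum to $\fO(n)$. Either way, the conclusion of \cref{p:dtm-paths} follows, and I don't expect a real obstacle anywhere.

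An even simpler alternative is to concatenate all the paths into one path and cut the $k-1$ connecting edges during preprocessing. This reduces the forest directly to \cref{p:dtm-single-path} with $\fO(n)$ extra preprocessing.
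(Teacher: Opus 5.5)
Your proof is correct, and the ``even simpler alternative'' in your last paragraph is exactly the paper's proof. The paper strings the components of $F$ into one long path $P$, initializes a single instance of \cref{p:dtm-single-path} on $P$, cuts the $k-1$ connecting edges, and then answers every operation on that one instance.

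Your main route is a genuine, if small, variation. You keep one instance $D_i$ per component and dispatch each operation through a stored component index. It is equally valid: cuts only refine components, so your invariant holds.

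What the concatenation buys is economy of bookkeeping. There is no dispatch table and only one instance to build. With $k$ separate instances, you implicitly need each $D_i$, including its \pDTR{} and RMQ substructures, to be set up in $\fO(|P_i|)$ time and space rather than $\fO(n)$. You would get this, e.g.\ by relabeling the nodes of $P_i$ locally; otherwise the preprocessing could degrade to $\fO(kn)$.

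In exchange, the concatenation must orient the pieces consistently: the root of each $P_{i+1}$ is attached below the leaf of $P_i$, so that cutting the connecting edges restores exactly the given rootings. Your per-component version never has to think about that.
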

\begin{proof}
	Start by stringing together the components of $F$ into a single long path $P$. Then, initialize the data structure of \cref{p:dtm-single-path}, and cut $P$ into the original paths. Now proceed as in \cref{p:dtm-single-path}. The additional initialization time is clearly $\fO(n)$.
\end{proof}

\section{A basic lower bound via tree sorting}\label{sec:lb}

In this section, we show a lower bound for the \pDTM{} problem for operation sequences of sufficient length. Note that this is not sufficient to prove universal optimality, since we ignore shorter sequences; these are treated later in \cref{sec:short-seqs}. %

The main idea, reducing from partial order sorting, was already presented in the introduction. We start here by defining our special case of the partial order sorting problem.
Given a rooted tree $T$, we say a priority function $p$ on $V(T)$ is \emph{$T$-monotone} if for each node $v$ and each ancestor $u$ of~$v$ in $T$, we have $p(v) < p(u)$.

\begin{definition}\label{def:tree-sorting}
	In the \pTS{} problem, we are given a rooted tree $T$ and a $T$-monotone priority function $p$, and need to sort $V(T)$ by priority.
\end{definition}

Let $\fP_T$ be the set of $T$-monotone priority functions. For each priority function $p$ on $T$, let $\pi_p$ denote the ascending ordering of $V(T)$ induced by $p$. Let $\LEs(T) = \{ \pi_p \mid p \in \fP_T \}$.

Observe that $\LEs(T)$ is the set of linear extensions of the partial order with Hasse diagram $T$.
Also, $\LEs(T)$ is precisely the set of possible outputs of the \pTS{} problem on the tree $T$. Since we can only distinguish between different outputs with comparisons, the information-theoretic lower bound yields:

\begin{observation}
	Fix a rooted tree $T$. Then, for every \pTS{} algorithm~$A$, there exists a priority function $p$ such that $A$ requires at least $\log |\LEs(T)|$ comparisons on input $(T,p)$.
\end{observation}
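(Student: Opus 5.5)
This is the standard information-theoretic (decision-tree) argument; the plan is to model $A$ as a comparison tree and count its leaves. Fix the rooted tree $T$ and a deterministic comparison-based \pTS{} algorithm $A$. The only way $A$ can learn anything about $p$ is through oracle queries ``$p(u) < p(v)$?''. Its behaviour on input $(T,p)$ is therefore a function of $T$ and the sequence of oracle answers it has received so far. This yields a binary decision tree $D_A$. Each internal node is labeled by the comparison $A$ performs next, given the answers so far, and has two children for the two possible answers. Each leaf is labeled by the ordering of $V(T)$ that $A$ outputs on that branch. Since the number of comparisons is what we count, a run of $A$ on $(T,p)$ makes exactly as many comparisons as the depth of the leaf it reaches.

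The first key step is to show that distinct outputs require distinct leaves. Take $p, p' \in \fP_T$ with $\pi_p \neq \pi_{p'}$. Correctness forces $A$ to output $\pi_p$ on $(T,p)$ and $\pi_{p'}$ on $(T,p')$. If both runs ended at the same leaf, they would produce the same output, which is a contradiction. So the map sending each $\pi \in \LEs(T)$ to the leaf reached by some $p$ with $\pi_p = \pi$ is injective. Hence $D_A$ has at least $|\LEs(T)|$ leaves.

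The second step uses the fact that a binary tree with $L$ leaves has a leaf at depth at least $\log L$. Applying this to $D_A$ gives a leaf at depth at least $\log|\LEs(T)|$. I would restrict attention to leaves actually reached by some $p \in \fP_T$; the injection above already guarantees there are at least $|\LEs(T)|$ of these. Then pick one such deep reachable leaf, and let $p$ be a priority function reaching it. On input $(T,p)$, algorithm $A$ makes at least $\log|\LEs(T)|$ comparisons, which is the claim.

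I don't expect any real obstacle here; the only subtle points are bookkeeping. First, comparisons against the special priority $\infty$, or between a vertex and itself, are resolved without calling the oracle, so they do not branch in $D_A$ and can be ignored. Second, if $A$ is randomized, the bound still holds for each fixing of its random bits, so some $p$ still forces at least $\log|\LEs(T)|$ comparisons.
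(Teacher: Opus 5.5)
Your proof is correct and is the information-theoretic (decision-tree) bound the paper appeals to: $\LEs(T)$ is precisely the set of possible outputs of \pTS{} on $T$, and outputs can only be distinguished through comparisons, so some input reaches a branch of depth at least $\log|\LEs(T)|$. Only your closing remark on randomized algorithms is loose: the priority function you obtain depends on the fixed random bits, so a single $p$ that is bad in expectation would need the averaging argument that a binary tree with $L$ leaves has average leaf depth at least $\log L$. The paper works with deterministic algorithms, so this does not affect the statement.
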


It turns out that $\LEs(T)$ is characterized a simple formula. For a rooted tree $T$ on $n$ nodes, define the \emph{tree entropy} of $T$ as 
\[ H(T) = \sum_{v \in V(T)} \log \tfrac{n}{|T_v|}. \]

\begin{lemma}[{Knuth~\cite[Exercise 5.4.1.20]{Knuth1998}}]\label{p:tree-les}
	Let $T$ be a rooted tree on $n$ nodes. Then
	\[ |\LEs(T)| = |T|! \cdot \prod_{v \in V(T)} \frac{1}{|T_v|}, \]
	implying that $\log |\LEs(T)| \ge H(T) - n \log e$.
\end{lemma}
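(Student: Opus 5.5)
We prove the counting formula by induction on the number of nodes, and then derive the entropy bound from it using Stirling's approximation.

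\emph{The formula.} Induct on $n = |T|$; a single node has exactly one linear extension, matching $1!\cdot 1 = 1$. For $n > 1$, let $r$ be the root with children $c_1,\dots,c_k$. Every $T$-monotone order puts $r$ last. The remaining $n-1$ nodes are exactly an interleaving of arbitrary linear extensions of the subtrees $T_{c_1},\dots,T_{c_k}$, since the monotonicity constraints never relate two different subtrees. Hence
\[ |\LEs(T)| = \binom{n-1}{|T_{c_1}|,\dots,|T_{c_k}|} \prod_{i=1}^k |\LEs(T_{c_i})|. \]
Apply the induction hypothesis $|\LEs(T_{c_i})| = |T_{c_i}|!\,\prod_{v\in V(T_{c_i})} 1/|T_v|$. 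The factorials $|T_{c_i}|!$ cancel against the multinomial denominators. What remains is $(n-1)! = n!/n = n!/|T_r|$, times the product of $1/|T_v|$ over all non-root nodes. This is exactly the claimed formula.

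\emph{The bound.} Take logarithms:
\[ \log|\LEs(T)| = \log n! - \sum_{v} \log |T_v|. \]
By Stirling, $n! \ge (n/e)^n$, so $\log n! \ge n\log n - n\log e$. Writing $n\log n = \sum_{v\in V(T)} \log n$ gives
\[ \log|\LEs(T)| \ge \sum_v \log\tfrac{n}{|T_v|} - n\log e = H(T) - n\log e. \]

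The only step needing care is the interleaving argument. One must check that the orders of different subtrees are independent and that the root is forced to come last. This follows directly from the definition of $T$-monotonicity, since the partial order is just the ancestor relation.
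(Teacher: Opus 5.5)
Your proof is correct and follows essentially the same route as the paper, which obtains the formula as the special case $S = V(T)$ of \cref{p:num-partial-tree-orders}: the same induction on $|T|$, with the root forced to come last and the interleavings of the child subtrees' extensions counted by a multinomial coefficient. The paper likewise derives the entropy bound (in the proof of \cref{p:dtm-full-lb}) from the estimate $n! \ge (n/e)^n$, exactly as you do.
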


The formula in \cref{p:tree-les} is known in the literature as \emph{Knuth's hook-length formula}.
We prove a generalization of \cref{p:tree-les} in \cref{sec:short-seqs} (\cref{p:num-partial-tree-orders}).

We now turn back to the DTM problem.
Suppose we have a DTM data structure $X$ and are given an input $(T,p)$ for \pTS{}. Find the root $r$ of $T$, initialize $X$ with $(T,p)$, then repeatedly call $v \gets X.\alg{tree-min}(r)$ and $X.\alg{cut}(v)$, until all edges are deleted. Since $p$ is $T$-monotone, each node returned by \alg{tree-min} is a leaf, and the sequence of outputs of \alg{tree-min} is precisely $V(T)$ in sorted order.

Thus, a DTM data structure can solve \pTS{} with a sequence of $2n-2$ operations, so the $\log |\LEs(T)| \ge H(T) - n \log e$ lower bound applies. We also have a trivial $\Omega(n)$ bound, if the first operation is \alg{tree-min}. Therefore, we obtain:

\begin{corollary}\label{p:lb-dtm-longseq}
	For each DTM data structure $X$ and each rooted tree $T$ with $n \ge 2$ nodes, there exists a priority function $p$ on $T$ and a sequence $\sigma$ of $2n-2$ operations such that $\Time(X,T,p,\sigma) \ge \Omega( H(T) )$.
\end{corollary}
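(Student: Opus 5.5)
We combine two lower bounds and take their maximum, since $\max(a,b) \ge (a+b)/2$. The first comes from the reduction to \pTS{} just described; the second is the trivial linear bound.

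\emph{Tree sorting bound.} Run the reduction with $X$ on the rooted tree $T$. The sequence $\sigma$ alternates $X.\alg{tree-min}(r)$ and $X.\alg{cut}(v)$, where $r$ is the root of $T$ and $v$ is the returned node. It stops once the $n-1$ edges are deleted, so it has exactly $2n-2$ operations; this length is fixed independently of $p$ because every non-root node is cut exactly once. The \alg{tree-min} outputs, followed by the root, list $V(T)$ in sorted order. Hence the algorithm ``initialize $X$, run $\sigma$'' solves \pTS{} on $T$ using only comparisons made by $X$, plus $\fO(n)$ bookkeeping.

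By the information-theoretic observation, some $T$-monotone $p$ forces at least $\log|\LEs(T)|$ comparisons. Each comparison costs at least one time unit, so $\Time(X,T,p,\sigma) \ge \log|\LEs(T)| \ge H(T) - n\log e$ by \cref{p:tree-les}.

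There is one subtlety: $\sigma$ is generated adaptively from $X$'s answers, whereas the statement asks for an existential $\sigma$. Fixing the adversarial $p$ fixes the answers, and therefore fixes $\sigma$, so this is fine.

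\emph{Linear bound.} For the same $p$ and $\sigma$, $\Time(X,T,p,\sigma) \ge 2n-2 \ge n$, since each of the $2n-2$ operations takes at least unit time (the first \alg{tree-min} must in fact read or return something).

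\emph{Combining.} Let $t = \Time(X,T,p,\sigma)$. We have $t \ge \max\bigl(H(T) - n\log e,\, n\bigr)$. Adding $\log e$ times the second bound to the first gives
\[ (1+\log e)\, t \ge H(T) - n\log e + n\log e = H(T), \]
so $t \ge H(T)/(1+\log e) \in \Omega(H(T))$.

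The main obstacle is the $-n\log e$ slack in \cref{p:tree-les}. For path-like trees, $H(T)$ can be comparable to $n$, while $\log|\LEs(T)|$ may be $0$, as for a path. The linear term absorbs this slack, and the weighting above handles it cleanly.
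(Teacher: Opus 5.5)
Your proof is correct and follows the paper's argument. Both use the tree-sorting reduction with alternating $\alg{tree-min}(r)$ and $\alg{cut}$ to get $\log|\LEs(T)| \ge H(T) - n\log e$, then absorb the $-n\log e$ slack with a linear bound. The only minor differences are that you take the linear bound from the $2n-2$ unit-cost operations rather than from the $n-1$ comparisons forced by the first $\alg{tree-min}$, and that you spell out the weighted combination of the two bounds.
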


In the next section, our goal will be to give a data structure that matches \cref{p:lb-dtm-longseq} (which, as mentioned, is not quite enough for our definition of universal optimality). The following simple lower bounds will be useful:

\begin{lemma}\label{p:lb-leaves}\label{p:entropy-lbs}
	Let $T$ be a rooted tree with $n$ nodes and $\ell$ leaves. Then $H(T) \ge \ell \log n$ and $H(T) \ge n-1$.
\end{lemma}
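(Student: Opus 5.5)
Both inequalities follow by bounding individual summands of $H(T) = \sum_{v} \log \frac{n}{|T_v|}$. Every summand is nonnegative, since $|T_v| \le n$. So it is enough to pick out a suitable subset of nodes and bound their contributions from below.

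For $H(T) \ge \ell \log n$, I would use only the leaves. A leaf $v$ has $|T_v| = 1$, so it contributes exactly $\log n$. Summing over the $\ell$ leaves gives $\ell \log n$, and the remaining terms are nonnegative, so the bound holds.

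For $H(T) \ge n-1$, pointwise bounds alone are not enough. A non-root node $v$ with $|T_v| > n/2$ contributes less than $1$, so I need an amortized argument.

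\textbf{Case 1: $\ell \ge 2$.} Since $n \ge 2$ we have $\log n \ge 1$, but $\ell \log n$ can still be smaller than $n-1$, for example on a long path with a small branch. So I treat the nodes by the size of their subtrees.
\begin{itemize}
\item Nodes $v$ with $|T_v| \le n/2$ contribute at least $1$ each.
\item The nodes with $|T_v| > n/2$ form an ancestor-closed chain $u_0 = r, u_1, \dots, u_k$ from the root downward. Two such subtrees must intersect, and disjoint siblings cannot both exceed $n/2$.
\item Along this chain, $|T_{u_i}| \le n - i$, so node $u_i$ contributes at least $\log \frac{n}{n-i}$. Summed, this is only about $\log \frac{n!}{(n-k-1)!\, n^{k+1}}$-type terms, which do not reach $k$ by themselves.
\end{itemize}
The missing contribution must come from the leaves. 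Each leaf contributes $\log n$ instead of $1$, a surplus of $\log n - 1$. Since $k+1 \le n/2$, it suffices to show the chain's shortfall is at most $k+1$ and is covered by this surplus. The shortfall might not be fully covered when $\ell$ is small, so I would instead use the cleaner path argument below.

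\textbf{Path argument.} Order the chain so that $|T_{u_i}| = s_i$ is strictly decreasing, with $s_0 = n$. Then $\sum_i \log \frac{n}{s_i} \ge \sum_{i=1}^{k} \log \frac{n}{n-i}$. The last chain node $u_k$ has a child subtree of size at most $n/2$ (or $u_k$ is a leaf), and that subtree contains at least one leaf. That leaf contributes $\log n$.

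The cleanest route, which I would try first, is the pairing argument below. Give each node $v$ a charge of $1$ to pay for. The $n-1$ non-root nodes must each receive $1$.

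\textbf{Main obstacle, via an exact telescoping bound.} Since $\log x \ge$ something like $(x-1)/x$ is too weak, I would use $\log \frac{n}{|T_v|}$ summed along root-to-leaf structure. Consider a single path from the root to a leaf: $n = s_0 > s_1 > \dots > s_d = 1$. Then $\sum_{i} \log \frac{n}{s_i} \ge \sum_i \log \frac{n}{n-i}$, and adding the leaf's $\log n$ gives a bound comparable to $\log \frac{n^{d+1}}{n!/(n-d-1)!}$.

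For the path $T$ itself, $H = \log \frac{n^n}{n!} \ge n \log e - O(\log n)$. Since $\log e > 1.44$, this is $\ge n-1$; I would verify the small $n$ directly.

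\textbf{General induction.} I would induct on $n$, using convexity of $\log$ over the chain decomposition: removing a leaf lowers $H$ by at least $1$. Removing a leaf $v$ deletes its term $\log n$ and changes the other terms from $\log \frac{n}{s}$ to either $\log \frac{n-1}{s}$ or $\log \frac{n-1}{s-1}$. The second case occurs for ancestors of $v$. This is the step I expect to be hardest. The plan is to pick the leaf $v$ whose ancestor sizes minimize the loss. The loss for ancestors is $\sum_a \log \frac{s_a - 1}{s_a} \cdot$ (net with the $\frac{n-1}{n}$ factors). Bounding this by $\log n - 1$ should follow from the same telescoping estimate $\prod \frac{s_a}{s_a-1} \le \frac{n}{?}$, using a deepest-heavy-path choice of leaf.
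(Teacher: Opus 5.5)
Your first inequality is correct and matches the paper's argument: each leaf contributes exactly $\log n$, and all other terms are nonnegative.

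For $H(T)\ge n-1$, however, there is a genuine gap: the proposal never proves it. Case~1 and the path argument are abandoned midway. The path case rests on a Stirling estimate with unchecked small cases. Your final plan, induction by deleting a leaf, stops exactly at the decisive step. You leave that step as $\prod_a \frac{s_a}{s_a-1}\le\frac{n}{?}$ and expect it to need a special (``deepest heavy path'') choice of leaf. That estimate is the entire content of the second inequality, so as written nothing is established.

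Your plan does close, and for \emph{any} non-root leaf $v$. Let $T'$ be the tree obtained by deleting $v$, and let $s_a=|T_a|$ for the proper ancestors $a$ of $v$. Then
\[ H(T)-H(T') = \log n + (n-1)\log\tfrac{n}{n-1} - \sum_a \log\tfrac{s_a}{s_a-1}. \]
The $s_a$ are distinct integers in $\{2,\dots,n\}$, so the last sum is at most $\sum_{s=2}^{n}\log\frac{s}{s-1}=\log n$. Hence
\[ H(T)-H(T')\ge(n-1)\log\tfrac{n}{n-1}\ge 1, \]
using $(1+\frac{1}{n-1})^{n-1}\ge 2$, and induction finishes.

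The paper instead removes the root. If the root's child subtrees $T^1,\dots,T^k$ have sizes $n_i$, then
\[ H(T)=\sum_i\bigl(n_i\log\tfrac{n}{n_i}+H(T^i)\bigr). \]
Since $n_i\le n-1$, the same binomial inequality gives $n_i\log\frac{n}{n_i}\ge 1$, so $H(T)\ge\sum_i n_i=n-1$ by induction. In that decomposition the subtree sizes factor cleanly, so no heavy-chain analysis, leaf-surplus accounting, or ancestor-loss bound is needed. The ingredient missing from all your attempts is $(1+\frac1m)^m\ge 2$, applied to a decomposition in which the changed terms telescope or factor.
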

\begin{proof}
	Observe that for each leaf $v$, we have $|S_v| = 1$, so $\log \tfrac{n}{|S_v|} = \log n$. Summing up only the contribution of the leaves to $H(T)$, we get $\ell \log n$, as required.
	
	We prove the second inequality by induction. if $n = 1$, then $H(T) = 0 = n-1$. Now suppose $n \ge 2$, and let $T^1, T^2, \dots, T^k$ be the trees obtained by removing the root from~$T$. Observe that $T^i_v = T_v$ for each $v \in V(T_i)$, and let $n_i = |T^i|$.
	With this, we have
	\begin{align*}
		H(T) = \sum_{v \in V(T)} \log \tfrac{n}{|T_v|} = \sum_{i=1}^k \sum_{v \in V(T^i)} \log \left(\tfrac{n}{n_i} \cdot \tfrac{n_i}{|T^i_v|}\right) = \sum_{i=1}^k n_i \log \tfrac{n}{n_i} + H(T^i).
	\end{align*}
	Since $n_i \le n-1$, we have $n^{n_i} \ge (n_i+1)^{n_i}$. It is easy to check that $(n_i+1)^{n_i} \ge 2 n_i^{n_i}$ by the binomial theorem, so $n_i \log \tfrac{n}{n_i} \ge 1$. The statement follows by induction.
\end{proof}

\section{Step-by-step towards universal optimality}\label{sec:ds}

In this section, we present our data structure for \pDTM{}.
For now, we will prove optimal running time for every initial tree $T$, for a worst-case sequence of operations of length at least $2|V(T)| - 2$ (matching \cref{p:lb-dtm-longseq}). 
We start with a relatively naive attempt, and then gradually improve our algorithm.
The full proof of universal optimality (for an arbitrary number of operations) is given in \cref{sec:short-seqs}.

\subsection{An \texorpdfstring{$\fO(n \log n)$}{O(n log n)} algorithm with dynamic forests}\label{sec:trees:dt-trivial}\label{sec:ds:dt-trivial}
A \emph{dynamic forest} data structure~\cite{SleatorTarjan1983,SleatorTarjan1985,AlstrupEtAl2005} maintains a rooted forest and some associated data along with edge insertion and deletion.
We use the following specification.
Note that for our $\fO(n \log n)$ \pDTM{} data structure, we only need the operations $\alg{cut}$ and \alg{find-min}, and only need $\fO(\log n)$ amortized upper bounds for their running time. The operations \alg{set-priority} and \alg{split} will be useful later.

\begin{definition}\label{def:dt}
	A \emph{dynamic forest data structure} represents a rooted forest $F$ with vertex priorities $p \colon V(F) \rightarrow \Prios$. It supports the operations listed below. %
	\begin{itemize}
		\item $D.\alg{find-min}() \rightarrow v$ returns the node $v$ with minimum priority. %
		
		\item $D.\alg{cut}(v)$ removes the edge between $v$ and its parent (see \cref{def:cut-split}).

		\item $D.\alg{set-priority}(v,q)$ sets $p(v) \gets q$. %
		
		\item $D.\alg{split}(v) \rightarrow (u_1, u_2)$ splits $v$ into two new nodes (see \cref{def:cut-split}). The priorities of the two new nodes $u_1$ and $u_2$ are initialized as $p(v)$.
	\end{itemize}
\end{definition}

\begin{lemma}\label{p:dt-running-times}
	There is a dynamic forest data structure such that:
	\begin{itemize}
		\item Initialization takes $\fO(n)$ time, where $n$ is the number of nodes in the given initial forest;
		\item \alg{find-min} takes $\fO(1)$ time;
		\item $\alg{split}(v)$ takes $\fO(1)$ time when $v$ is a root or a leaf;
		\item \alg{split} in every other case, as well as \alg{cut} and \alg{set-priority}, all take $\fO(1 + \log n)$ time, where $n$ is the number of nodes in the tree containing the affected nodes.
	\end{itemize}
	All running times are amortized.
\end{lemma}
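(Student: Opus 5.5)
We would realize this lemma with top trees in the style of Alstrup, Holm, de Lichtenberg, and Thorup (equivalently, link-cut trees augmented with subtree minima as in Goldberg, Grigoriadis, and Tarjan). Each cluster stores the minimum-priority vertex it contains; since $\min$ is associative and commutative, this aggregate is maintained under merges in $\fO(1)$. We also keep a global root pointer per tree, together with a global priority queue over the tree minima.

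A subtlety is that \alg{find-min}, as specified, returns the global minimum over the whole forest. We interpret it per tree: in the intended usage each tree of $D$ is accessed separately, and \alg{find-min} is called on the tree containing a given node. The root cluster of that tree stores its minimum, so the query takes $\fO(1)$ once the root cluster is known. If a truly global minimum is required, we add a second layer: a pairing or Fibonacci heap over the roots, giving $\fO(1)$ find-min with $\fO(\log n)$ amortized deletions. Those deletions can be charged to the \alg{cut}s and non-trivial \alg{split}s that create or change trees.

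Initialization builds a top tree over an $n$-node forest in $\fO(n)$ time, e.g., via the linear-time contraction-based construction.

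The operations are handled as follows.
\begin{itemize}
	\item \alg{cut}$(v)$ is a standard top-tree operation taking $\fO(\log n_T)$ time, where $n_T$ is the size of the affected tree. The resulting top trees are balanced relative to their own sizes.
	\item \alg{set-priority}$(v,q)$ exposes $v$, updates its base cluster, and recomputes the aggregates along the $\fO(\log n_T)$ ancestors.
	\item A general \alg{split}$(v)$ is simulated locally. We cut all children of $v$ away, create a new node $u_2$, and re-link the children to it. This costs $\fO(\deg(v)\log n)$, which is too expensive. Instead, we realize the split by a constant number of operations, as follows.
\end{itemize}

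For the general split we use a trick: preprocess each node $v$ by replacing it with two nodes $p_v$ and $c_v$ joined by an edge, where $c_v$ inherits the children. We give $p_v$ priority $p(v)$ and $c_v$ priority $\infty$, and a set-priority on $v$ updates $p_v$. This is the same device used for \pDTR{}. The split of $v$ then becomes a single \alg{cut}$(c_v)$ followed by a \alg{set-priority} on $c_v$ to $p(v)$, which is $\fO(\log n)$ amortized. We must also check that $\infty$ entries never get returned wrongly: a tree containing only $\infty$ nodes still has a well-defined minimum, namely $\infty$.

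For roots and leaves, the resulting nodes are (essentially) isolated. A split there creates a fresh isolated node, which takes $\fO(1)$ time by allocating a singleton cluster without touching the existing top tree. For a leaf $v$ this keeps $v$'s structure and adds the isolated copy; for a root it does the same.

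The main obstacle is making every running-time bound depend on the size of the \emph{current} tree rather than the initial $n$. Top trees rebalance after each cut, so each cluster's height is logarithmic in its own tree. The same holds for the doubled-node trick, since it changes sizes by at most a factor of two.
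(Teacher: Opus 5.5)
Your update operations are fine, and your general \alg{split} takes a different route from the paper's. Top trees, or link-cut trees with subtree minima, give $\fO(1+\log n)$ amortized \alg{cut} and \alg{set-priority} relative to the current tree size. The fresh-isolated-node trick for root/leaf splits is exactly the paper's. For the general \alg{split}, the paper does not double nodes. It cuts $v$ from its parent, links a fresh node $u_1$ under the former parent, and sets $p(u_1)\gets p(v)$; since $v$ keeps its children, $\fO(1)$ link/cut/set-priority calls suffice. Your $p_v/c_v$ doubling also works. You should say explicitly, though, that after a doubled split $p_v$ is a leaf and $c_v$ a root, so any later split of them is the cheap case.

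The gap is the $\fO(1)$ \alg{find-min}. As the structure is used in the paper ($D.\alg{tree-min}(x)$ for a super-node $x$), the query comes with a node, and you must get from that node to its tree's stored minimum in constant time. You write that the root cluster stores the minimum, ``so the query takes $\fO(1)$ once the root cluster is known''. Nothing in your construction locates that root cluster in $\fO(1)$. In a top tree or link-cut tree, reaching it from a vertex takes logarithmic time. Keeping a per-node pointer to its tree is itself a decremental connectivity problem, because a single \alg{cut} changes the tree identity of a whole subtree. A ``global root pointer per tree'' does not address this.

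Your fallback heap over tree minima does not help either. It answers a forest-wide minimum rather than the minimum of the query node's tree. Its update cost is $\fO(\log t)$ for $t$ trees, which is not bounded by the logarithm of the affected tree's size, as the lemma requires.

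The missing ingredient, which the paper uses, is the \pDTR{} structure of \cref{sec:pre:comp-labeling} (Alstrup--Secher--Spork extended with \alg{split}). It supports \alg{root}, \alg{cut} and \alg{split} in $\fO(1)$ after linear preprocessing. With it:
\begin{itemize}
\item Cache each tree's minimum $q_r$ at its root $r$, and answer \alg{find-min} for $v$ by returning $q_{\alg{root}(v)}$.
\item After each \alg{cut}, non-trivial \alg{split} or \alg{set-priority}, refresh the at most two affected cached values using the $\fO(\log n)$ aggregate query of the underlying dynamic tree.
\item A root/leaf split just sets the new singleton's cached value to $p(v)$ and leaves all other caches unchanged, so it stays $\fO(1)$.
\end{itemize}
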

\begin{proof}
	All operations besides \alg{split} can be implemented in $\fO(\log n)$ amortized time by standard solutions~\cite{SleatorTarjan1983,SleatorTarjan1985,GoldbergGrigoriadisEtAl1991}. Suppose we have such a data structure $X$, which also supports the operation $X.\alg{link}(u,v)$ that makes $u$ a child of $v$.
	
	We can implement $\alg{split}(v)$ by first calling $X.\alg{cut}(v)$, then adding a new child $u_1$ to the former parent of $v$ using $X.\alg{link}(u_1, v)$. Then, we set the priority of $u_1$ to $p(v)$ and finally return $(u_1, v)$. This yields the required running time of $\fO(\log n)$ in the general case.
	
	To implement $\alg{find-min}$ in $\fO(1)$ time, we store the minimum value of each component $S$ of $F$ explicitly in the root $r$ of $S$, as a value $q_r$. We maintain a \pDTR{} data structure on $F$ to access $r$. Whenever we need to update $q_r$, we compute it with the standard $\fO(\log n)$-time \alg{find-min} operation of $X$. This happens at most twice after each \alg{cut} or \alg{split}, so we stay within the $\fO(\log n)$ budget.
	
	Finally, we describe how to perform \alg{split} in $\fO(1)$ time for roots and leaves.	
	Suppose $v$ is a leaf. For $\alg{split}(v)$ we create a new, disconnected vertex $u_2$ with priority~$p(v)$, and return $(v, u_2)$. This requires only calling $\alg{set-priority}(u_2, p(v))$, which takes $\fO(1)$ time since $u_2$ is an isolated vertex.
	Moreover, the only newly created root is $u_2$, we clearly have $q_{u_2} = p(v)$, and $q_r$ does not need updating for any other root $r$.
	
	If $v$ is a root, we proceed similarly.
\end{proof}

\Cref{p:dt-running-times} implies that, starting with a tree with $n$ nodes, we can perform $m$ DTM operations in $\fO(m \log n + n)$ time.

\subsection{Chain compression}\label{sec:ds:chains}

In this section, we give an improved data structure with $\fO( \log \ell)$ running time per operation, where $\ell$ is the number of leaves in the input forest $F$. The main idea is that since we can solve the DTM problem efficiently on \emph{paths}, we identify long induced paths in $F$, and solve the problem separately on them.

Let $F$ be a rooted forest. A \emph{chain} in $F$ is a sequence $C = (v_1, v_2, \dots, v_k)$ such that for each $i \in [k-1]$, the unique child of $v_i$ is $v_{i+1}$. Note that $v_k$ may have an arbitrary number of children. The \emph{top node} of the chain is $v_1$, and the \emph{bottom node} is $v_k$. \emph{Contracting} $C$ yields a new rooted forest $F'$, where $C$ is replaced by a single node $x$, such that the parent of $x$ in $F'$ is the parent of $v_1$ in $F$ (if it exists), and the children of $x$ in $F'$ are the children of $v_k$ in $F$.

\begin{figure}
	\centering
	\begin{minipage}{.3\textwidth}
		\centering\includegraphics{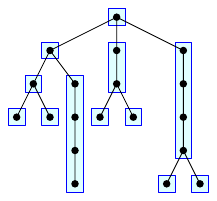}
	\end{minipage}
	\begin{minipage}{.3\textwidth}
		\centering\includegraphics{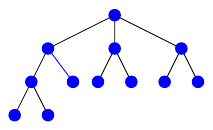}
	\end{minipage}
	\begin{minipage}{.3\textwidth}
		\centering\includegraphics{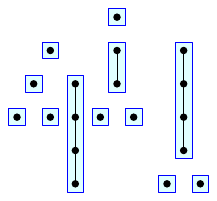}
	\end{minipage}\\
	\vspace{3.5mm}
	\begin{minipage}{.3\textwidth}
		\centering
		Original forest $F$
	\end{minipage}
	\begin{minipage}{.3\textwidth}
		\centering
		Compressed forest $F'$
	\end{minipage}
	\begin{minipage}{.3\textwidth}
		\centering
		Chain forest $F_C$
	\end{minipage}
	\caption{A maximal-chain compression of a forest, with priorities omitted. Chains are overlaid in light blue.%
	}\label{fig:compression}
\end{figure}

Let $F'$ be a rooted forest obtained from $F$ by contracting a pairwise disjoint set of chains that covers the whole forest (see \cref{fig:compression}). For each $x \in V(F')$, let $C(x)$ be the chain in $F$ that produced $x$ (observe that $C(x)$ may consist of only one node). If $p \colon V(F) \rightarrow \Prios$ is a vertex priority function for $F$, then we define $p' \colon V(F') \rightarrow \Prios$ as $p'(x) = \min_{v \in C(x)} p(v)$. Together, the tuple $(F',p',C)$ is called a \emph{compression} of $(F,p)$. In the following, we call the nodes of a compression \emph{super-nodes}. We will sometimes use chains and super-nodes interchangably when there is no danger of confusion, e.g., we may say for $v \in V(F)$ and $x \in V(F')$ that ``$v$ is contained in $x$'' or ``$v$ is the top node of $x$''. We call leaves in $F'$ \emph{super-leaves} and roots in $F'$ \emph{super-roots}.

It will also be useful to consider the collection of chains as a rooted forest, where each tree is a chain $C(x)$. We denote this \emph{chain forest} by $F_C$ (see \cref{fig:compression}, right).

The key idea for our improved data structure is that if we contract all \emph{maximal} chains, all non-leaf super-nodes in the resulting compression have at least two children, so the number of super-nodes is linear in the number of leaves. If we build a dynamic tree on the compression, each operation has running time $\fO(\log \ell)$, as desired.

\paragraph{Maintaining a compression.}
We´ now show how to maintain the structure of a compression under edge deletions, disregarding priorities. Define an \emph{unweighted compression} $(F',C)$ of a forest $F$ in the obvious way. Suppose we perform a $\alg{cut}(v)$. Then $(F',C)$ can be updated as follows (see \cref{fig:compr-tree-intro}): If $v$ and and its parent $u$ are in distinct chains $C(x)$ and $C(y)$, we also perform a $\alg{cut}$ in~$F'$ (see \cref{fig:canonical-cut-split}, left). If $u$ and $v$ are in a common chain $C(x)$, we need to perform a \alg{split} operation on $x$, obtaining two new nodes $y_1, y_2$ (see \cref{fig:canonical-cut-split}, right). Also, we need to split the chain $C(x)$ into two new chains $C(y_1), C(y_2)$. The result is clearly a valid compression of the new forest.

We refer to this operation on $F'$ as the \emph{canonical} \alg{cut} or \alg{split} associated to the operation $\alg{cut}(u)$ in $F$.
We now give a data structure maintaining an unweighted compression and some useful additional data.

\begin{figure}
	\centering
	\includegraphics{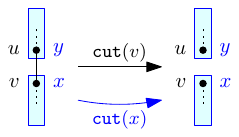}
	\hspace{15mm}
	\includegraphics{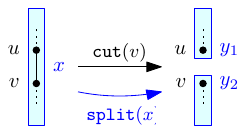}
	\caption{A \alg{cut} operation in $F$ causing a canonical \alg{cut} (left) or \alg{split}(right) in the compression $F'$.}\label{fig:canonical-cut-split}
\end{figure}

\begin{lemma}\label{p:compresssion-ds}
	There is a data structure that maintains an unweighted compression $(F', C)$ of a forest $F$ with $n$ vertices under $\alg{cut}$ operations in $F$, with $\fO(n)$ preprocessing time, and constant time per \alg{cut}. Each \alg{cut} in $F$ performs the canonical \alg{cut} or \alg{split} in $(F',C)$. Moreover, the data structure maintains the following auxiliary structures:
	\begin{itemize}
		\item An explicit representation of $F$ and $F'$, with parent pointers and child lists.
		\item A constant-time mapping from each node $v \in V(F)$ to the super-node $x \in V(F)$ that contains~$v$.
		\item A constant-time mapping from each node $v \in V(F)$ to the root $r \in V(F)$ of the component in $F$ containing $v$.
		\item A constant-time mapping from each super-node $x$ to the top node of its chain $C(x)$.
	\end{itemize}
\end{lemma}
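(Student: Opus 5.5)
We can maintain everything explicitly with pointer structures, plus one \pDTR{} structure for the root mapping. The compression $(F',C)$ is given at initialization (for example, the maximal-chain compression, computable by one traversal of $F$). In $\fO(n)$ time we build the following.
\begin{itemize}
	\item Explicit representations of $F$ and $F'$: parent pointers, and child lists stored as doubly linked lists, with each node keeping a pointer to its entry in its parent's list, so that a child can be unlinked in $\fO(1)$.
	\item For each $v \in V(F)$, a pointer to a shared record for its chain. The record stores the super-node $x$ and pointers to the top and bottom node of $C(x)$.
	\item A \pDTR{} data structure $R$ on $F$ (\cref{sec:pre:roots}), which gives the root mapping in $\fO(1)$ per operation. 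We only ever call $R.\alg{cut}$ and $R.\alg{root}$.
\end{itemize}

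Now consider $\alg{cut}(v)$ in $F$ with parent $u$. We first unlink $v$ from $u$ in $F$ and call $R.\alg{cut}(v)$. We then compare the chain records of $u$ and $v$.

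If the records differ, $v$ must be the top node of its chain $C(y)$ and $u$ the bottom node of $C(x)$. We perform the canonical $\alg{cut}(y)$ in $F'$ by unlinking $y$ from the child list of $x$. This takes $\fO(1)$, and all mappings stay valid.

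If the records coincide, $u$ and $v$ are consecutive in $C(x)$, and we perform the canonical $\alg{split}(x)$. This creates $y_1$, which inherits the parent of $x$, and $y_2$, which inherits the children of $x$. We can reuse $x$ as one of them and create a fresh node for the other. The parent pointer and child-list entry transfer in $\fO(1)$ if we move the whole child list of $x$ to $y_2$ as a single list. The children's parent pointers are the issue, discussed next. The top of $C(y_1)$ is the old top and its bottom is $u$; the top of $C(y_2)$ is $v$ and its bottom is the old bottom.

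The main obstacle is keeping the node-to-super-node mapping and the super-node parent pointers in constant time. Relabelling every node of one half of the chain could cost linear time per split, and reassigning parent pointers for all children of $x$ costs $\deg(x)$.

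To handle the children's parent pointers, we store them indirectly. Each super-node $x$ holds a pointer to a mutable ``bottom record''. Children of $x$ in $F'$ point to this record, not to $x$ directly. On a split, we hand the bottom record to $y_2$ and give $y_1$ a new one. The parent of a child $z$ is then read as the owner of $z$'s bottom record, in $\fO(1)$.

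To handle the node-to-super-node mapping, we avoid relabelling halves. The mapping is a query, not a stored field: given $v$, we find its chain as the connected component of $v$ in the chain forest $F_C$. Every $F$-cut inside a chain is exactly an edge deletion in $F_C$, whose components are paths. So we maintain a second \pDTR{} structure $R_C$ on $F_C$, and call $R_C.\alg{cut}(v)$ whenever the cut is inside a chain. The chain of $v$ is then identified by its top node $R_C.\alg{root}(v)$, and we store the super-node and the bottom node at that top node. On a split we update the records stored at the old top and at $v$ (the new top) in $\fO(1)$. Both mappings from the lemma now take $\fO(1)$: the super-node of $v$ via $R_C.\alg{root}(v)$, the top node of a super-node via its stored pointer, and the root via $R$.

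Every step above is $\fO(1)$, and initialization is $\fO(n)$, as required.
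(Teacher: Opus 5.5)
Your proposal is correct, and once you drop the per-node chain records in favor of $R_C$, it matches the paper's proof: one \pDTR{} structure on $F$ for the root mapping, a second one on the chain forest $F_C$ whose roots identify chains (compared for $u$ and $v$ to choose between the canonical \alg{cut} and \alg{split}), and two-way pointers between super-nodes and the top nodes of their chains. Your bottom-record indirection also makes explicit how parent pointers in $F'$ survive a \alg{split} in $\fO(1)$, which the paper leaves implicit; letting $y_2$ simply keep the identity of $x$ would work equally well.
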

\begin{proof}
	In addition to the explicit representations of $F$ and $F'$, we maintain two \pDTR{} data structures:
	One, denoted $L_F$, on $F$, and one, denoted $L_C$, on $F_C$. Recall that $F_C$ is the forest of all chains $C(x)$, so $L_C.\alg{root}(v)$ returns the top node in the chain containing $v$.
	Moreover, we explicitly maintain a pointer $t(x)$ from each super-node $x$ to the top node $v$ of its chain $C(x)$, and a reverse pointer $t^{-1}(v) = x$.
	
	Now we can find the super-node containing a given node $v$ in constant time by computing $t^{-1}(L_C.\alg{root}(v))$. The other two mappings are provided by $L_F$ and $t$.
	
	Suppose we call $\alg{cut}(v)$, and say $u$ is the parent of $v$. We can easily update the explicit representation of $F$ directly, and call $L_F.\alg{cut}(v)$.
	
	To update $F'$, $L_C$, $t$, and $t'$, we first determine whether we need to perform a \alg{cut} or a \alg{split}. For this, we check whether $u$ and $v$ are in the same chain, by checking equality of $L_C.\alg{root}(u)$ and $L_C.\alg{root}(v)$.
	
	If both are in the the same chain $C(x)$, we need to split this chain.
	We first call $L_C.\alg{cut}(v)$. Then, we perform a $\alg{split}(x)$ in $F'$, which creates nodes $(y_1, y_2)$. We set $t(y_1) = t(x)$ and $t(y_2) = v$, and update $t^{-1}$ accordingly.
	
	Suppose now $u$ is in a chain $C(y_1)$ and $v$ is in a different chain $C(y_2)$. Then $y_1$ is the parent of $y_2$ in $F'$. We simply perform a $\alg{cut}(y_2)$ in $F'$. In this case $F_C$ and $t$ do not need to be modified, since no chains change.
\end{proof}

\paragraph{Tree minima with chain compression.}

We now describe our improved data structure for the DTM problem. An instance maintains a compression $(F', p', C)$ on $(F,p)$. At the start, the compression is obtained by contracting all maximal chains in the initial tree~$T^0$.

We maintain $F$ and $(F',C)$ in an instance $A$ of the auxiliary structure of \cref{p:compresssion-ds}, and additionally:
\begin{itemize}
	\item A dynamic forest $D$ for $(F',p')$.
	\item A Path-DTM data structure $R$ for $F_C$, using \cref{p:dtm-paths}.
\end{itemize}

Observe that, given an input $(F,p)$, we can initialize the data structure in linear time.
We now show how to implement the two DTM operations. Let $X$ denote an instance of our data structure.
For $X.\alg{tree-min}(v)$, we simply defer to $D.\alg{tree-min}(x)$, where $x$ is the super-node that contains $v$. This returns the correct answer because, for every component of $F$ and its associated component of $F'$, we have $\min_{x \in V(F')} p'(x) = \min_{v \in V(F)} p(v)$.

Consider now a call $X.\alg{cut}(v)$, when $u$ is the parent of $v$ in $F$.
Perform $\alg{cut}(v)$ in~$A$, thereby identifying the corresponding canonical \alg{cut} or \alg{split} in~$F'$. We perform the respective operation in $D$. Also, if $v$ and its parent $u$ are in the same chain (i.e., if the canonical operation is a \alg{split}), we call $R.\alg{cut}(v)$. Otherwise, $F_C$ does not change, so $R$ does not need to be updated.

It remains to update the priorities of super-nodes in $D$. This is only necessary if the canonical operation is a \alg{split}, since otherwise no super-nodes are changed.
Let $y_1$ be the super-node containing $u$, and let $y_2$ be the super-node containing $v$, after the operation (see \cref{fig:canonical-cut-split}, right). We update their priorities with $D.\alg{set-priority}(y_1, R.\alg{tree-min}(u))$ and $D.\alg{set-priority}(y_2, R.\alg{tree-min}(v))$. Since now $C(y_1)$ and $C(y_2)$ are the two parts of the previous chain $C(x)$ that contain $u$ and~$v$, respectively, this correctly assigns the new priorities $p'(y_1)$ and $p'(y_2)$.
Observe that every other super-node $x \in V(F') \setminus \{y_1, y_2\}$ already existed before the operation, and its priority $p'(x)$ does not change. Thus, no further updates are necessary.

\paragraph{Running time analysis.}

Initializing each data structure needs $\fO(n)$ time, and the running time of $\texttt{tree-min}$ is clearly constant.
In $\alg{cut}(v)$, the only non-constant operations are the dynamic tree operations, so the running time is $\fO(\log n')$, where $n'$ is the number of super-nodes in the component $S'$ of $F$ containing $x$. We now show that $n' \le 2\ell-1$, where $\ell$ is the number of leaves in the \emph{initial} forest.

As we have argued before, since we contract all maximal chains, every inner node in the initial super-forest $F'$ has two children, implying that $|F'| \le 2 \ell - 1$. This implies in particular that $|E(F')| \le 2 \ell - 2$. Observe that \alg{cut}s and \alg{split}s cannot \emph{increase} the number of edges. Thus, the compression never has more than $2 \ell - 2$ edges, implying that each of its components has no more than $2 \ell - 1$ vertices.
With this, we have:

\begin{proposition}\label{p:ub-logl}
	The \pDTM{} problem can be solved with linear-time preprocessing, constant worst-case time per $\alg{tree-min}$, and $\fO(\log \ell)$ amortized time per $\alg{cut}$, where $\ell$ is the number of leaves of the initial forest $F$.
\end{proposition}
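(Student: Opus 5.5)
The plan is to assemble the statement from the construction just described, in three steps: correctness of the data structure, its per-operation costs, and a bound on the size of every component of the compression. Preprocessing consists of computing the maximal chains of the initial forest and contracting them. Both are easy in linear time, for example by a DFS that starts a new chain at every node whose parent has more than one child, and at every root. The initializations of $A$ (\cref{p:compresssion-ds}), of $R$ (\cref{p:dtm-paths}) and of $D$ (\cref{p:dt-running-times}) are all linear as well. For the priorities $p'(x)$ we take minima along the chains, which costs one linear pass.

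For correctness, the invariant is that $D$ stores $(F',p')$ for the current compression, with $p'(x)=\min_{v\in C(x)}p(v)$. A canonical \alg{cut} leaves all chains unchanged. A canonical \alg{split} replaces one chain by its two parts, and those two parts receive their new minima through $R.\alg{tree-min}$ and \alg{set-priority}. Since components of $F$ correspond exactly to components of $F'$, answering $\alg{tree-min}(v)$ by the minimum of the component of $D$ containing the super-node of $v$ is correct.

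For the running time, \alg{tree-min} uses the constant-time super-node map of $A$. We also need \alg{find-min} to be per component. For this I would use the remark in the proof of \cref{p:dt-running-times}: the minimum is stored at the root and accessed through a \pDTR{} structure. That gives worst-case constant time.

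A \alg{cut} does $\fO(1)$ work in $A$ and $R$, plus at most one \alg{cut} or \alg{split} in $D$ and two \alg{set-priority} calls. Each of these costs $\fO(1+\log n')$ amortized, where $n'$ is the size of the affected component of $F'$.

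The step that needs the most care is showing $n'\le 2\ell-1$ at all times. Initially every non-leaf super-node has at least two children. The reason is that a non-leaf super-node whose bottom node had a single child would not be maximal, since that child would extend the chain. Each leaf of $F'$ contains exactly one leaf of $F$. So a forest in which every internal node has at least two children and there are $\ell$ leaves has at most $2\ell-1$ nodes and hence at most $2\ell-2$ edges. Neither \alg{cut} nor \alg{split} ever adds edges: \alg{split} moves the edges of $x$ onto $y_1$ and $y_2$. Therefore every component, being a tree with at most $2\ell-2$ edges, has at most $2\ell-1$ nodes. Plugging this into the costs above gives $\fO(\log\ell)$ amortized per \alg{cut}.

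One subtlety is that the amortization of \cref{p:dt-running-times} is stated in terms of component sizes, while the total number of super-nodes can grow through splits. I would handle this by noting that the potential argument of the underlying link-cut tree structure is per tree, or else by charging isolated new super-nodes zero potential.
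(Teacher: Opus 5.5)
Your proposal is correct and takes the same approach as the paper. It uses the same structures ($A$, $R$, and $D$ on the maximal-chain compression), restores the priorities of the two split halves via $R.\alg{tree-min}$ and \alg{set-priority}, and bounds every super-component by $2\ell-1$ nodes: initially every internal super-node has at least two children, and \alg{cut}/\alg{split} never create edges. Your remarks on per-component \alg{find-min} and per-tree amortization make explicit some details the paper leaves implicit, without changing the route.
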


Unfortunately, \cref{p:ub-logl} does not imply universal optimality, even in the restricted case of $\Theta(n)$ operations. Indeed, consider the input tree $T$ where the root $r$ has $k+1$ child subtrees, one being a chain of $n-k-1$ nodes, and each other consisting of a single node (see \cref{fig:bad-leaf-tree}). It is easy to check that the lower bound given by \cref{p:lb-dtm-longseq} is $\Theta(n + k \log k)$. Thus, if $k \approx n / \log n$, the lower bound is linear, but the upper bound from \cref{p:ub-logl} is $\Theta(n \log \ell) = \Theta(n \log k) = \Theta(n \log n)$.

\begin{figure}
	\centering
	\includegraphics{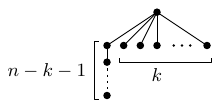}
	\caption{An initial tree where \cref{p:ub-logl} is not optimal.}\label{fig:bad-leaf-tree}
\end{figure}

A more thorough analysis reveals that the problem lies with modifications of the \emph{super-roots} and \emph{super-leaves}. Consider an operation $\alg{cut}(v)$, and call it \emph{type 1} if both $v$ and its parent $u$ are both contained in a single super-leaf or super-root. Otherwise, call it \emph{type 2}. For each type-2 operation, the corresponding canonical super-forest operation is either a \alg{cut}, or a \alg{split} of a non-root-non-leaf super-node. By \cref{p:num-cuts-splits}, there can be at most linearly many such operations. Since the super-forest has $\fO(\ell)$ nodes, and each operation takes at most $\fO(\log \ell)$ time by the analysis above, the total time of type-2 operations is $\fO( \ell \log \ell)$, which is within our budget (see \cref{p:lb-leaves}).

For type-1 operations, we need a different approach.
It turns out that we can handle \alg{split}s of \emph{super-roots} without much effort, with constant time per operation. We now sketch the solution as a warm-up. More details and the solution for super-leaves are presented in \cref{sec:trees:splay}.

The idea is to simply \emph{not} store the priority for super-roots in $D$. This means setting the priority of every super-root in $D$ to $\infty$.
Observe that we can still compute $X.\alg{tree-min}(v)$: First, if $x$ is the super-node containing $v$, then $D.\alg{tree-min}(x)$ will return the minimum node in the component $S$ containing $x$, among all nodes that are \emph{not} in the ``root chain''.
Using the auxiliary data structure, we can compute the root node~$r$ of the component containing $v$. Then $R.\alg{tree-min}(r)$ is the minimum node within the root chain. Thus, $X.\alg{tree-min}(v) = \min( D.\alg{tree-min}(x), R.\alg{tree-min}(r))$.

We can ensure that each root super-node has priority $\infty$ in $D$ by simply calling $D.\alg{set-priority}$ whenever a \alg{cut} or a type-2 \alg{split} occurs. This requires $\fO(\log \ell)$ time, which is subsumed by the other dynamic tree operations.

Finally, we argue that $X.\alg{cut}(u)$ only needs constant time if $u$ is in a root chain $C(x)$. First, observe that $D.\alg{split}(x)$ only takes constant time by \cref{p:dt-running-times}. Second, the two returned super-nodes $y_1$ and $y_2$ are both super-roots, and $\alg{split}$ initializes them with priority $p(x) = \infty$. Thus, no priority updates are necessary.

\subsection{Leaf handling with splay trees}\label{sec:trees:splay}

To deal with leaves, we need to finally consider the details of our lower bound in \cref{p:lb-dtm-longseq}. We first describe the solution informally. As we did with root nodes, we now relieve the dynamic tree $D$ of the duty of maintaining super-leaf priorities and set all of them to $\infty$. Like we argued for super-roots, this means that, for $D$, \alg{split}ting super-leafs costs only constant time, while \alg{cut}s and \alg{split}s of non-leaf non-root nodes still costs $\fO( \log \ell )$ time. By \cref{p:num-cuts-splits}, the latter can happen at most $\fO(\ell)$ times. For our $\Theta(n)$ operations, the total running time for \alg{cut}s and \alg{split}s is thus $\fO( n + \ell \log \ell)$, which is within the required bound (\cref{p:lb-leaves}).

The problem are now the \alg{tree-min} operations. The reason root nodes are easy to handle is that there is only one root per component in $F$; obviously, this is not the case for leaves. To handle super-leaves, we use a data structure based on \emph{splay trees}~\cite{SleatorTarjan1985} that maintains an ordered sequence of super-leaves for each component in $F$. The running time analysis uses a potential argument based on the \emph{access lemma}~\cite{SleatorTarjan1985} to match \cref{p:lb-dtm-longseq}.

We now proceed with describing the data structure. We begin with some technical details on how to maintain an \emph{ordered} forest. Then, we describe two new sub-structures, and finally put everything together and analyze the running time.

\paragraph{Ordered forests and compressions.}
From now on, we assume that for each node of $F$, its children are ordered in a fixed, but arbitrary way. We call $F$ an \emph{ordered} forest. This induces an ordering on the leaves of each component $T$ of $F$, which will be useful later. A sub-forest $S$ inherits the ordering in the obvious way, and so does a compression of $F$ or $S$.

When maintaining an explicit representation of an ordered forest $F$, each node maintains pointers to its left and right sibling, if either exists (i.e., each child list is represented as a \emph{linked list}). This allows $\alg{cut}$ and $\alg{split}$ operations to maintain the ordering in constant time. With this, it is easy to adapt the auxiliary structure for compressions (\cref{p:compresssion-ds}) to ordered forests.

\subsubsection{Maintaining leaf sequences}

We now describe our data structure responsible for maintaining leaf priorities. Essentially, the sequence of super-leaves of $S'$ is maintained in a \emph{splay tree}~\cite{SleatorTarjan1985}, which allows fast modification of super-leaves.
Note that $\alg{split}$ and $\alg{cut}$ may split the super-leaf sequence in a certain way. This can be done with standard algorithms, though we need an additional data structure that determines exactly \emph{where} we have to split the leaf sequence (discussed in \cref{sec:leftmost-leaves}).

We proceed with the specification of our data structure. For technical reasons, we describe it differently than the other data structures in this paper. Instead of maintaining each leaf sequence of each component of $F$ in a single data structure, we have a data structure for each individual tree. Applying a split operation produces two new data structures for the two new sequences, and destroys the old one.

\newcommand{\pSSM}{\problem{Splittable Sequence Minima}}

\begin{definition}\label{def:ssm}
	A data structure for the \pSSM{} (SSM) problem maintains a sequence of elements $x_1, x_2, \dots, x_n$ from a universe $U$, which is equipped with a priority function $p \colon U \rightarrow \Prios$.
	It supports the following operations:
	\begin{itemize}
		\item $\texttt{min}() \rightarrow x$ returns the minimum-priority element $x$ that is \texttt{visible}.
		
		\item $\texttt{replace}( x, y )$ replaces the element $x = x_i$ with some $y \in U$.
		
		\item $\texttt{split-interval}( x, y; z ) \rightarrow (Y_1,Y_2)$: Let $z \in U \cup \{\bot\}$, and assume $x = x_i$ and $y = x_j$ with $i < j$.
		Destroys $X$ and returns two new instances: $Y_1$ on $x_1, \dots, x_{i-1}, z, x_{j+1}, \dots x_n$, where $z$ is omitted if $z = \bot$, and $Y_2$ on $x_i, \dots, x_j$. (See \cref{fig:split-interval-seq}.)
	\end{itemize}
\end{definition}

\begin{figure}
	\centering
	\includegraphics{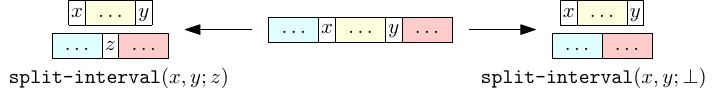}
	\caption{A \alg{split-interval} operation.}\label{fig:split-interval-seq}
\end{figure}

Observe that a single element $x$ can be removed with $X.\alg{split-interval}(x,x,\bot)$, although this operation will be quite costly. The following lemma gives the running times achieved by our splay-based implementation. For the analysis, we will define a weight function $w$ for the universe of possible values in the data structure. The running time bounds below work for an arbitrary weight function $w$, even though the algorithm does not know $w$.\footnote{It is important here that all running times are amortized; $w$ determines how the total cost is ``distributed'' among the different operations for the analysis.}

\begin{lemma}\label{p:ssm-ds}
	Let $U$ be a universe equipped with a priority function $p \colon U \rightarrow \Prios$ and a weight function $w \colon U \rightarrow \N_+$. Assume the $p$ can be computed in constant time ($w$ may be unknown and is only used in the analysis). Then, there exists a data structure for \pSSM{} over $U$, with the following amortized running times:
	\begin{itemize}
		\item $\fO(n + W)$ for preprocessing, when initialized with $n$ elements of total weight $W$.
		\item $\fO(1)$ time for $\alg{min}$.
		\item $\fO( \log \tfrac{W}{w(x)} )$ for $\alg{replace}( x, y )$, where $W$ denotes the total weight \emph{after} the operation.
		\item $\fO( \log W )$ for $\alg{split-interval}$, where $W$ denotes the total weight of all elements in both instances produced by the operation.
	\end{itemize}
\end{lemma}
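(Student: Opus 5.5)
We store the sequence $x_1,\dots,x_n$ as the in-order sequence of a splay tree~\cite{SleatorTarjan1985}. Each node stores its element, a parent pointer, and the minimum-priority element in its subtree. The instance additionally keeps a pointer to each element's node; since elements of $U$ lie in at most one instance at a time, this can be an array indexed by element. Subtree minima are maintained through rotations in $\fO(1)$ each, because the minimum of a node is the minimum of its own priority and its children's stored minima. Then \alg{min} just reads the root's subtree minimum in $\fO(1)$ worst-case time. For the analysis we use the standard splay potential $\Phi=\sum_{v}\log s(v)$, where $s(v)$ is the total $w$-weight of the subtree of $v$. The access lemma states that splaying a node $v$ costs amortized $\fO(1+\log(W/s(v)))\le \fO(1+\log (W/w(v)))$. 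Since $w\ge 1$, we have $W\ge 1$ and $\log(W/w)\ge 0$, so additive constants can be absorbed in the per-operation bounds; this needs some care and is addressed below.

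For preprocessing, we build a balanced tree on the $n$ elements in $\fO(n)$ time. Its initial potential is at most $\sum_v \log W \le n\log W$, which is too large compared with $\fO(n+W)$ only if $W$ is small. But $\log W\le W$ and $n\le W$ since every weight is at least $1$, so the potential is at most $n\log W$. This is still too large in general, but we can simply bound it by $\sum_v \log s(v)\le \sum_v s(v)$. Alternatively, we can build the tree as a path and then splay, which gives potential $\sum_v\log s(v)$ with $s$ along a path. It is cleaner to use the trivial bound $\log s(v)\le s(v)\le W$, which again gives $nW$. The right move is to build the tree weight-balanced: we choose the root as the weighted median and recurse. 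In the resulting tree, a node at depth $d$ has $s(v)\le W/2^{d-1}$ in an appropriate sense, and the potential sums to $\fO(n+W)$. I expect this accounting to be routine and would settle it by building a weight-balanced tree in linear time from prefix sums.

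For $\alg{replace}(x,y)$, we splay $x$ to the root at amortized cost $\fO(1+\log(W_{\mathrm{old}}/w(x)))$. We then overwrite the element by $y$ and recompute the root's minimum. This changes only the root's $s$, to $W$ (the weight after the operation), so the potential increases by $\log(W/W_{\mathrm{old}})$ if the weight grew. Adding the two terms gives $\log(W/w(x))$ up to constants, because $W_{\mathrm{old}}\ge w(x)$. For $\alg{split-interval}(x,y;z)$, we splay $x$ and cut off its left subtree $A$, then splay $y$ in the remainder and cut off its right subtree $B$. The middle part is $Y_2$. We then form $Y_1$ by creating a root node $z$ with left child $A$ and right child $B$, or by joining $A$ and $B$ via splaying the maximum of $A$ when $z=\bot$. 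Each splay costs amortized $\fO(\log W)$. Cutting subtrees only decreases $s$-values. The new root of $Y_1$ adds at most $\log W$, and any joined root likewise adds at most $\log W$. The total is therefore $\fO(\log W)$, where $W$ bounds the weight of all involved elements. Because the pointer array for $Y_2$ and $Y_1$ is shared, the instances need not copy anything.

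The main obstacle is the constant additive term $\fO(1)$ in the access lemma, together with the interaction of potential across instances. Since $\log(W/w(x))$ can be $0$ (a single-element instance), the claimed bound without $+1$ only works if we read $\fO(\cdot)$ as allowing an additive constant, or if we charge constant costs elsewhere. I would state the bounds as $\fO(1+\log\cdots)$ and note that this is what is used later. The total potential is the sum over live instances, so destroying an instance and creating two is accounted for exactly as above.
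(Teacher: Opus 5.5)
Your handling of \alg{min}, \alg{replace} and \alg{split-interval} is sound. Splaying $x$ and $y$ themselves instead of their predecessor and successor, as the paper does, works just as well. Your remark about the additive constant is also accurate: the paper's own calculation gives $\fO(1+\log\tfrac{W}{w(x)})$, and that is the form used later.

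The genuine gap is the preprocessing step. You build the initial splay tree from weighted medians computed via prefix sums of $w$. But the lemma stipulates that $w$ may be unknown to the data structure and serves only the analysis. This matters: in the refined analysis for short operation sequences, the weights $w'(v)=\tfrac nk\,|V(T^0_v)\cap S|$ depend on the set $S$ of nodes that will be cut in the future. So no initialization that inspects the weights can be implemented. You also leave the potential bound for the weight-balanced tree unproved.

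The missing observation is that the weight-oblivious balanced tree you discarded already has potential $\fO(W)$ for \emph{every} weight function with integer weights at least $1$. Your per-node estimate $\log s(v)\le\log W$ is simply too crude. The paper uses a complete binary tree and argues as follows:
\begin{itemize}
\item With all weights equal to $1$, the potential is at most $\sum_i 2^i\log\tfrac{n}{2^i}=\fO(n)$.
\item Raising the weight of a node $v$ by $x$ changes only the ranks of $v$ and its ancestors $a_0=v,a_1,\dots$.
\item Since $T_{a_i}$ contains at least $2^i$ nodes of weight at least $1$, the potential grows by at most $\sum_{i\ge0}\log(1+x/2^i)\le 2x\log e$.
\item Raising all weights from $1$ to their true values therefore gives $\Phi\le\fO(n)+\fO(W-n)=\fO(W)$.
\end{itemize}
A per-height concavity argument, using that nodes of equal height have disjoint subtrees, yields the same $\fO(n+n\log\tfrac Wn)=\fO(W)$ bound. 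With this replacement for your initialization, the rest of your proof goes through.
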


We now review splay trees, which are the base for our implementation of SSM.
Originally~\cite{SleatorTarjan1985}, splay trees were presented as an implementation of \emph{ordered dictionaries}. They maintain an ordered list of $n$ keys, or key-value pairs, in a binary search tree (BST).
As usual, the keys are stored in the tree in \emph{symmetric order}, that is, an in-order traversal of the BST visits the keys in the same order as they appear in the maintained list.

Other standard BST data structures, such as AVL trees~\cite{Adelson-VelskyLandis1962},
keep the height of the BST at $\fO(\log n)$, to ensure $\fO(\log n)$ worst-case search time.
In contrast, splay trees do not provide such a guarantee. Instead, after every search, they \emph{rebalance} themselves in order to speed up future searches. While only an \emph{amortized} $\fO(\log n)$ running time per search is ensured by splay trees, they have the upside of a simple implementation without additional bookkeeping, and also can be shown to perform \emph{better} then $\fO(\log n)$ per operation when the search sequence has a certain structure.

The central operation for splay trees is the eponymous $\texttt{splay}(x)$, which moves a node $x$ to the root of the splay tree through a sequence of \emph{rotations}. It is performed after every search, where $x$ is the found node. \texttt{splay} is also useful to implement insertion and deletion of keys. For definitions of rotations and the \alg{splay} operation, we refer to Sleator and Tarjan~\cite{SleatorTarjan1985}. For us, the only important detail is the rotations (and thus \alg{splay}) do not change the symmetric order of the nodes.

Here, we use splay trees in a different way; we never need to search for keys, but instead use it as a dynamic decomposition of a sequence as in \cref{def:ssm}. The nodes will maintain minima of contiguous subsequences, and \alg{splay} will be useful to implement the operations \alg{replace} and \alg{split-interval}. There are other examples of a similar ``structural'' use of splay trees, most prominently the dynamic forest data structure of Sleator and Tarjan~\cite{SleatorTarjan1983,SleatorTarjan1985}.

Here, we consider a splay tree to be a binary tree $T$ with three allowed operations:
\begin{itemize}
	\item $\alg{splay}(x)$, which brings $x$ to the root of $T$ with a series of rotations. The exact way this is done is not relevant here, see \cite{SleatorTarjan1985} for details.
	\item Attach the root $r'$ of another splay tree $T'$ as a left or right child to the root $r$ of~$T$.
	\item Detach a child from the root $r$ of $T$, producing a new splay tree.
\end{itemize}

\paragraph{Analysis of splay trees.}
For the analysis of our data structure, we will consider a \emph{set} of splay trees, called a \emph{splay forest}. Each node $x$ comes from a universe $U$ as in \cref{p:ssm-ds}, and thus has an associated weight $w(x) \in \N_+$. The weight of a subtree $T_x$, denoted $w(T_x)$, is the sum of weights of its nodes, i.e., $w(T_x) = \sum_{y \in V(T_x)} w(y)$. The \emph{rank} of a node is $r(x) = \log w(T_x)$, and the \emph{potential} of a splay tree $T$ is $\Phi(T) = \sum_{x \in V(T)} r(x)$. The potential of a splay forest is the sum of the splay tree potentials.

Consider a call to $\alg{splay}(x)$ that changes the splay tree containing $x$ from $T$ to $T'$. The \emph{amortized running time} of $\alg{splay}(x)$ is the actual running time plus the potential change $\Phi(T') - \Phi(T)$.
Sleator and Tarjan's well-known \emph{access lemma}~\cite[lemma 1]{SleatorTarjan1985} states that the amortized running time is $\fO( 1 + \log \frac{W}{w(x)} )$, where $W$ is the sum of weights of all tree nodes.\footnote{The potential needs to be scaled by a constant factor depending on the actual running time of a single \emph{rotation}; we omit this detail for simplicity.}
The overall running time of a sequence of $\texttt{splay}$ operations is then at most the total amortized running time plus the potential $\Phi(F_0)$ of the initial splay forest $F_0$.

As mentioned above, we can also attach a root node as a child to another node, and detach a child from its parent. Both these operations change the potential. We will take this into account when analyzing the amortized running time of the SSM operations.

\paragraph{Implementing SSM.}
An instance of our SSM implementation consists of a single splay tree $T$, containing the elements (keys) $x_1, x_2, \dots, x_n \in U$ in symmetric order. We assume that we can find the predecessor $x_{i-1}$ and the successor $x_{i+1}$ of an element $x_i$ in constant time (should either exist). This can be accomplished by explicitly maintaining pointers within~$T$.

For each node $x \in V(T)$, we additionally maintain the minimum priority $q_x$ of all nodes in $T_x$, i.e.:
\[ q_x = \min_{y \in V(T_x)} p(y). \]
Whenever we perform a rotation, attach a child to the root, or detach a child from the root, we only need to recompute $q_x$ for at most two nodes. Clearly, $q_x$ can be recomputed from the values $q_y$ of its children in constant time. Thus, the time to update $q_x$ is dominated by other operations, and will be ignored from now on.

We now proceed with the implementation and running time analysis of each SSM operation. In the following, $W$ denotes the sum of all node weights before the operation, and $W'$ denotes the sum of all node weights after the operation, in all produced trees.
\begin{itemize}
	\item Initialization is a little subtle, since we need to be careful about the initial potential of the splay tree, which will be charged to the amortized running time. In the literature, usually no assumption is made on the initial tree. Since the rank of each node is at most $\log W'$, the total potential of any tree can be upper bounded by $n \log W'$.
	However, when starting with a \emph{complete} binary tree, the initial potential is $\fO(W')$.\footnote{Recall that we assume that all weights are at least one. Sleator and Tarjan~\cite{SleatorTarjan1985} allow arbitrary positive weights; in that case, the $\fO(W')$ bound does not necessarily hold.} We defer the proof to \cref{app:tree-pot}.
	
	Computing each $q_x$ can be done in $\fO(n)$ time with a bottom-up traversal. The total amortized running time is thus $\fO( n + W') = \fO(W')$, since weights are positive integers.
	
	\item \texttt{min} simply returns the value $q_x$ of the root $x$ of the splay tree, without changing anything. This is the minimum of all node priorities by definition of $q_x$.
	
	\item $\texttt{replace}(x, y)$ first \alg{splay}s $x$ to the root, and then simply replaces $x$ with $y$. Again, we can compute $q_y$ in constant time. The replacement changes total weight in the tree from $W$ to $W' = W + w(y) - w(x)$, and thus increases the potential by $\log \tfrac{W'}{W}$. The overall amortized running time is thus $\fO( 1 + \log \frac{W}{w(x)} + \log \frac{W'}{W}) = \fO( 1 + \log \frac{W'}{w(x)})$.

	\begin{figure}
		\centering
		\includegraphics{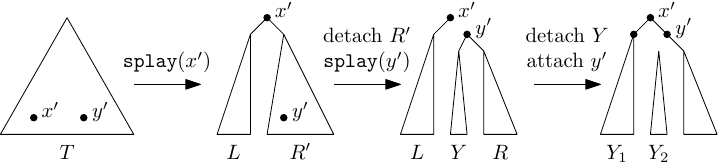}
		\caption{A call $\alg{split-interval}(x, y; \bot)$.}\label{fig:split-interval}
	\end{figure}
	
	\item $\texttt{split-interval}( x, y; z )$ first computes the predecessor $x'$ of $x$ and the successor $y'$ of $y$. We assume both exist in the following; the procedure is easily adapted otherwise.
	
	Start by \texttt{splay}ing $x'$ (see \cref{fig:split-interval}) and detach the right subtree $R'$ of $x'$. Then \alg{splay} $y'	$ (in $R'$) and detach its left subtree $Y$. We have now split our data structure into three trees $L, Y, R$, where $Y$ contains the nodes strictly between $x'$ and $y'$, i.e., the interval from $x$ to $y$. Simply return $Y$ as $Y_2$. If $z = \bot$, attach $R$ (with root $y'$) as the right subtree to the root $x$ of $L$, and return the result as $Y_1$. If $z \neq \bot$, create a new node $z$, attach $L$ and $R$ as left and right subtrees, and return the result as~$Y_1$.
	
	Each \texttt{splay} runs in $\fO( 1 + \log W )$ amortized time, since all weights are positive integers. The subtree attachment and potential creation of $z$ increases the potential by at most $\log W$ if $z = \bot$, or by at most $\log W + \log (W + w(z)) \le 2 \log W'$ otherwise. Also, note that predecessor and successor pointers only change for $x$, $x'$, $y$, $y'$, and (potentially) $z$, and can be updated in constant time.
\end{itemize}

This proves \cref{p:ssm-ds}.

\subsubsection{Maintaining leftmost and rightmost leaves}\label{sec:leftmost-leaves}

When performing a $\alg{cut}$ or $\alg{split}$ in the super-forest, we may need to split a leaf sequence with a $\alg{split-interval}(x, y; z)$ operation on the \pSSM{} data structure. To do this, we need to identify the appropriate leaves $x$ and $y$. We use the following data structure for this task.

\newcommand{\pEL}{\problem{Extremal Leaves}}

\begin{definition}\label{def:el-ds}
	An \pEL{} data structure maintains an ordered rooted forest $F$ under the operations \alg{cut} and \alg{split}, and supports the following query:
	\begin{itemize}
		\item $\alg{extremal-leaves}(v) \rightarrow (u_1,u_2)$ returns the leftmost leaf $u_1$ and the rightmost leaf $u_2$ in the subtree $T_v$.
	\end{itemize}
\end{definition}

\begin{figure}
	\centering
	\includegraphics{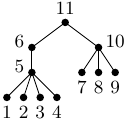}
	\caption{A left-to-right post-order numbering of an ordered rooted tree.}\label{fig:post-order}
\end{figure}

\begin{lemma}\label{p:ds-ext-leaves}
	An \pEL{} data structure on a rooted forest with initially $n$ nodes can be implemented with $\fO(n)$ preprocessing time,
	$\fO(1)$ amortized running time for $\alg{split}(x)$ if $x$ is a root or leaf, and $\fO(\log n)$ amortized running time for each other operation.
\end{lemma}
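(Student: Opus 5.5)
The plan is to use the left-to-right post-order numbering of \cref{fig:post-order} as an implicit representation of subtrees. The key fact is this: in an ordered tree with a fixed post-order, the leaves of $T_v$ occupy a contiguous interval. Moreover, the leftmost leaf of $T_v$ is reached by repeatedly following first children from $v$, and the rightmost leaf by repeatedly following last children. Since \alg{cut} and \alg{split} never add edges, any subtree $T_v$ of the current forest is a connected subset of some subtree of the initial forest. The difficulty is that descendants can be removed, so the extremal leaves change over time. To track them, I would maintain them with a dynamic forest in the style of \cref{p:dt-running-times}, keyed on post-order numbers.

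Concretely, assign every node $u$ its initial post-order number $\pi(u)$. This ordering stays consistent with the current forest: the current left-to-right order of leaves within any tree is the restriction of the initial order. Nodes created by \alg{split} inherit numbers as follows. For $\alg{split}(v)=(u_1,u_2)$, set $\pi(u_2)=\pi(v)$, and let $u_1$, which is a new leaf, take any number strictly between the numbers of the leaves left and right of it. The latter can be realized by giving $u_1$ the pair $(\pi(v),0)$ ordered lexicographically, since all descendants of $v$ precede $v$ in post-order. Then store a dynamic forest $D$ on $F$ whose priorities are $\pi$ for leaves and $\infty$ otherwise. We need subtree-min and subtree-max queries, which link-cut or top trees~\cite{AlstrupEtAl2005} support in $\fO(\log n)$ amortized time. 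A query $\alg{extremal-leaves}(v)$ returns the leaves of minimum and maximum $\pi$ in $T_v$. Correctness rests on the observation that the current leaves of $T_v$, sorted by $\pi$, are exactly the left-to-right leaf order of $T_v$.

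The operations then go as follows. For \alg{cut}, perform it in $D$. If the former parent becomes a leaf (checked via the explicit child lists), update its priority from $\infty$ to its $\pi$ value. This costs $\fO(\log n)$. For \alg{split} of a general node, the update is similar: $u_1$ becomes a leaf with its new number, and $u_2$ keeps $v$'s children. For \alg{split} of a root, one new node is isolated and nothing else in $D$ changes except possibly the root's own leaf status. For \alg{split} of a leaf, the new node is isolated with the same number. Both cases take $\fO(1)$ via \cref{p:dt-running-times}, since the only change is an isolated node, and the isolated node needs only an $\fO(1)$ set-priority.

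The main obstacle is twofold. First, the dynamic forest must answer min and max queries over a \emph{subtree} $T_v$, not over a whole tree, while still keeping $\fO(1)$ root and leaf splits. Second, the running-time bound must hold with $n$ equal to the initial size, even though splits create new nodes. For the first point, I would use Euler-tour trees or top trees. For the second, the relevant bound is that every tree has at most $n$ edges, hence $\fO(n)$ nodes, so $\log n$ suffices. If subtree queries in top trees feel too heavy, an alternative is to keep, for each node, first-child and last-child pointers and answer queries by descending. That is only efficient if chains are compressed, so I would prefer the dynamic-tree approach.
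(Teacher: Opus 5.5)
Your approach is the paper's (post-order labels, with a dynamic forest finding extremal labels in $T_v$), but the priority scheme you chose breaks the rightmost-leaf query. You give every non-leaf priority $\infty$ so that a subtree-min sees only leaves. The same priorities then make a subtree-max over $T_v$ return a non-leaf, e.g.\ $v$ itself, whenever $T_v$ has more than one node, instead of the rightmost leaf. To repair it you need a second aggregate in which non-leaves carry $-\infty$ (or a second structure), with the leaf-status updates done in both.

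The masking is also unnecessary, and dropping it gives the paper's simpler proof. In a left-to-right post-order labelling, the minimum-label node of $T_v$ is automatically a leaf, since any child has a smaller label. Among leaves, labels follow left-to-right order. So the plain minimum over $T_v$ is already the leftmost leaf. The paper gets the rightmost leaf from a right-to-left post-order labelling, i.e.\ a second copy of the same structure, with no leaf-status bookkeeping at all.

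The paper also avoids subtree-aggregate dynamic trees (note that plain link-cut trees do not support subtree minima). It answers the query on $T_v$ by temporarily cutting $v$ from its parent $u$, taking the tree minimum of $v$'s component, and relinking $v$ to $u$. So the dynamic forest of \cref{p:dt-running-times} plus \alg{link} suffices, and the $\fO(1)$ root/leaf splits carry over directly.

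Finally, after $\alg{split}(v)$ the new leaf $u_1$ can simply keep $v$'s label. All of $T_v$ has left that tree, and $v$'s label is larger than every label to its left and smaller than every label to its right or above it. Your lexicographic pair is harmless but not needed.
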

\begin{proof}
	We show how to maintain \emph{leftmost} leaves; rightmost leaves can be handled in a symmetric way. At the start, assign each node $v$ a unique label $\ell_v \in [n]$ via a left-to-right post-order traversal, in linear time (see \cref{fig:post-order}).
	More precisely, for two nodes $u, v$ we have $\ell_u < \ell_v$ if and only if either $u$ is a descendant of $v$, or, if $a$ is the lowest common ancestor of $u$ and $v$, the child subtree $T_c$ of $a$ containing $u$ is to the left of the child subtree $T_{c'}$ of $a$ containing $v$.
	
	We maintain nodes and their labels in a dynamic forest data structure $D$ (see \cref{def:dt}), with labels as priorities. Recall that this data structure supports $\alg{cut}$ and $\alg{split}$ in the required time (\cref{p:dt-running-times}). It additionally supports an operation $D.\alg{link}(u,v)$ that makes $u$ a child of $v$ in $\fO(\log n)$ time (see the proof of \cref{p:dt-running-times}).
	
	Consider a forest $F$ obtained after a series of $\alg{cut}$ and $\alg{split}$ operations, and take a component $T$ of $F$. Observe that the labeling of $T$ still satisfies the condition above. This implies that the leftmost leaf in a rooted subtree $T_v$ is precisely the minimum-label node in $T_v$. Thus, we can find the leftmost leaf in $T_v$ by calling first $D.\alg{cut}(v)$, then $D.\alg{tree-min}(v)$, and then $D.\alg{link}(v,u)$, where $u$ is the parent of $v$. If $v$ is a root, we only need to call $D.\alg{tree-min}(v)$. In any case, the running time is $\fO(\log n)$.
\end{proof}

\subsubsection{Finishing the main data structure}\label{sec:ds:main-finish}

We now give the final, universally optimal data structure. It maintains an ordered compresssion $(F',C,p')$ of the forest $(F,p)$, initialized by contracting all maximal chains,
and consists of:
\begin{itemize}
	\item The auxiliary structure $A$ of \cref{p:compresssion-ds} for $F$ and $(F',C)$.
	
	\item A Path-DTM data structure $R$ for $F_C$, using \cref{p:dtm-paths}.
	
	\item A dynamic forest $D$ that maintains $(F',p')$, except that the priority of every super-root and every super-leaf is $\infty$ in $D$. Note that $D$ does \emph{not} maintain the child order of nodes $F'$.
	
	\item An \pEL{} data structure $E$ for $F'$.
	
	\item For each component $S'$ of $F'$ with $|V(S')| \ge 2$, a \pSSM{} data structure containing all super-leaves of $S'$, in order.
	The universe $U$ is the set of super-leaves that are created at any point in the algorithm. The priority of each super-leaf $x$ is defined as $p'(x)$, and its weight (only for the analysis) is defined as $w(x) = \max_{v \in C(x)} |T^0_v|$, where $T^0$ is the initial tree.
	
	We denote this data structure by $L_x$, where $x$ is the super-root of $S'$. Each super-root $x$ in $F'$ maintains a pointer to $L_x$.
\end{itemize}

Note that the priority of every element in any $L_x$ can be computed in constant time (as required by \cref{p:ssm-ds}). Indeed, we have $p'(x) = R.\alg{min}(v)$, where $v$ is the top node in $x$, which can be found in constant time using $A$.

Let $X$ be an instance of the main data structure.
We now show how to implement the two operations $X.\alg{tree-min}$ and $X.\alg{cut}$.

For $X.\alg{tree-min}(v)$, we first determine the root $r$ of the component $S$ of $F$ containing~$v$, using $A$. Then, we compute $R.\alg{tree-min}(r)$, which reports the minimum node in the root chain of $S$.
Second, we determine the super-node $x$ such that $r$ is contained in $x$, again using $A$. Note that $x$ is the super-root of the super-tree $S'$ corresponding to~$S$. Thus, it holds a pointer to the \pSSM{} data structure $L_x$ containing the leaves of $S'$, and we can determine their minimum with $L_x.\alg{min}()$.
Finally, we call $D.\alg{tree-min}(x)$.

We have now covered the super-root, super-leaves, and other super-nodes of $S'$. Thus, taking the minimum of these three will return the minimum node in $S$.
If $|V(S')| = 1$, then $L_x$ does not exist, but computing $R.\alg{tree-min}(r)$ will already give us the minimum node in~$S$.

We proceed with $X.\alg{cut}(v)$. Let $y$ be the super-nodes containing $v$, let $S$ be the component of $F$ containing $v$, let $S'$ be the corresponding component of $F'$, and let $x$ be the super-root of $S'$ (all before the operation). We can determine $x$ and $y$ using $A$, as above.

We update $A$ and $R$ by directly calling $A.\alg{cut}(v)$ and $R.\alg{cut}(v)$, the latter only if $v$ and its parent are in the same chain. %
The call $A.\alg{cut}(v)$ will result in a canonical \alg{cut} or \alg{split} in $F'$.

Now apply the canonical operation to $E$ and $D$. This may turn up to two super-nodes into roots or leaves; we set the priority of these nodes to $\infty$ in $D$ using $D.\alg{set-priority}$ (if the priority was not already $\infty$).

We now show how to update the SSM data structures. Note that the only affected such structure is $L_x$. We consider several different cases:
\begin{enumerate}[(a)]
	\item Suppose $|S'| = 1$: Then $L_x$ does not exist (since we only maintain $L_x$ for super-components of size at least two). The canonical operation must be a $\alg{split}(x)$, and it creates two super-components consisting of single nodes. Thus, no SSM structures need to be created.
	
	\begin{figure}
		\centering
		\includegraphics{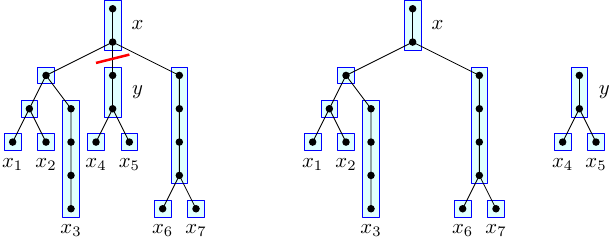}
		\caption{Updating $L_x$ in case \ref{item:update-ssm-cut}, a canonical \alg{cut}, with $\alg{split-interval}(x_4, x_5; \bot)$.}\label{fig:update-ssm-cut}
	\end{figure}
	
	\item\label{item:update-ssm-cut} Suppose the canonical operation is $\alg{cut}(y)$ (see \cref{fig:update-ssm-cut}). In this case, we need to split $L_x$ into two parts, one containing all leaves that are descendants of $y$, and one part containing all remaining leaves. To do this, we first compute $(z_1,z_2) \gets E.\alg{extremal-leaves}(y)$, then call $(L_1, L_2) \gets L_x.\alg{split-interval}(z_1, z_2; \bot)$, and finally set $L_x \gets L_1$ and $L_y \gets L_2$. If $y$ is a super-leaf (and thus now forms component of size 1), we instead discard $L_2$.
	
	\item Suppose the canonical operation is $\alg{split}(y)$, where $y$ is a super-root. Let $(y_1, y_2)$ be the two super-nodes resulting from the split. Note that $y_1$ is the only super-node in its super-component, so we do not compute a SSM data structure for it. For $y_2$, we can simply set $L_{y_2} \gets L_y$; the leaf sequence does not change.
	
	\item Suppose the canonical operation is $\alg{split}(y)$, where $y$ is a super-leaf. Let $(y_1, y_2)$ be the two super-nodes resulting from the split. Now $y_2$ is the only super-node in its super-component, so we do not compute a SSM data structure for it. On the other hand, we call $L_x.\alg{replace}(y, y_1)$.
	
	\begin{figure}
		\centering
		\includegraphics{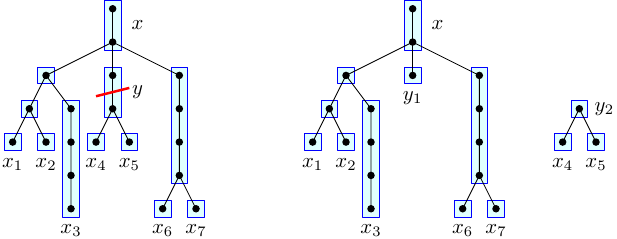}
		\caption{Updating $L_x$ in case \ref{item:update-ssm-split}, a canonical \alg{split} on a non-leaf-non-root super-node, with $\alg{split-interval}(x_3, x_4; y_1)$.}\label{fig:update-ssm-split}
	\end{figure}
	
	\item\label{item:update-ssm-split} Finally, suppose the canonical operation is $\alg{split}(y)$, where $y$ is neither a super-leaf nor a super-root (see \cref{fig:update-ssm-split}). Let $(y_1, y_2)$ two super-nodes resulting from the split. We proceed similarly to the $\alg{cut}$ case: Compute $(z_1,z_2) \gets E.\alg{extremal-leaves}(y)$, call $(L_1, L_2) \gets L_x.\alg{split-interval}(z_1, z_2; y_1)$, and then set $L_x \gets L_1$ and $L_{y_2} \gets L_2$.
\end{enumerate}

\paragraph{Running time.}
In the following, let $T^0$ denote the initial tree, let $n = |T^0|$.
Recall that we let the universe $U$ for our SSM structures be the set of all super-nodes ever created during the operation of the algorithm, and for each $x \in U$, we let $w(x) = \max_{v \in C(x)} |T^0_v|$. (Note that we consider super-nodes to be immutable for this purpose; every $\alg{split}$ destroys one super-node and creates two new ones. In particular, $C(x)$ is the unique chain associated to $x$ during its lifetime.)

Let $W^*$ be the maximum total weight of all super-nodes contained in a single SSM structure $L_x$, at any point.  The following lemma will be useful:

\begin{lemma}\label{p:rt-total-weights}
	$W^* \le |T^0|$.
\end{lemma}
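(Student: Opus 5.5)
The plan is to show that the super-leaves in any single SSM structure $L_x$ have pairwise disjoint witness subtrees in $T^0$. Since $T^0$ has $|T^0|$ nodes, this gives the bound.

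For each super-leaf $y$ in $L_x$, let $t_y$ be the top node of its chain $C(y)$. Since $C(y)$ is a path going downward from $t_y$, every $v \in C(y)$ is a descendant of $t_y$ in $T^0$, so $T^0_v \subseteq T^0_{t_y}$. Hence $w(y) = |T^0_{t_y}|$. Note that $T^0$ is the \emph{initial} tree, so these subtrees are static and refer to the original ancestry.

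The key step is to show that for two distinct super-leaves $y, y'$ in the same component $S'$ of $F'$, neither $t_y$ nor $t_{y'}$ is an ancestor of the other in $T^0$.

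\begin{itemize}
  \item Operations only delete edges, and splits act on the original nodes of $F$. So the current component $S$ of $F$ corresponding to $S'$ is a connected subgraph of $T^0$ whose edges are original parent edges, and ancestry within $S$ agrees with ancestry in $T^0$.
  \item The nodes $t_y$ and $t_{y'}$ both lie in $S$. If $t_y$ were a $T^0$-ancestor of $t_{y'}$, then, since $S$ is connected and contains both, the $T^0$-path between them lies in $S$. So $t_y$ would be an $S$-ancestor of $t_{y'}$.
  \item Then $y$ would be an ancestor of $y'$ in $F'$, or $y = y'$. Since $y$ is a super-leaf with no children in $F'$, both are impossible.
\end{itemize}

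Hence the subtrees $T^0_{t_y}$ for $y$ in $L_x$ are pairwise disjoint subsets of $V(T^0)$, and
\[ \sum_{y} w(y) = \sum_y |T^0_{t_y}| \le |T^0|. \]

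The main care point is justifying that ancestry in the current $F$ matches ancestry in $T^0$ for nodes of a common component. This needs two facts.

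\begin{itemize}
  \item Nodes created by $F'$-splits do not matter, because $F$ itself only undergoes \alg{cut}s. The \alg{split}s happen only in $F'$, while $F$ always has vertex set $V(T^0)$.
  \item Each $y \in L_x$ is a current super-leaf of the component $S'$, and its chain lies in the corresponding component $S$.
\end{itemize}

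Both facts follow directly from the description of how $A$ maintains the compression.
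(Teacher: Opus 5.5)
Your proposal is correct and follows essentially the same approach as the paper. The paper likewise takes the maximum-weight node of each super-leaf chain (which, as you note, is its top node), observes that these nodes are pairwise independent in $T^0$, and concludes that their subtrees are disjoint, so the weights sum to at most $|T^0|$. You additionally spell out the independence step that the paper dismisses as ``easy to see'', and your argument for it (the current component is a connected subtree of $T^0$, ancestry is preserved under chain contraction, and a super-leaf has no children in $F'$) is sound.
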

\begin{proof}
	Consider some component $T'$ of the super-forest at some point during the algorithm. For each super-leaf $x$ of $T'$, let $v_x$ be the maximum-weight node in $C(x)$, so that $w(x) = w(v_x) = |T^0_{v_x}|$. Let $Q$ be the set of all nodes $v_x$.
	It is easy to see that all nodes in $Q$ are \emph{independent} in the initial tree $T^0$; that is, no $v \in Q$ is an ancestor of another $v' \in Q$. Thus, taking all descendants of all $v \in Q$ covers each node in $T^0$ at most once, so $\sum_{v \in Q} |T^0_v| \le |T^0|$. Since this argument applies to every component $T'$ at any point during the algorithm, we have $W^* \le |T^0|$.
\end{proof}

We are now ready to analyze the running time of each operation.

\begin{lemma}\label{p:ds-init-time}
	Initialization takes $\fO(n)$ amortized time.
\end{lemma}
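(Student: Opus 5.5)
The plan is to go through the components of the main data structure one at a time. For each, we bound its initialization cost by $\fO(n)$, where $n = |T^0|$. The only component whose cost is not simply linear by an earlier lemma is the splay-based \pSSM{} structure. There, the amortized initialization cost includes the initial potential, and we must check that this potential is linear.

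\textbf{Step 1 (the compression).} Computing the maximal-chain compression $(F',C,p')$ of $T^0$ takes linear time. A single traversal identifies the maximal chains; each chain starts at a node that is the root or whose parent has at least two children. The value $p'(x)$ of each super-node is a minimum over its chain, so computing all of them costs $\sum_x |C(x)| = n$ comparisons. The compression has at most $2\ell - 1 \le 2n$ super-nodes.

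\textbf{Step 2 (the linear-time structures).} By the respective earlier results, each of the following is built in $\fO(n)$ time:
\begin{itemize}
  \item the auxiliary structure $A$ (\cref{p:compresssion-ds});
  \item the Path-DTM structure $R$ on $F_C$ (\cref{p:dtm-paths});
  \item the dynamic forest $D$ on $(F',p')$ (\cref{p:dt-running-times}), after setting the priority of the super-root and every super-leaf to $\infty$ directly at initialization, so that no \alg{set-priority} calls are needed;
  \item the \pEL{} structure $E$ on $F'$ (\cref{p:ds-ext-leaves}).
\end{itemize}
The super-leaf sequence, in left-to-right order, is obtained from one traversal of $F'$.

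\textbf{Step 3 (the \pSSM{} structure, the main point).} Initially there is one component. If $|F'| \ge 2$, we build $L_x$ on its super-leaves. By \cref{p:ssm-ds}, this costs $\fO(k + W)$ amortized time, where $k$ is the number of super-leaves and $W$ their total weight, with $w(x) = \max_{v\in C(x)}|T^0_v|$. Clearly $k \le n$. For the weight, apply \cref{p:rt-total-weights} to the initial state (its proof applies at every point in time): the maximal-weight nodes of the super-leaf chains are pairwise independent in $T^0$, so $W \le W^* \le n$. In fact each initial super-leaf chain ends at a leaf of $T^0$, so its top node has subtree size $|C(x)|$ and $W = \sum_x |C(x)| \le n$ directly.

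Each priority needed during the construction is $p'(x)$, available in $\fO(1)$ from Step 1, or equivalently via $R$ and $A$ as noted in the paper. Summing all components gives $\fO(n)$ amortized time. The main obstacle is only this potential accounting, which is precisely why \cref{p:ssm-ds} charges $\fO(n+W)$ rather than $n\log W$; it is resolved by the weight bound $W \le n$.
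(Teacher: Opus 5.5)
Your proposal is correct and follows the paper's proof. Every component except the splay-based \pSSM{} structure is built in linear time by the earlier lemmas, and the $\fO(n+W)$ amortized preprocessing cost of $L_x$ is linear because $W \le W^* \le n$ by \cref{p:rt-total-weights}; your direct observation that $w(x) = |C(x)|$ for each initial super-leaf is a valid extra check but not needed.
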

\begin{proof}
	Initialization is $\fO(n)$ for each structure, except for the SSM structure $L_x$, which requires $\fO(n + W^*)$ time. By \cref{p:rt-total-weights}, this is $\fO(n)$.
\end{proof}

\begin{lemma}\label{p:ds-tree-min-time}
	$X.\alg{tree-min}(v)$ takes constant amortized time.
\end{lemma}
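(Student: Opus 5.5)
The plan is to walk through the implementation of $X.\alg{tree-min}(v)$ described above and check that each of its constituent calls takes $\fO(1)$ amortized time. None of them modifies any structure, so no potential changes and no amortization subtleties arise. The call consists of five steps.

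\begin{enumerate}[(1)]
	\item Compute the root $r$ of the component of $F$ containing $v$ using the auxiliary structure $A$. This is a constant-time mapping by \cref{p:compresssion-ds}.
	\item Evaluate $R.\alg{tree-min}(r)$ on the Path-DTM structure. This takes $\fO(1)$ worst-case time by \cref{p:dtm-paths}.
	\item Find the super-node $x$ containing $r$, again via the constant-time mapping of $A$.
	\item Compute $L_x.\alg{min}()$, skipping this step if $|V(S')| = 1$ (checkable in constant time from the explicit representation of $F'$). By \cref{p:ssm-ds}, \alg{min} takes $\fO(1)$ amortized time. In the splay implementation it merely reads $q$ at the splay-tree root without rotations, so the splay-forest potential is unchanged.
	\item Compute $D.\alg{tree-min}(x)$, the minimum of the component of $D$ containing $x$.
\end{enumerate}

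Step (5) is the only point needing care. \Cref{def:dt,p:dt-running-times} only state that the global $\alg{find-min}$ takes $\fO(1)$ time, not a per-component query. I would resolve this by appealing to the proof of \cref{p:dt-running-times}. There, the minimum $q_{r'}$ of each component is stored explicitly at its root $r'$, and that root is found via a \pDTR{} structure in $\fO(1)$ time. Hence the per-component minimum query is a constant-time lookup, and all maintenance cost for $q_{r'}$ is already charged to \alg{cut}, \alg{split} and \alg{set-priority}.

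Finally, taking the minimum of the up to three candidate nodes needs at most two comparisons. In total, $X.\alg{tree-min}(v)$ runs in $\fO(1)$ amortized time.
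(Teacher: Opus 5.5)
Your proof is correct and takes the same approach as the paper. The paper simply notes that $X.\alg{tree-min}(v)$ makes constant-time queries to $R$, $L_x$ and $D$ and then takes the minimum of three values. You also point out that \cref{def:dt} only lists a global \alg{find-min}, and you justify the per-component query $D.\alg{tree-min}(x)$ via the root-stored minima $q_r$ and the \pDTR{} structure from the proof of \cref{p:dt-running-times}; this fills in a detail the paper leaves implicit.
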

\begin{proof}
	We have three calls $R.\alg{tree-min}$, $D.\alg{tree-min}$, and $L_x.\alg{tree-min}$, each of which take constant time, and then simply compute the minimum of these three.
\end{proof}

\begin{lemma}\label{p:ds-cut-time}
	Let $v$ be a node, and let $y$ be the super-node containing $v$ in the current super-forest. $X.\alg{cut}(v)$ takes
	\begin{itemize}
		\item $\fO( \log \tfrac{n}{|T^0_v|} )$ time if $y$ is a super-leaf and the canonical super-forest operation for $\alg{cut}(v)$ is $\alg{split}(y)$;
		\item $\fO(1)$ time if $y$ is a super-root; and
		\item $\fO( \log n)$ time otherwise.
	\end{itemize}
\end{lemma}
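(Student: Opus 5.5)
We walk through the steps of $X.\alg{cut}(v)$ and bound each one by the running times already established. Every call to $A$ and to $R$ costs $\fO(1)$ by \cref{p:compresssion-ds} and \cref{p:dtm-paths}. The remaining work lies in the canonical operation on $D$ and $E$, in the possible calls to $D.\alg{set-priority}$, and in the update of $L_x$ (cases (a)--(e)). We split into the three cases of the statement according to the type of $y$ and of the canonical operation.

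\emph{The super-root case.} If $y$ is a super-root and the canonical operation is $\alg{split}(y)$, then \cref{p:dt-running-times} and \cref{p:ds-ext-leaves} give $\fO(1)$ time for the split in $D$ and $E$. Both new super-nodes are roots or isolated nodes, and in $D$ they inherit priority $\infty$, so no $\alg{set-priority}$ call is needed. The SSM update is case (a) or (c), which is just a pointer assignment. The canonical operation cannot be a \alg{cut} here: $v$ lies in the root super-node, so $v$ would have to be the top node of $y$, which is a root of $F$ and cannot be cut. If $y$ is simultaneously a super-leaf (so $|S'|=1$), we are in case (a) and the same reasoning applies.

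\emph{The super-leaf case.} Let $y$ be a non-root super-leaf and the canonical operation $\alg{split}(y)$. The splits in $D$ and $E$ take $\fO(1)$, since $y$ is a leaf. The new node $y_1$ is again a super-leaf, so it has priority $\infty$ in $D$, and $y_2$ is isolated. The only nonconstant step is $L_x.\alg{replace}(y,y_1)$, which by \cref{p:ssm-ds} costs $\fO(\log \tfrac{W'}{w(y)})$. Here $W' \le W^* \le n$ by \cref{p:rt-total-weights}, and $W'$ is the weight after the operation.

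The key point is to show $w(y) \ge |T^0_v|$. By definition $w(y) = \max_{u \in C(y)} |T^0_u|$, and $v \in C(y)$, so this holds immediately. This gives $\fO(\log \tfrac{n}{|T^0_v|})$, with an additive $\fO(1)$ absorbed into the bound (or kept as $1+\log$). One subtlety is that \cref{p:ssm-ds} charges weight $w(x)$ of the element being replaced, so we must use the weight of the old super-node $y$, not of $y_1$; that is exactly the choice we made. Checking this bookkeeping against the analysis is the main point needing care in this case.

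\emph{All other cases.} Here the canonical operation is a \alg{cut}, or a \alg{split} of a non-root non-leaf super-node. Each operation in $D$ and $E$ costs $\fO(\log |S'|) \le \fO(\log n)$, as do at most two calls to $\alg{set-priority}$, by \cref{p:dt-running-times} and \cref{p:ds-ext-leaves}. One call to $E.\alg{extremal-leaves}$ is also $\fO(\log n)$. The call to $\alg{split-interval}$ costs $\fO(\log W) \le \fO(\log n)$ by \cref{p:ssm-ds} and \cref{p:rt-total-weights}. Summing these gives $\fO(\log n)$. Finally, all the amortized bounds combine additively, because the potentials of $D$, $E$, and the SSM structures are independent.
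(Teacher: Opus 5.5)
Your proposal is correct and follows essentially the same route as the paper: $A$ and $R$ are updated in constant time; splits of super-roots and super-leaves cost $\fO(1)$ in $D$ and $E$ and need no priority updates; $L_x.\alg{replace}(y,y_1)$ costs $\fO(\log \tfrac{W^*}{w(y)})$, which together with $w(y)\ge|T^0_v|$ and $W^*\le n$ from \cref{p:rt-total-weights} gives the first bound; and every remaining case costs $\fO(\log n)$. You are slightly more careful than the paper, for instance in noting that a canonical \alg{cut} cannot occur when $y$ is a super-root, and in tracking the additive $\fO(1)$ term of \alg{replace}.
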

\begin{proof}
	Updating $A$ and $R$ is done in constant time in all cases. Updating $D$ is always at most $\fO(\log \ell) \le \fO( \log n )$.
	
	Suppose now that the canonical super-forest operation corresponding to $\alg{cut}(v)$ is $\alg{split}(y)$, and $y$ is a super-leaf. Then $D.\alg{split}(y)$ and $E.\alg{split}(y)$ take $\fO(1)$ time by \cref{p:dt-running-times,p:ds-ext-leaves}. Also, since both created nodes are leaves, we do not update their priorities in $D$ (which are initialized to $\infty$). Thus, updating $D$ and $E$ takes $\fO(1)$ overall time. The same is true if $y$ is a super-root.
	
	Now let $x$ be the super-root of the super-tree containing $y$, and consider updates to $L_x$.
	At worst, we call $E.\alg{extremal-leaves}$, $L_x.\alg{split-interval}$, and $L_x.\alg{replace}$, with running time $\fO( \log n + \log W^* ) = \fO( \log n)$ by \cref{p:rt-total-weights}.
	Now first suppose that $y$ is a super-root. No changes to any structure $L_x$ are made in this case, with constant additional work.
	
	Finally, suppose that $y$ is a super-leaf. Then the running time is dominated by $L_x.\alg{replace}(y, y_1)$, with running time bounded by $\fO(\log \frac{W^*}{w(y)})$. Recall that $w(y) = |T^0_u|$, where $u$ is the heaviest node in $C(y)$.
	Since $v$ is also contained in $y$, we have $w(y) = |T^0_u| \ge |T^0_v|$. Thus, using $W^* \le \fO(n)$ by \cref{p:rt-total-weights}, we get the desired $\fO( \log \tfrac{n}{|T^0_v|} )$.
\end{proof}

Combining \cref{p:ds-init-time,p:ds-tree-min-time,p:ds-cut-time} yields:

\begin{theorem}\label{p:uopt-partial}\label{p:ub-partial}
	Initializing our DTM data structure with a rooted tree $T^0$, any priority function, and then performing $m$ operations on it takes $\fO(m + H(T^0))$ total time.
\end{theorem}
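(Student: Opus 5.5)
We sum the amortized bounds of \cref{p:ds-init-time,p:ds-tree-min-time,p:ds-cut-time} over the whole operation sequence. Initialization contributes $\fO(n)$, which is $\fO(H(T^0)+1)$ by \cref{p:entropy-lbs}. Each \alg{tree-min} contributes $\fO(1)$, for a total of $\fO(m)$. Since all bounds are amortized and every potential involved starts at $\fO(n)$ (splay trees built as complete trees, standard dynamic forests), the actual total time is at most the amortized total. It remains to bound the \alg{cut}s, which we split into the three classes of \cref{p:ds-cut-time}, according to the super-node $y$ containing the cut node $v$ and the canonical operation.

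\emph{Super-root class.} Each such \alg{cut} costs $\fO(1)$, so this class contributes at most $\fO(m)$.

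\emph{Super-leaf split class.} Here $y$ is a super-leaf and the canonical operation is $\alg{split}(y)$. Such a \alg{cut} costs $\fO(\log \tfrac{n}{|T^0_v|})$ (plus $\fO(1)$). Each edge is cut at most once, so each node $v$ appears as the argument of at most one \alg{cut}. Summing over distinct $v$ gives at most $\sum_{v} \log\tfrac{n}{|T^0_v|} = H(T^0)$.

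\emph{Remaining class.} Each such \alg{cut} costs $\fO(\log n)$. We claim there are only $\fO(\ell)$ of them, where $\ell$ is the number of leaves of $T^0$. Such a \alg{cut} cannot be a split of a super-root or of a super-leaf, since those fall into the previous two classes. Hence its canonical operation is either a \alg{cut} in $F'$ or a \alg{split} of a non-root non-leaf super-node. The initial compression contracts all maximal chains, so it has at most $2\ell-1$ super-nodes. Applying \cref{p:num-cuts-splits} to the super-forest $F'$ therefore bounds the number of such operations by $\fO(\ell)$. The total contribution is $\fO(\ell\log n)$, which is $\fO(H(T^0))$ by \cref{p:lb-leaves}.

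The main care point is the legitimacy of applying \cref{p:num-cuts-splits} to $F'$. The super-forest evolves only by canonical \alg{cut}s and \alg{split}s, including splits of roots and leaves that create new isolated nodes. The lemma nonetheless applies, because its argument only uses that edges are never created and that the number of non-root non-leaf nodes never increases. A second point is the case where $y$ is both a super-root and a super-leaf, i.e., a singleton component; this falls under the $\fO(1)$ class. Adding all contributions yields $\fO(m + H(T^0))$.
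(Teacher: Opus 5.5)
Your proposal is correct and follows essentially the paper's proof. You apply \cref{p:num-cuts-splits} to the initial compression with at most $2\ell-1$ super-nodes to get $\fO(\ell\log n)\le\fO(H(T^0))$ for canonical \alg{cut}s and non-root non-leaf \alg{split}s, charge super-leaf splits $\log\tfrac{n}{|T^0_v|}$ per node to obtain $H(T^0)$, and absorb initialization and \alg{tree-min} via $H(T^0)\ge n-1$. The only difference is that you list super-root \alg{cut}s as a separate $\fO(1)$ class and spell out the amortization and \cref{p:num-cuts-splits} details, which the paper leaves implicit.
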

\begin{proof}
	Each node of $F$ is \alg{cut} at most once.
	Consider first the $\alg{cut}(v)$ operations with a canonical \alg{cut} or a canonical \alg{split} of a non-root-non-leaf super-node. By \cref{p:num-cuts-splits}, there are $\fO(\ell)$ operations, for a total running time of $\fO(\ell \log n)$ by \cref{p:ds-cut-time}. By \cref{p:lb-leaves}, this is bounded by $\fO(H(T^0))$.
	
	For all other \alg{cut}s, the total amortized time is \[ \sum_{v \in V(F)} \fO\left(\log \tfrac{|T^0|}{|T^0_v|}\right) = \fO(H(T^0)). \]
	Each of the up to $m$ \alg{tree-min} operations takes $\fO(1)$ time. Finally, initialization takes $\fO(n)$ time, and $H(T^0) \ge n-1$ by \cref{p:entropy-lbs}.
\end{proof}

This yields partial universal optimality, for operation sequences of length $m \ge 2n-2	$. In the next section, we show that our data structure is actually universally optimal even on shorter operation sequences.

\section{Universal optimality for short sequences}\label{sec:short-seqs}

In this section, we extend \cref{p:uopt-partial} and \cref{p:lb-dtm-longseq} to prove full universal optimality of our data structure, including for operation sequences of length $m < 2n-2$. Note that no changes to the data structure are necessary; only the analysis needs to be refined.

We first quantify the optimal running time, then prove our data structure matches it (\cref{sec:short-ub}), and finally prove that it is a lower bound for all data structures (\cref{sec:short-lb}). We need the following generalizations of the \emph{tree entropy} defined in \cref{sec:lb}.

Let $T$ be a rooted tree and $S \subseteq V(T)$. Then we define
\[ H_S(T) = \sum_{v \in S} \log \frac{|S|}{|V(T_v) \cap S|}. \]
Further, for $k \in \N_+$, we define
\[ H_k(T) = \max_{\substack{S \subseteq V(T)\\|S| \le k}} H_S(T). \]

Observe that for $k \ge |T|$, we have $H_k(T) = H_{V(T)}(T) = H(T)$.
We now prove a few useful properties of $H_k(T)$.

\begin{lemma}\label{p:lb-leaves-general}
	Let $T$ be a rooted tree with $n \ge 2$ nodes and $\ell$ leaves, and let $k \in \N_+$.
	Then $H_k(T) + n \ge \tfrac13 \min\{k, \ell\} \log n$.
\end{lemma}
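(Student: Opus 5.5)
Let $j = \min\{k,\ell\}$. The natural choice of $S$ is a set of $j$ leaves of $T$. Leaves are pairwise independent, so for every $v \in S$ we have $V(T_v) \cap S = \{v\}$. Hence
\[ H_S(T) = j \log j \quad\text{and}\quad H_k(T) \ge j \log j. \]
This handles the case where $j$ is polynomially large in $n$, but fails when $j$ is small; for example, $j=1$ gives $0$. The additive $n$ in the statement is meant to cover that case. I will therefore split according to whether $j \ge \sqrt{n}$ or not, roughly.

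\textbf{Case 1: $j \ge n^{1/3}$.} Then $\log j \ge \tfrac13 \log n$, so
\[ H_k(T) \ge j\log j \ge \tfrac13 j \log n. \]

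\textbf{Case 2: $j < n^{1/3}$.} It suffices to show $n \ge \tfrac13 j \log n$. Since $j < n^{1/3}$, we have
\[ \tfrac13 j\log n < \tfrac13 n^{1/3}\log n. \]
So it is enough to check $n^{1/3}\log n \le 3n$, that is, $\log n \le 3 n^{2/3}$. This holds for all $n \ge 2$, since $\log n \le n \le 3n^{2/3}$ fails only for large $n$, where $\log n \le 3n^{2/3}$ holds anyway. A cleaner way to see it: $\log n \le 3n^{2/3}$ is equivalent to $\log(n^{2/3}) \le 2n^{2/3}$, which is true because $\log x \le 2x$ for all $x \ge 1$.

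The threshold could be adjusted, but the exponent $1/3$ gives the constant $\tfrac13$ directly in Case 1, and Case 2 has plenty of slack. The main point to check is the leaf-set computation $H_S(T) = |S|\log|S|$: it follows from the definition, using that no leaf is a descendant of another leaf. Beyond that, only the elementary inequality in Case 2 needs care, including the boundary behaviour at small $n$ such as $n=2$.
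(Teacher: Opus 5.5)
Your proof is correct and follows the paper's approach: take $j=\min\{k,\ell\}$ leaves to get $H_k(T)\ge j\log j$, then split on the size of $j$ and absorb the small case into the additive $n$. The only difference is the threshold. The paper splits at $j = n/\log n$ and uses $\log j \ge \log n - \log\log n$, while your threshold $n^{1/3}$ makes the large case immediate. You should delete the muddled remark ``$\log n \le n \le 3n^{2/3}$ fails only for large $n$,'' since only your argument via $\log x \le 2x$ actually proves $\log n \le 3n^{2/3}$.
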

\begin{proof}
	Let $m = \min\{k,\ell\}$, let $S'$ be a set of $m$ leaves of $T$, and let $S$ be an arbitrary vertex set of size $k$ with $S' \subseteq S \subseteq V(T)$. Observe that
	\[ H_k(T) \ge H_S(T) \ge H_{S'}(T) = m \log m. \]
	Our claim holds trivially when $m \le n / \log n$. If $m \ge n / \log n$, then $H_k(T) \ge m \log m \ge m (\log n - \log\log n) \ge \tfrac13 m \log n$, as desired.
\end{proof}

\begin{lemma}\label{p:Hm-desc-formula}
	Let $T$ be a rooted tree, and let $S \subseteq V(T)$ be closed under taking descendants. Then $H_S(T) = \sum_{v \in S} \log \tfrac{|S|}{|T_v|}$.
\end{lemma}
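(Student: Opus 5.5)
The claim follows directly from the definition of $H_S(T)$, so the plan is to show that the two sums agree term by term. By definition,
\[ H_S(T) = \sum_{v \in S} \log \frac{|S|}{|V(T_v) \cap S|}, \]
so it suffices to prove that $|V(T_v) \cap S| = |T_v|$ for every $v \in S$. Equivalently, we need $V(T_v) \subseteq S$ for each $v \in S$.

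For the inclusion, fix $v \in S$. By definition of the subtree $T_v$, its vertex set consists of $v$ together with all descendants of $v$ in $T$. Since $S$ is closed under taking descendants, every descendant of $v$ lies in $S$, and $v \in S$ by assumption. Hence $V(T_v) \subseteq S$, which gives $V(T_v) \cap S = V(T_v)$ and so $|V(T_v) \cap S| = |T_v|$.

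Substituting this equality into each summand yields $H_S(T) = \sum_{v \in S} \log \tfrac{|S|}{|T_v|}$, which is the claimed formula. No step is a genuine obstacle. The only point needing care is the convention that $T_v$ contains $v$ itself; this is why the hypothesis that $v \in S$, and not merely that its descendants are in $S$, is needed for the inclusion.
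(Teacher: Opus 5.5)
Your proof is correct and matches the paper's argument: closure under descendants gives $V(T_v) \subseteq S$ for every $v \in S$, so each summand $\log \tfrac{|S|}{|V(T_v) \cap S|}$ in the definition of $H_S(T)$ equals $\log \tfrac{|S|}{|T_v|}$.
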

\begin{proof}
	For each $v \in S$, we have $V(T_v) \subseteq S$ by assumption; the statement then follows by definition of $H_S(T)$.
\end{proof}

\begin{lemma}\label{p:Hm-ass-closed-le}
	Let $T$ be a rooted tree and $k \in \N_+$. Then there exists a set $S \subseteq V(T)$ of size at most $k$ that is closed under taking descendants and satisfies $H_k(T) = H_S(T)$.
\end{lemma}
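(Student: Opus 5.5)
Take a maximizer $S$ of $H_S(T)$ over all sets of size at most $k$; it exists because $V(T)$ is finite. Among all maximizers, choose one that minimizes a suitable potential, for instance $\sum_{v\in S}\mathrm{depth}(v)$ taken with a minus sign, so that nodes are pushed as deep as possible. Then show that if $S$ is not closed under descendants, some exchange produces a set $S'$ with $|S'|=|S|$, $H_{S'}(T)\ge H_S(T)$, and a strictly better potential. That contradicts the choice of $S$.

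The exchange goes as follows. Suppose $S$ is not descendant-closed. Then there is some $u\in S$ with a child $c\notin S$. Pick such a pair with $u$ deepest, so that every node of $S$ strictly below $u$ lies in a subtree already handled. Form $S'$ by replacing $u$ with $c$. The cleaner variant replaces $u$ with a deepest non-$S$ descendant of $u$ whose children are all in $S$ or which is a leaf, say $x$; then $S'=S\setminus\{u\}\cup\{x\}$, and $|S'|=|S|$.

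To compare $H_{S'}$ with $H_S$, write $a_v=|V(T_v)\cap S|$ and $a'_v=|V(T_v)\cap S'|$. The values $a_v$ are unchanged for $v$ outside the path from $u$ to $x$. For nodes strictly between $u$ and $x$ on that path, $a'_v=a_v+1$, but these nodes are in $S$ only if $x$ is chosen badly; choosing $x$ so that the path strictly between $u$ and $x$ avoids $S$ makes this term vanish. At $u$ itself, the term $\log\frac{|S|}{a_u}$ disappears and the new term $\log\frac{|S|}{a'_x}$ appears. Here $a'_x=|V(T_x)\cap S|+1$, and $V(T_x)\cap S\subseteq V(T_u)\cap S\setminus\{u\}$. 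Hence $a'_x\le a_u$, and the contribution does not decrease. All ancestors of $u$ keep the same counts. Therefore $H_{S'}(T)\ge H_S(T)$, and $x$ is strictly deeper than $u$, which improves the potential.

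Iterating this exchange, or invoking the extremal choice, shows that the maximizer $S$ has the property that every $u\in S$ has all children in $S$. That is exactly closure under taking descendants. The main obstacle is choosing $x$ so that no $S$-nodes lie on the path strictly between $u$ and $x$, since their counts would rise and their terms would drop. Taking $x=c$, a child of $u$ not in $S$, is the simplest choice. Its path interior is empty, and $a'_c=|V(T_c)\cap S|+1\le a_u$ because $u\notin T_c$. So the simplest choice already works, and I would use it.
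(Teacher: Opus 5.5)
Your proof is correct and is essentially the paper's argument: replace an $S$-node $u$ by a child $c \notin S$, observe that only the term for $u$ (now $c$) changes, and use $|V(T_c)\cap S'| = |V(T_c)\cap S| + 1 \le |V(T_u)\cap S|$. Choosing, among maximizers, one with the largest depth sum just makes explicit the termination that the paper leaves implicit, and the detour through a deeper descendant $x$ is unnecessary since you end up taking $x = c$ anyway.
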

\begin{proof}
	Let $S \subseteq V(T)$ with $|S| \le k$ and $H_k(T) = H_S(T)$. We claim that, if some $u \in S$ has a child $v \notin S$, we have $H_S(T) \le H_{S'}(T)$ for $S' = S \setminus \{u\} \cup \{v\}$. Thus, repeatedly replacing $u$ with $v$ will eventually yield a set that satisfies the lemma.
	
	Towards our claim, observe that we have $|V(T_w) \cap S| = |V(T_w) \cap S'|$ for all $w \in S \setminus \{u\}$. Thus,
	\[ H_{S'}(T) - H_S(T) = \log \frac{k}{|V(T_v) \cap S'|} - \log \frac{k}{|V(T_u) \cap S|}. \]

	It is easy to see that $|V(T_v) \cap S'| \le |V(T_u) \cap S|$, which implies $H_{S'}(T) - H_S(T) \ge 0$, proving our claim.
\end{proof}

\begin{corollary}\label{p:Hm-ass-closed}
	For $k \le |T|$, \cref{p:Hm-ass-closed-le} holds even when requiring that $S$ has size exactly~$k$.
\end{corollary}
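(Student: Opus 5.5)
Given $k \le |T|$, the plan is to start from a set $S$ supplied by \cref{p:Hm-ass-closed-le}: it has $|S| \le k$, is closed under taking descendants, and satisfies $H_S(T) = H_k(T)$. If $|S| = k$ we are done. Otherwise we enlarge $S$ one node at a time. Each step must preserve closure under descendants and must not decrease $H_S(T)$. Since $H_k(T)$ is the maximum over all sets of size at most $k$, the value can never exceed $H_k(T)$, so it stays equal to $H_k(T)$ throughout.

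For a single step, take a closed set $S$ with $|S| = s < k \le |T|$. Then $S \ne V(T)$. Closed sets are unions of full subtrees, so $V(T) \setminus S$ is closed under taking ancestors and nonempty. Hence it contains a node $u$ all of whose children lie in $S$; for instance, take a deepest node of $V(T)\setminus S$. Then $S' = S \cup \{u\}$ is again closed under descendants and has size $s+1$.

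The remaining task is to compare the two values, using the formula of \cref{p:Hm-desc-formula} for both:
\[ H_{S'}(T) - H_S(T) = \sum_{v \in S} \log \tfrac{s+1}{s} + \log \tfrac{s+1}{|T_u|} = s \log\tfrac{s+1}{s} + \log\tfrac{s+1}{|T_u|}. \]
Because $V(T_u) \subseteq S'$, we have $|T_u| \le s+1$, so the second term is nonnegative. The first term is positive. So the value does not decrease, and in fact strictly increases when $s \ge 1$.

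Iterating this step until $|S| = k$ gives the corollary. The only subtle point is choosing $u$ so that $S'$ stays closed; with that choice, the bound $|T_u| \le |S'|$ is what makes the comparison go through.
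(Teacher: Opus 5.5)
Your proof is correct, but it takes a different route from the paper's. The paper argues by contradiction and never builds a larger closed set. Suppose the set $S$ from \cref{p:Hm-ass-closed-le} had size $\ell<k$. Closure together with $\ell<|T|$ forces the root $r\notin S$. Then $S\cup\{r\}$ satisfies $H_{S\cup\{r\}}(T)=\log\frac{\ell+1}{\ell+1}+\sum_{v\in S}\log\frac{\ell+1}{|V(T_v)\cap S|}>H_S(T)$, contradicting maximality. This set is \emph{not} descendant-closed, but it does not need to be. So the set from the lemma must already have size $k$.

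That route is shorter and needs no care in choosing the added node, because the enlarged set is only used to contradict maximality.

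Your route instead adds a deepest node $u\notin S$. This keeps the set closed, which is exactly what gives $|T_u|\le s+1$ and lets you apply \cref{p:Hm-desc-formula} to both sets. Because you only need $H$ to be non-decreasing, you also cover the degenerate case $S=\emptyset$ uniformly. That case can occur when $k=1$, where $H_1(T)=0$ and the lemma may return $\emptyset$. There the paper's strict inequality would require $0>0$.

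One small slip: for $s=0$ your first term is not positive. It is an empty sum, and $\log\frac{s+1}{s}$ is undefined. Since you only use nonnegativity, nothing breaks.
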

\begin{proof}
	Suppose that \cref{p:Hm-ass-closed-le} yields a set $S$ of size $\ell < k$ with $H_S(T) = H_k(T)$. Since $S$ is closed under taking descendants and $|S| < k \le n$, the root $r$ of $T$ is not contained in $S$. Consider now the set $S' = S \cup \{r\}$. We have
	\begin{align*}
		H_{S'}(T) = \sum_{v \in S'} \log \frac{|S'|}{V(T_v) \cap S'}
			&= \log \frac{|S'|}{V(T_r) \cap S'} + \sum_{v \in S} \log \frac{|S'|}{V(T_v) \cap S'}\\
			&= \log \frac{\ell+1}{\ell+1} + \sum_{v \in S} \log \frac{\ell+1}{V(T_v) \cap S} > H_S(T).
	\end{align*}
	This implies $H_S(T) < H_k(T)$, a contradiction.
\end{proof}

\begin{lemma}\label{p:Hm-mon}
	Let $T$ be a rooted tree and $k \le k'$ be positive integers. Then
	\[ \tfrac k {k'} H_{k'}(T) - k \log \tfrac{k'}{k} \le H_k(T) \le H_{k'}(T). \]
\end{lemma}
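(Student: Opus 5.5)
The upper bound $H_k(T) \le H_{k'}(T)$ is immediate from the definition. Every set $S$ with $|S| \le k$ also satisfies $|S| \le k'$, so the maximum defining $H_{k'}(T)$ ranges over a superset of the sets defining $H_k(T)$.

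For the lower bound, I would start from an optimal set for $k'$ and subsample it. By \cref{p:Hm-ass-closed-le} there is a set $S'$ with $|S'| \le k'$, closed under taking descendants, such that $H_{S'}(T) = H_{k'}(T)$. If $|S'| \le k$, then $H_k(T) \ge H_{S'}(T) = H_{k'}(T) \ge \frac{k}{k'}H_{k'}(T) - k\log\frac{k'}{k}$, and we are done. So assume $s = |S'| > k$.

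Next I would use an averaging argument. Pick $S \subseteq S'$ uniformly at random among subsets of size exactly $k$, and show that $\mathbb{E}[H_S(T)] \ge \frac{k}{k'}H_{S'}(T) - k\log\frac{k'}{k}$. Some $S$ then attains at least the expectation, which gives the bound. Write
\[ H_S(T) = \sum_{v\in S}\log\frac{k}{|V(T_v)\cap S|} = \sum_{v \in S}\log\frac{s}{|V(T_v)\cap S'|} - \sum_{v\in S}\log\frac{s}{k}\cdot\frac{|V(T_v)\cap S|}{|V(T_v)\cap S'|}. \]
The first sum has expectation $\frac{k}{s}H_{S'}(T) \ge \frac{k}{k'}H_{S'}(T)$, since each $v$ lies in $S$ with probability $k/s$ and $H_{S'}(T)\ge 0$. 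The second sum is the error term, which must be shown to be at most $k\log\frac{k'}{k}$ in expectation.

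The error term is the main obstacle. I would condition on $v\in S$ and use Jensen's inequality (concavity of $\log$). Write $a_v = |V(T_v)\cap S'|$. Given $v \in S$, the count $|V(T_v)\cap S|$ equals $1$ plus a hypergeometric number with mean $(a_v-1)\frac{k-1}{s-1} \le (a_v - 1)\frac{k}{s}$. So its expectation is at most $1 + (a_v-1)k/s$, and hence $\mathbb{E}\bigl[\log\frac{s}{k}\cdot\frac{|V(T_v)\cap S|}{a_v}\bigm| v\in S\bigr] \le \log\frac{s/k + a_v - 1}{a_v} \le \log\frac{s}{k}$, using $a_v \ge 1$ and $s/k \ge 1$. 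Summing over $v$ with weights $\Pr[v\in S] = k/s$ gives an error of at most $k\log\frac{s}{k} \le k\log\frac{k'}{k}$. This argument uses nothing beyond $S \subseteq S'$; descendant-closure of $S'$ is not even needed. I would double-check the Jensen direction, since we need an upper bound on $\mathbb{E}\log$ and concavity supplies exactly that.

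If the probabilistic argument turns out to be awkward, I would fall back on a deterministic greedy removal: repeatedly delete from $S'$ an element chosen so that the loss is controlled. The averaging proof seems cleaner, though.
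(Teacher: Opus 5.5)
Your proof is correct, but it takes a different route from the paper.

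\textbf{The paper's route.} The paper first normalizes the maximizer via \cref{p:Hm-ass-closed}, so that $S'$ has size exactly $k'$ and is closed under taking descendants. By \cref{p:Hm-desc-formula}, every term of $H_{S'}(T)$ is then $\log\frac{k'}{|T_u|}$, with the full subtree size in the denominator. The paper then deterministically keeps the $k$ nodes of $S'$ with the smallest subtrees. This set is again descendant-closed, because proper ancestors have strictly larger subtrees. Its terms are therefore $\log\frac{k}{|T_u|}$ with the \emph{same} denominators. The only loss is the term $k\log\frac{k'}{k}$ from rescaling the numerators from $k'$ to $k$. The rest is the elementary fact that the $k$ largest of $k'$ real numbers sum to at least a $\frac{k}{k'}$ fraction of their total, regardless of sign. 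Descendant-closure is exactly what lets the paper avoid your error term: the denominators $|V(T_u)\cap S|$ never shrink when passing from $S'$ to $S$.

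\textbf{Your route.} You work with an arbitrary maximizer, so you must control how the denominators shrink under subsampling. You do this correctly:
\begin{itemize}
\item conditioned on $v\in S$, the count $|V(T_v)\cap S|$ is $1$ plus a hypergeometric variable;
\item $\frac{k-1}{s-1}\le\frac{k}{s}$ holds since $k\le s$;
\item Jensen is applied in the right direction, giving an upper bound on $\mathbb{E}\log$;
\item the final step $(a_v-1)\bigl(\frac{s}{k}-1\bigr)\ge 0$ checks out.
\end{itemize}
The case $|S'|\le k$ is also handled properly, using $H_{k'}(T)\ge 0$.

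\textbf{What each buys.} Your route does not need the normalization lemmas at all. In particular it needs no hypothesis $k'\le|T|$, which \cref{p:Hm-ass-closed} requires and the paper tacitly assumes. It also gives a marginally sharper bound in terms of $s=|S'|$. The paper's route gives a shorter, deterministic argument with an explicit witness set, built from lemmas it has already established.
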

\begin{proof}
	The second inequality holds by definition. For the first inequality, let $S' \subseteq V(T)$ be a node set with $H_{S'}(T) = H_{k'}(T)$. By \cref{p:Hm-ass-closed}, we can assume that $|S'| = k'$ and $S'$ is closed under taking descendants.
	Now construct $S$ by removing from $S'$ the $k'-k$ nodes with the largest rooted subtree size, i.e., each $u \in S'$ such that $|T_u|$ is among the $k'-k$ largest. Then, using \cref{p:Hm-desc-formula}, we have:
	\[ \sum_{u \in S'} \log \tfrac{k}{|T_u|} \le \tfrac{k'}{k} \sum_{u \in S} \log \tfrac{k}{|T_u|} = \tfrac{k'}{k} H_S(T). \]
	Clearly, $S$ is also closed under taking descendants. Again using \cref{p:Hm-desc-formula}, we have:
	\begin{align*}
		H_{k'}(T) = H_{S'}(T) = \sum_{u \in S'} \log \tfrac{k'}{|T_u|}
			\le \sum_{u \in S'} \log \left(\tfrac{k'}{k} \cdot \tfrac{k}{|T_u|}\right)
			& = k' \log \tfrac{k'}{k} + \sum_{u \in S'} \log \tfrac{k}{|T_u|} \\
			& \le k' \log \tfrac{k'}{k} + \tfrac{k'}{k} H_S(T).
	\end{align*}
	
	The first inequality of the lemma follows immediately.
\end{proof}

\subsection{Upper bound}\label{sec:short-ub}

We now prove a refinement of \cref{p:ub-partial}. It will be useful to fix the set $S$ of nodes that are \alg{cut} for the analysis. Note that the algorithm does not know $S$ in advance.

\begin{theorem}\label{p:ub-refined}
	Let $T^0$ be a rooted tree with $n$ nodes and $\ell$ leaves, let $p$ be a priority function for $T^0$, let $\sigma$ be a sequence of $m$ operations, and let $S$ be the set of nodes \alg{cut} by~$\sigma$. Then, the total time our data structure $X$ takes with this input is
	\[
		\Time(X, T^0, p, \sigma) \le \fO\left( m + n + \min(m, \ell) \log n + \sum_{v \in S} \log \tfrac{|S|}{|V(T^0_v) \cap S|} \right) \le \fO( H_m(T^0) + m + n ).
	\]
\end{theorem}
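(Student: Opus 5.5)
We follow the proof of \cref{p:ub-partial}, with two refinements: count only the \alg{cut}s that actually occur, and measure the weights inside the SSM structures relative to $S$ rather than to $V(T^0)$. The data structure is unchanged, so only the analysis is redone. Initialization costs $\fO(n)$ by \cref{p:ds-init-time}, and each \alg{tree-min} costs $\fO(1)$ by \cref{p:ds-tree-min-time}. Note $|S| \le m$. Following \cref{p:ds-cut-time}, we split the \alg{cut}s into three classes.

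\emph{(i)} Cuts whose canonical operation is a \alg{split} of a super-root cost $\fO(1)$ each.

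\emph{(ii)} Cuts whose canonical operation is a super-forest \alg{cut}, or a \alg{split} of a non-root-non-leaf super-node, cost $\fO(\log n)$ each. By \cref{p:num-cuts-splits} applied to the initial compression, which has at most $2\ell-1$ nodes, there are $\fO(\ell)$ of them. There are also trivially at most $m$ of them. This gives $\fO(\min(m,\ell)\log n)$ in total.

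\emph{(iii)} Cuts that split a super-leaf $y$ need a finer potential argument. Here we replace the analysis weight $w(x)=\max_{v\in C(x)}|T^0_v|$ by
\[ w_S(x)=\max_{v\in C(x)}\max\bigl(1,\ |V(T^0_v)\cap S|\bigr). \]
This is legitimate because \cref{p:ssm-ds} holds for every weight function unknown to the algorithm, and $S$ is fixed in advance for the analysis. The argument of \cref{p:rt-total-weights} then bounds the total weight in any $L_x$. The maximizing nodes $v_x$ of distinct super-leaves are pairwise independent, so their sets $V(T^0_{v_x})\cap S$ are disjoint. Hence the total weight is at most $|S|$ plus the number of super-leaves, i.e.\ $\fO(|S|+\ell)$.

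For the cost of a type-(iii) cut at $v\in S$, note that $w_S(y)\ge |V(T^0_v)\cap S|$. The replace therefore costs
\[ \fO\Bigl(1+\log\tfrac{|S|+\ell}{|V(T^0_v)\cap S|}\Bigr) \le \fO\Bigl(1+\log\tfrac{|S|}{|V(T^0_v)\cap S|}+\log\bigl(1+\tfrac{\ell}{|V(T^0_v)\cap S|}\bigr)\Bigr). \]
The additive excess $\log(1+\ell/\cdot)$ is where care is needed. It is at most $\log(1+\ell)$, which is $\fO(\log n)$, but we cannot afford this for every cut. Instead we use $\log\frac{a+b}{c}\le\log\frac{2a}{c}$ when $a\ge b$, and otherwise $\le\log(2b)\le\fO(\log n)$. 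The cuts with $|S|<\ell$ then contribute at most $\fO(|S|\log n)\le\fO(\min(m,\ell)\log n)$.

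The costs the previous analysis charged against the $\fO(n)$ budget must also be rechecked. These are the initial potential $\fO(n+W)$, now $\fO(n)$, and the potential increases from split-interval and from creating new elements $z$. Each of the latter is $\fO(\log W)=\fO(\log n)$ and occurs only in class (ii). Summing all classes gives the first inequality.

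For the second inequality we bound each term by $H_m(T^0)+m+n$. The sum over $S$ is exactly $H_S(T^0)$, and $H_S(T^0)\le H_m(T^0)$ since $|S|\le m$. The term $\min(m,\ell)\log n$ is at most $3(H_m(T^0)+n)$ by \cref{p:lb-leaves-general}. If $n=1$, this term vanishes.

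The main obstacle is the weight bookkeeping in class (iii). One must make sure that the modified weights still satisfy the $w\ge1$ requirement of \cref{p:ssm-ds}, that super-leaves containing no node of $S$ do not inflate $W$ beyond $\fO(|S|+\ell)$, and that the additive $\ell$ can be absorbed as shown above. If the split on $|S|$ versus $\ell$ fails to absorb the excess, the fallback is to choose $w_S(x)=\max_v |V(T^0_v)\cap S|$ plus a small fractional weight. This has to be reconciled with the integrality assumption used for the initial-potential bound in \cref{app:tree-pot}.
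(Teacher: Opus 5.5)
Your proof is correct. It differs from the paper's only in the analysis weights for the SSM structures.

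\textbf{The paper's route.} It uses $w'(v)=1$ for $v\notin S$ and $w'(v)=\tfrac{n}{k}\,|V(T^0_v)\cap S|$ for $v\in S$, where $k=|S|$. The factor $n/k$ puts the $S$-part of the weight on the same scale as the up to $n$ unit weights. This gives $W^*\le 2n$ directly, and every super-leaf split costs $\fO\bigl(1+\log\tfrac{k}{|V(T^0_v)\cap S|}\bigr)$ with no case distinction.

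\textbf{Your route.} You keep unscaled integer weights $\max(1,|V(T^0_v)\cap S|)$ and obtain $W^*\le |S|+\ell$. You then split globally on $|S|$ versus $\ell$:
\begin{itemize}
\item If $|S|\ge\ell$, then $W^*\le 2|S|$ gives the same per-cut bound as the paper.
\item If $|S|<\ell$, every cut costs $\fO(\log n)$, and there are at most $|S|\le\min(m,\ell)$ of them. This is absorbed into the $\min(m,\ell)\log n$ term. In this case you do not even need refined weights; \cref{p:ds-cut-time} already suffices.
\end{itemize}

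\textbf{Trade-off.} Your weights lie genuinely in $\N_+$, as \cref{p:ssm-ds} and \cref{app:tree-pot} stipulate. The paper's $w'$ is fractional, though it is at least $1$, which is all the appendix argument actually uses. In return, the paper's scaling avoids the case split.

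Your global case split already closes the argument, so the ``fallback'' with fractional weights in your last paragraph is unnecessary.
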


We will use the following weight function in our analysis. Fix $T^0$ and $S$, and let $n = |V(T^0)|$, $k = |S|$. For each node $v \in V(T^0)$, define
\[
	w'(v) = \begin{cases}
		1, & \text{ if } v \notin S\\
		\tfrac n k \cdot |V(T^0_v) \cap S|, & \text{ if } v \in S.
	\end{cases}
\]

Note that, in contrast to the weight function, $w$ defined in \cref{sec:ds:main-finish}, the function $w'$ can \emph{not} be computed in advance by the data structure, since $S$ is unknown at the start. Thus, it is crucial here that $w'$ not used explicitly by our SSM data structure.

For each super-node $x$, as before, define $w'(x)$ to be the maximum $w'(v)$ among all nodes $v$ contained in $x$. Let $W^*$ again be the maximum total weight of all super-nodes contained in a single structure $L_x$, at any point.

\begin{lemma}\label{p:ds-total-weights-refined}
	$W^* \le 2 n$.
\end{lemma}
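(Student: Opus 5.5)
The plan is to mirror the proof of \cref{p:rt-total-weights}, splitting the weight of a super-leaf into the part coming from nodes outside $S$ and the part coming from nodes in $S$. Fix a point in time and a component $T'$ of the super-forest whose super-leaves are stored in $L_x$. For each super-leaf $y$ of $T'$, pick a node $v_y \in C(y)$ maximizing $w'$, so $w'(y) = w'(v_y)$. Since $w'(v) \le 1$ for $v \notin S$ and $w'(v) \ge n/k \ge 1$ for $v \in S$, we have $w'(y) \le 1 + \max_{v \in C(y) \cap S} w'(v)$. Here the maximum is taken to be $0$ if $C(y)\cap S=\emptyset$. It then suffices to show two bounds: the number of super-leaves in $T'$ is at most $n$, and the $S$-contributions sum to at most $n$.

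The first bound is immediate, because distinct super-leaves contain disjoint sets of original nodes of $T^0$.

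For the second bound, let $Q$ be the set of chosen nodes $v_y \in C(y) \cap S$, over the super-leaves $y$ with $C(y)\cap S \neq \emptyset$. I will argue, exactly as in \cref{p:rt-total-weights}, that $Q$ is an independent set in $T^0$. Super-leaves of one component of the current forest lie in disjoint ``branches''. If $u \in C(y)$ were an ancestor in $T^0$ of $u' \in C(y')$ for two distinct super-leaves $y, y'$ of the same component, then the current tree path from $u$ down to $u'$ would still exist, because both lie in one component and edges are only deleted. That path would have to leave the leaf super-node $y$ through a child, which is impossible. Hence the sets $V(T^0_v)$ for $v \in Q$ are pairwise disjoint, and so are the sets $V(T^0_v) \cap S$. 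Therefore
\[ \sum_{v \in Q} w'(v) = \tfrac{n}{k} \sum_{v \in Q} |V(T^0_v) \cap S| \le \tfrac{n}{k} \cdot |S| = n. \]
Adding the at most $n$ units from the first bound gives a total weight in $L_x$ of at most $2n$. Since the argument applies to every component at every time, $W^* \le 2n$.

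The only delicate step is the independence claim, since super-nodes change over time. I would first settle it by observing that the current forest is a subgraph of $T^0$ in which ancestry is preserved within components, and that a super-leaf's chain has no children in the current forest. Two further boundary details need checking: the case $k = 0$, where $w' \equiv 1$ and the bound is trivial, and the $1+\max$ decomposition of $w'(y)$, which avoids needing $Q$ to include nodes outside $S$.
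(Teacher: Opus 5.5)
Your proof is correct and follows the paper's argument. You split the super-leaf weights into at most $n$ unit contributions from nodes outside $S$ and an $S$-part bounded by $\tfrac{n}{k}\sum_{v\in Q}|V(T^0_v)\cap S|\le n$, using independence of the chosen nodes in $T^0$. Your bound $w'(y)\le 1+\max_{v\in C(y)\cap S}w'(v)$ is a rephrasing of the paper's split of $Q$ into $Q\setminus S$ and $Q\cap S$, and your explicit justification of the independence claim fills in what the paper calls easy to see.
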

\begin{proof}
	As in \cref{p:rt-total-weights}, $W^*$ is always a sum $\sum_{v \in Q} w'(v)$ for some subset $Q \subseteq V(T^0)$ of pairwise independent nodes. The nodes in $|Q \setminus S|$ trivially contribute at most $n$ to the sum. For the other nodes, we have:
	\[
		\sum_{v \in Q \cap S} w'(v) = \tfrac n k \cdot \sum_{v \in Q \cap S} |V(T^0_v) \cap S| \le \tfrac n k \cdot |S| = n,
	\]
	using that the sets $V(T^0_v)$ are pairwise disjoint.
	Thus, $W^* \le n + n = 2n$.
\end{proof}

Now most of the analysis transfers without much hassle.
From \cref{p:ds-total-weights-refined}, it follows that initialization takes $\fO(n)$ amortized time (cf.\ \cref{p:ds-init-time}). \Cref{p:ds-tree-min-time} does not depend on the weight function at all, so $\alg{tree-min}$ still takes constant amortized time. For \cref{p:ds-cut-time}, the only change is the final part of the analysis, where we perform a $\alg{cut}(v)$ such that $v$ contained in a super-leaf $y$, and the canonical super-forest operation is $\alg{split}(y)$. 
We determined that the running time is $\fO(\log \tfrac{W^*}{w'(y)})$, which we can now bound as follows. We have $w'(y) \ge w'(v)$ by definition, so $w'(y) \ge \tfrac n k \cdot |T^0_v \cap S|$. Thus
\[
	\log \frac{W^*}{w'(y)} \le \log \frac{2n}{\tfrac n k \cdot |V(T^0_v) \cap S|} = 1 + \log \frac{k}{|V(T^0_v) \cap S|},
\]
which implies:
\begin{lemma}\label{p:ds-cut-time-refined}
	Let $v$ be a node and let $y$ be the super-node containing $v$ in the current super-forest. $X.\alg{cut}(v)$ takes
	\begin{itemize}
		\item $\fO( 1 + \log \tfrac{k}{|V(T^0_v) \cap S|} )$ time if $y$ is a super-leaf and the canonical super-forest operation for $\alg{cut}(v)$ is $\alg{split}(y)$;
		\item $\fO(1)$ time if $y$ is a super-root; and
		\item $\fO(\log n)$ time otherwise.
	\end{itemize}
\end{lemma}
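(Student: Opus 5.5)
The plan is to rerun the proof of \cref{p:ds-cut-time} verbatim, with the analysis-only weight function $w$ replaced by $w'$. The key observation is that \cref{p:ssm-ds} holds for \emph{every} weight function on $U$, and the SSM implementation never reads the weights. So the potential argument may use $w'$ even though $w'$ depends on the unknown set $S$.

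One technical point must be checked first: \cref{p:ssm-ds} assumes weights are at least $1$ (integrality is used only for the initial-potential bound and for $\log W \ge 0$). For $v \notin S$ we have $w'(v)=1$. For $v \in S$ we have $|V(T^0_v)\cap S| \ge 1$ and $n \ge k$, so $w'(v) \ge 1$. If integrality is really needed, I would replace $w'$ by $\lceil w'\rceil$. This changes each weight by at most a factor of $2$, and it changes the bound of \cref{p:ds-total-weights-refined} by at most an additive $n$, which is harmless. The super-node weights $w'(x)=\max_{v\in C(x)}w'(v)$ then inherit the lower bound $w'(x) \ge 1$.

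With this in place, I would go through the three cases.

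\textbf{Super-root case.} If $y$ is a super-root, the argument of \cref{p:ds-cut-time} does not involve weights at all. The structures $A$ and $R$ are updated in constant time, $D.\alg{split}$ and $E.\alg{split}$ take constant time by \cref{p:dt-running-times,p:ds-ext-leaves}, and no SSM structure changes. This gives $\fO(1)$.

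\textbf{Generic case.} Here $D$ and $E$ cost $\fO(\log n)$. At worst we call $E.\alg{extremal-leaves}$ and one $L_x.\alg{split-interval}$ or $L_x.\alg{replace}$, each costing $\fO(\log W^*)$. By \cref{p:ds-total-weights-refined}, $W^* \le 2n$, so this is $\fO(\log n)$.

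\textbf{Super-leaf split case.} If $y$ is a super-leaf and the canonical operation is $\alg{split}(y)$, then $D$ and $E$ cost $\fO(1)$ exactly as before. The only remaining cost is $L_x.\alg{replace}(y,y_1)$, which costs $\fO(1+\log\frac{W^*}{w'(y)})$. Since $v \in C(y)$, we have $w'(y) \ge w'(v)$. If $v \in S$, then $w'(v) = \frac nk |V(T^0_v)\cap S|$, and $W^*\le 2n$ gives
\[
\log \frac{W^*}{w'(y)} \le \log \frac{2n}{\tfrac{n}{k}\,|V(T^0_v)\cap S|} = 1 + \log\frac{k}{|V(T^0_v)\cap S|},
\]
which is the claimed bound. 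If $v \notin S$, then $v$ is never cut and the case does not arise.

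\textbf{Expected obstacle.} The main subtlety is making sure the amortization remains consistent when the analysis weights are $w'$. The initial potential must still be $\fO(n)$, which follows from the complete-tree bound $\fO(W') \le \fO(n)$ together with \cref{p:ds-total-weights-refined}. Everything else is a direct substitution into the earlier proof.
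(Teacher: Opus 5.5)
Your proposal is correct and follows the paper's own argument. You reuse the proof of \cref{p:ds-cut-time} unchanged for the super-root and generic cases, and in the super-leaf split case you bound the cost of $L_x.\alg{replace}(y,y_1)$ by $\fO(1+\log\frac{W^*}{w'(y)})$ using $w'(y)\ge w'(v)$ and $W^*\le 2n$ from \cref{p:ds-total-weights-refined}. Your extra check that $w'\ge 1$ (rounding up if integrality were needed) covers a detail the paper leaves implicit, since $w'$ need not be integer-valued while \cref{p:ssm-ds} is stated for $\N_+$-weights, but it does not change the argument.
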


\begin{proof}[Proof of \cref{p:ub-refined}.]
	The proof essentially the same as for \cref{p:uopt-partial}, using \cref{p:ds-cut-time-refined}. The term $\min(m, \ell) \log n$ arises from the \alg{cut} operations involving non-leaf non-root super-nodes, of which there can be at most $\ell$ by \cref{p:num-cuts-splits}, and at most $m$ since there are $k \le m$ \alg{cut}s overall.
	This yields the $\fO( m + n + \min\{m, \ell\} \log n + H_S(T^0))$ bound.
	We have $H_S(T^0) \le H_k(T^0) \le H_m(T^0)$ by definition, and we have $n + \min\{m, \ell\} \log n \le \fO(H_m(T^0) + n)$ by \cref{p:lb-leaves-general}.
\end{proof}

\subsection{Lower bound}\label{sec:tts-lb}\label{sec:short-lb}

In this section, we show:
\begin{restatable}{theorem}{restateDTMFullLB}\label{p:dtm-full-lb}
	Let $X$ be a data structure for \pDTM{}. Then, for each rooted tree $T$ and integer $m \in \N_+$, there exists a priority function $p$ for $T$ and a sequence $\sigma$ of $m$ operations such that
	$\Time( X, T, p, \sigma ) \ge \Omega( H_m(T) + m + n )$.
\end{restatable}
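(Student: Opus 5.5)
The lower bound splits into three terms, and I handle each by its own adversarial instance, taking the worst. The $\Omega(m)$ term is immediate: any sequence of $m$ \alg{tree-min} queries costs at least one step each. For the $\Omega(n)$ term I argue that a correct data structure must inspect essentially every vertex before it can answer queries reliably. Take a sequence starting with $\alg{tree-min}(r)$. If the structure never compares some vertex $u$, set $p(u)$ below all other priorities; then the answer must change while the structure's view does not. Hence at least $n-1$ vertices are touched, giving $\Omega(n)$ time. (If the intended model charges input reading anyway, this is trivial.) The main term is $\Omega(H_m(T))$, and for it I generalize the reduction from \pTS{} used in \cref{p:lb-dtm-longseq}.

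Set $k=\lfloor m/2\rfloor$ and assume $m\ge 2$; the case $m=1$ is absorbed into the $m+n$ term. By \cref{p:Hm-ass-closed-le} and \cref{p:Hm-mon} (with $k'=m$), we have $H_k(T)\ge \tfrac12 H_m(T)-O(m)$. So it suffices to prove $\Omega(H_S(T))$ for a descendant-closed set $S$ with $|S|\le k$ and $H_S(T)=H_k(T)$. By \cref{p:Hm-desc-formula}, $H_S(T)=\sum_{v\in S}\log\frac{|S|}{|T_v|}$, and $S$ induces a forest of subtrees of $T$.

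I restrict to priority functions where the nodes of $S$ receive the $|S|$ smallest priorities, ordered $T$-monotonically on $S$. All nodes outside $S$ get larger priorities, also $T$-monotone. The adversarial sequence is then: repeat $|S|$ times the pair $v\gets\alg{tree-min}(r)$ followed by $\alg{cut}(v)$. This uses $2|S|\le m$ operations, and we pad with further \alg{tree-min} queries to reach exactly $m$. At every step the current minimum of the tree containing $r$ is the smallest remaining element of $S$. That element is a leaf of the remaining part, because $S$ is descendant-closed and its earlier descendants have already been cut off.

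The one subtlety is that a node $v$ of $S$ may be a root of the forest $S$ whose parent lies outside $S$. Cutting $v$ then detaches it, but all remaining nodes of $S$ stay connected to $r$. This holds because each remaining $S$-node hangs below nodes outside $S$ that are never cut, so querying $\alg{tree-min}(r)$ always returns the global minimum of the remaining $S$-nodes.

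The outputs therefore list $S$ in sorted order, which is a linear extension of the forest poset on $S$. Adding a virtual root turns this forest into a tree $T_S$ on $|S|+1$ nodes. Its linear extensions, by \cref{p:tree-les}, number $|S|!/\prod_{v\in S}|T_v\cap S|$. Taking logarithms gives at least $H_S(T)-|S|\log e$, which by the information-theoretic argument lower-bounds the comparisons needed. The main obstacle is exactly this counting step: I must make sure the hook-length formula applies to the forest induced on $S$, rather than to $T$ itself, and that the additive $O(|S|)$ loss is absorbed by the $m$ term. Combining the three instances then gives $\Omega(H_m(T)+m+n)$.
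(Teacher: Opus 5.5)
Your proof is correct and follows essentially the same route as the paper: the trivial $\Omega(m)$ and $\Omega(n)$ bounds, plus the reduction that sorts a descendant-closed maximizing set $S$ of size at most $\lfloor m/2\rfloor$ by alternating $\alg{tree-min}(r)$ and $\alg{cut}$, then counts its admissible orderings via the hook-length formula. The differences are minor. You count orderings of $S$ with Knuth's formula on the forest on $S$ plus a virtual root, rather than the paper's general \cref{p:num-partial-tree-orders}; this is equivalent since $S$ is descendant-closed. You also explicitly invoke \cref{p:Hm-mon} to pass from $H_{\lfloor m/2\rfloor}$ to $H_m$, a step the paper's written proof leaves implicit (the factor there is $\lfloor m/2\rfloor/m \ge 1/3$ rather than $1/2$, which is harmless).
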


Together with \cref{p:ub-refined}, this implies \cref{p:dtm-main}.
To prove \cref{p:dtm-full-lb}, we reduce from the following variant of the \pTS{} problem:

\begin{definition}\label{def:tts}
	In the \pTTS{} problem, we are given a rooted tree $T$, a $T$-monotone priority function $p$ on $V(T)$, and a number $k \le |V(T)|$. The task is to find and sort the $k$ nodes with lowest priority.
\end{definition}

A generalization where $T$ is an arbitrary DAG (i.e., corresponds to an arbitrary partial order) is called \problem{Top-$k$ DAG Sorting} and has been studied by Haeupler et al.~\cite{HaeuplerHladikEtAl2025a} as a variant of the \problem{DAG Sorting} problem. They give an optimal algorithm for the problem. Here, we are interested in giving a (more or less) explicit lower bound for the tree variant.

As usual, we use the information-theoretic lower bound. Let $P_T$ is the partial order on $V(T)$ that has Hasse diagram $T$. Denote by $\LEs_k(T)$ the set of length-$k$ sequences of distinct elements of $V(T)$ that appear as a prefix of some linear extension of $P_T$. Observe that $\LEs_k(T)$ is precisely the set of possible outputs of the \pTTS{} problem on $T$, so we need at least $\log |\LEs_k(T)|$ comparisons to solve it in the worst case. From this, we can obtain the following lower bound for \pDTM{}:

\begin{lemma}\label{p:lb-le-prefixes}
	Let $X$ be a data structure for \pDTM{}. Then, for each rooted tree $T$ and positive integer $k \le |V(T)|$, there exists a priority function $p$ for $T$ and a sequence $\sigma$ of $2k$ operations such that
	$\Time( X, T, p, \sigma ) \ge \log |\LEs_k(T)|$.
\end{lemma}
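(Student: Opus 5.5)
We reduce from \pTTS{} exactly as in the proof of \cref{p:lb-dtm-longseq}. Given a DTM data structure $X$, a rooted tree $T$ and $k \le |V(T)|$, we build the following \pTTS{} algorithm $A$. It initializes $X$ with $(T,p)$ and determines the root $r$ of $T$. It then repeats $k$ times: call $v \gets X.\alg{tree-min}(r)$, record $v$, and call $X.\alg{cut}(v)$. The $k$ recorded nodes form the output.

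For correctness we maintain the following invariant. Before the $i$-th iteration, the nodes $v_1,\dots,v_{i-1}$ have been detached as isolated vertices, and the tree containing $r$ is $T$ minus these nodes. Every removed node was a leaf of the current tree when it was removed. Because $p$ is $T$-monotone, the minimum of the current tree containing $r$ is a leaf, or it is $r$ itself if only $r$ remains. Deleting the edge above a leaf $v$ leaves $v$ isolated and keeps the rest connected to $r$, so the invariant persists.

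We must also handle the case that $v = r$, which happens only when $k = |V(T)|$ and we are in the last iteration. Then $\alg{cut}(r)$ is not a valid operation. We replace it with a dummy query, for example a second $\alg{tree-min}(r)$, so that the sequence still has length exactly $2k$.

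By the invariant, the $i$-th reported node is the minimum-priority node among $V(T)\setminus\{v_1,\dots,v_{i-1}\}$. Hence $(v_1,\dots,v_k)$ are the $k$ lowest-priority nodes in sorted order, which solves \pTTS{}. The operation sequence $\sigma$ has length $2k$.

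The information-theoretic step is where the care lies. Everything $A$ does outside of $X$ uses no comparisons; this includes finding $r$, bookkeeping, and issuing the operations. So every comparison of $A$ is a comparison made by $X$. The outputs in $\LEs_k(T)$ must all be distinguished, so some $T$-monotone $p$ forces at least $\log|\LEs_k(T)|$ comparisons, and each comparison costs at least unit time. We fix this $p$.

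The sequence $\sigma$ is the one $A$ generates on input $p$. It is adaptive, but that is fine, since the lemma only asserts existence of some $p$ and $\sigma$. Therefore $\Time(X,T,p,\sigma) \ge \log|\LEs_k(T)|$.
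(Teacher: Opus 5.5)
Your proof is correct and uses the same reduction from \pTTS{} as the paper: $k$ rounds of $\alg{tree-min}(r)$, each followed by a $\alg{cut}$ of the returned node, with the information-theoretic bound applied to the comparisons made by $X$. You also treat the case $k = |V(T)|$, where the final $\alg{cut}$ would target the root and you pad with a dummy query to keep the length at $2k$; the paper's proof passes over this detail silently.
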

\begin{proof}
	Let $r$ be the root of $T$.
	The data structure
	$X$ can solve the \pTTS{} problem by $k$ times computing $x \gets X.\alg{tree-min}(r)$ and then calling $X.\alg{cut}(x)$. This computes the $k$ lowest-priority nodes of $V(T)$ in sorted order, as required. Thus, the information-theoretic lower bound $\log |\LEs_k(T)|$ applies.
\end{proof}

To lower bound $|\LEs_k(T)|$, we use another definition. Let $S \subseteq V(T)$, and let $\LEs_S(T)$ be the set of orderings of $S$ that are consistent with $P_T$, meaning that they are (possibly non-contiguous) subsequences of some linear extension of $P_T$.
Note that $\log |\LEs_S(T)|$ is a lower bound for a slightly different problem from \pTTS{}, where a fixed subset of nodes needs to be sorted. Nevertheless, we clearly have $|\LEs_k(T)| \ge |\LEs_S(T)|$ for every $S$ that is closed under taking descendants, a fact that we will use later.

\begin{lemma}\label{p:num-partial-tree-orders}
	For each rooted tree $T$ and $S \subseteq V(T)$, we have:
	\[ |\LEs_S(T)| = |S|! \cdot \prod_{u \in S} \frac{1}{|V(T_u) \cap S|}. \]
\end{lemma}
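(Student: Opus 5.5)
The plan is to reduce to Knuth's hook-length formula (\cref{p:tree-les}) by building an auxiliary rooted forest on $S$ alone. Define the \emph{induced forest} $T[S]$ on vertex set $S$: the parent of $u \in S$ is its nearest proper ancestor in $T$ that lies in $S$; if no such ancestor exists, $u$ is a root. For every $u \in S$, the subtree of $u$ in $T[S]$ has vertex set exactly $V(T_u) \cap S$, because ancestry among elements of $S$ is the same in $T$ and in $T[S]$. Hence the claimed formula is precisely the hook-length count of linear extensions of $T[S]$. The proof therefore has two steps: show that $\LEs_S(T)$ equals the set of linear extensions of the partial order given by $T[S]$, and then apply the hook-length formula to a forest.

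For the first step I will show both inclusions. Suppose an ordering of $S$ is a subsequence of a linear extension of $P_T$. Then each descendant appears before its ancestors, and this holds in particular for pairs inside $S$, so the ordering is a linear extension of $T[S]$. For the converse, take a linear extension $\pi$ of $T[S]$. I will extend it to all of $V(T)$ by taking a $T$-monotone priority function that realizes $\pi$ on $S$. One explicit construction inserts each $v \notin S$ immediately after the last element of $S \cap V(T_v)$ in the order $\pi$; if that set is empty, $v$ is placed at the very beginning. Ties among such $v$ are broken by increasing subtree size, or equivalently by a post-order on $T$.

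The step I expect to be the main obstacle is checking that this construction respects every descendant-before-ancestor relation. There are three kinds of pair. For a pair $v \notin S$ with ancestor $a \in S$, every element of $S \cap V(T_v)$ is a descendant of $a$ in $T[S]$, so it precedes $a$ in $\pi$, and $v$ is inserted before $a$. For a pair $u \in S$ with ancestor $v \notin S$, $u$ lies in $S \cap V(T_v)$, so $v$ is placed after $u$. For a pair of non-$S$ nodes with $v$ below $v'$, we have $S \cap V(T_v) \subseteq S \cap V(T_{v'})$, so the insertion point of $v$ is no later than that of $v'$, and the tie-breaking rule handles equal insertion points.

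For the second step, the hook-length formula as stated in \cref{p:tree-les} covers a single tree, and $T[S]$ may be a forest. I will handle this by adding an artificial root $\rho$ above all roots of $T[S]$. The number of linear extensions is unchanged, because $\rho$ is forced to come last. The hook formula for the new tree gives $(|S|+1)! \cdot \frac{1}{|S|+1} \cdot \prod_{u \in S} \frac{1}{|V(T_u)\cap S|}$, which equals $|S|! \prod_{u\in S} \frac{1}{|V(T_u) \cap S|}$, as claimed. If instead one wants a self-contained argument, a direct induction also works: remove a maximal element of $S$ (a root of $T[S]$), or equivalently use the standard multinomial recursion over the root's subtrees.
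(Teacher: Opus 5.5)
Your proof is correct, but it takes a different route from the paper. The paper proves the formula by direct induction on $|T|$. It splits at the root $r$ of $T$, which need not lie in $S$. It observes that an element of $\LEs_S(T)$ is an interleaving of elements of $\LEs_{S_i}(T^i)$ over the child subtrees, followed by $r$ if $r \in S$. It then counts interleavings with a multinomial coefficient, whose factorials cancel the $|S_i|!$ factors supplied by the induction hypothesis. You instead pass to the induced forest $T[S]$, show that $\LEs_S(T)$ is exactly its set of linear extensions, and then invoke \cref{p:tree-les} after adding an artificial root.

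Your route has the merit of isolating and proving the extension property: every descendant-before-ancestor ordering of $S$ is realized by some $T$-monotone priority function. The paper's induction uses this property tacitly when it asserts that \emph{every} interleaving of the $\pi_i$ lies in $\LEs_S(T)$. Your three-case check of the insertion construction is sound. Tie-breaking by subtree size and tie-breaking by post-order both work, though they are not literally the same order.

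The cost is that you use Knuth's formula as a black box. The paper's argument is self-contained, and the paper in fact announces in \cref{sec:lb} that this lemma is its own proof of \cref{p:tree-les}, as the case $S = V(T)$. Within the paper's logical structure your reduction would therefore be circular unless one relies on the external citation. Your closing remark about a direct multinomial recursion over the root's subtrees is precisely the paper's proof.
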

\begin{proof}
	We proceed by induction on $|T|$. By convention, when $S = \emptyset$, we let $\LEs_S(T)$ consists of a single, empty permutation, and we let $0! = 1$. Then, the lemma holds when $|S| = 0$. The second base case is when $|T| = |S| = 1$, then clearly $|\LEs_S(T)| = 1$.
	
	Now let $T$ be a rooted tree with root $r$. Let $c_1, c_2, \dots, c_k$ be the set of children of $T$, assuming $k > 0$. For each $i \in [k]$, let $T^i = T_{c_i}$ and $S_i = S \cap V(T^i)$.
	
	Suppose first that $r \notin S$. Then each permutation $\pi \in \LEs_S(T)$ is an arbitrary interleaving of the $k$ permutations $\pi_i \in \LEs_S(T^i)$. For each such set of permutations $\pi_1, \pi_2, \dots, \pi_k$, there are
	\[ |S|! \cdot \prod_{i=1}^{k} \frac{1}{|S_i|!} \]
	many such interleavings. Thus, we have:
	\begin{align*}
		|\LEs_S(T)| & = |S|! \cdot \prod_{i=1}^{k} \frac{1}{|S_i|!} \cdot \prod_{i=1}^k |\LEs_S(T^i)|\\
		& = |S|! \cdot \prod_{i=1}^{k} \frac{1}{|S_i|!} \cdot \prod_{i=1}^k \left( |S_i|! \cdot \prod_{u \in S_i} \frac{1}{|V(T_u) \cap S_i|} \right) \tag{induction}\\
		& = |S|! \cdot \prod_{u \in S} \frac{1}{|V(T_u) \cap S|},
	\end{align*}
	where the last step uses the fact that $V(T_u) \cap S_i = V(T_u) \cap S$ if $u \subseteq S_i$.
	
	If $r \in S$, then each permutation $\pi \in \LEs_S(T)$ ends with $r$, and removing $r$ again yields an interleaving of the $k$ permutations $\pi_i \in \LEs_S(T^i)$. The calculation above then yields the formula
	\begin{align}
		|\LEs_S(T)| = |S \setminus \{r\}|! \cdot \prod_{u \in S \setminus \{r\}} \frac{1}{|V(T_u) \cap S|}.\label{eq:hook-len-root}
	\end{align}
	Observe that $|V(T_r) \cap S| = |S|$. Thus, (\ref{eq:hook-len-root}) is equal to the claimed formula.
\end{proof}

We can now finally prove:

\restateDTMFullLB*
\begin{proof}
	We first show a $\Omega(n)$ lower bound. Indeed, observe that if the first operation is a \alg{tree-min}, then we need $n-1$ comparisons for it.
	Observe that a $\Omega(m)$ lower bound also trivially holds.

	We now show a $\Omega(H_m(T) - m)$ lower bound, which completes the proof.
	Let $k = \lfloor \tfrac m 2 \rfloor$. \Cref{p:lb-le-prefixes} implies $\Time( X, T, p, \sigma ) \ge \log |\LEs_k(T)|$. By \cref{p:Hm-ass-closed}, there is a descendant-closed set $S$ of $k$ nodes with $H_k(T) = H_S(T)$. Now, since $|\LEs_k(T)| \ge |\LEs_S(T)|$ by definition, we have:
	\begin{align*}
		\Time( X, T, p, \sigma ) & \ge \log |\LEs_S(T)|
			= \log \left(|S|! \cdot \prod_{u \in S} \frac{1}{|V(T_u) \cap S|}\right)\tag{\cref{p:num-partial-tree-orders}}\\
		& \ge \left( \sum_{u \in S} \frac{k}{|V(T_u) \cap S|} \right) - k \log e
			\ge H_S(T) - \fO(k) = H_k(T) - \fO(m).\tag*{\qedhere}
	\end{align*}
\end{proof}

\section{Cartesian trees and path minima}\label{sec:ct-pm}

In this section, we show how to compute Cartesian trees and use them to implement a \pPM{} data structure. We start with the case where the input graph $G$ is a tree (with a priority function $p$).

Recall DTM-based algorithm described in the introduction. Take an arbitrary rooting $T$ of $G$. Initialize the DTM data structure with $(T,p)$. Find the overall minimum $v$ (with an arbitrary \alg{tree-min}), and then delete all edges incident to $v$.
We can now identify each remaining component $C$ by the unique neighbor of $v$ contained in $C$.
Recursively compute a Cartesian tree on each such component and attach it to $v$. This clearly yields the Cartesian tree $\Ind(G,p)$.

Observe that we perform no more than $2n$ operations, so, using our DTM data structure, the total running time is $\fO(n + H(T)) = \fO(H(T))$. We have shown:

\begin{theorem}\label{p:ct-algo-trees}
	We can solve the \pCT{} problem on trees in $\fO(H(T))$ time, where $T$ is some rooting of the input tree.
\end{theorem}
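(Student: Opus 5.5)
The plan is to make the sketch in the preceding paragraph precise. We fix an arbitrary rooting $T$ of the input tree $G$ and initialize one instance $X$ of the DTM data structure of \cref{sec:ds} with $(T,p)$. The recursion will maintain a set of ``active'' components of the current forest. Each active component is represented by one of its vertices $c$, together with the parent slot in the Cartesian tree under construction where its root must be attached. Initially the only active component is the whole tree, represented by the root of $T$, with no parent slot.

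We process active components with a worklist. For a component represented by $c$ we do the following.
\begin{itemize}
\item Call $v \gets X.\alg{tree-min}(c)$ and attach $v$ at the recorded slot.
\item Delete every edge of the current forest incident to $v$: call $X.\alg{cut}(v)$ if $v$ still has a parent, and $X.\alg{cut}(u)$ for every remaining child $u$ of $v$.
\item Each deleted edge $\{v,u\}$ yields a new active component containing $u$, represented by $u$ and recorded as a child slot of $v$.
\end{itemize}
The current parent and child lists are available explicitly through the auxiliary structure $A$ of \cref{p:compresssion-ds}, so enumerating these edges costs $\fO(1)$ per edge.

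Correctness is by induction on component size, using the recursive definition of $\Ind(G,p)$. The minimum-priority vertex of a connected subgraph is its root in the Cartesian tree. Removing it splits the subgraph into exactly the components adjacent to it, one per deleted edge. Each of these components is a subtree of $G$, on which the construction recurses.

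For the running time, each vertex is returned by \alg{tree-min} exactly once, so there are $n$ \alg{tree-min} calls. Each edge of $T$ is cut exactly once, giving $n-1$ \alg{cut} calls. The total is $m \le 2n-1$ operations on $X$, and the overhead outside $X$ is $\fO(1)$ per operation. The operation sequence is adaptive, but \cref{p:ub-partial} holds for any operation sequence. It therefore bounds the total time by $\fO(m + H(T)) = \fO(n + H(T))$. Finally, $H(T) \ge n-1$ by \cref{p:entropy-lbs}, so the time is $\fO(H(T))$ whenever $n \ge 2$; the case $n=1$ is trivial.

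The only step needing care is that the bound of \cref{p:ub-partial} was stated for a single initial tree under arbitrary \alg{cut}/\alg{tree-min} sequences, possibly chosen online. Here all operations act on one instance initialized with $T$, so no additional argument is needed. We also need queries on arbitrary components of the current forest to be legal; they are, since \alg{tree-min} accepts any vertex.
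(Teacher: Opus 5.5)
Your proposal is correct and follows the paper's own argument. Both root $G$ arbitrarily and repeatedly find a component's minimum with \alg{tree-min}. Both then cut all edges incident to that minimum and recurse on the neighbouring components. Finally, both bound the at most $2n$ operations via \cref{p:ub-partial} together with $H(T) \ge n-1$ from \cref{p:entropy-lbs}. Your worklist bookkeeping, the $n=1$ case, and the remark that the amortized bound covers adaptively chosen operation sequences only make explicit what the paper leaves implicit.
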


In \cref{sec:et-lb}, we show a matching lower bound, thereby proving universal optimality. In \cref{sec:ct-graphs}, we extend the algorithm and lower bound to general graphs, and in \cref{sec:pm}, we discuss the \pPM{} problem.

\subsection{The number of elimination trees}\label{sec:et-lb}

Observe that every elimination tree $T$ on a given graph $G$ is the Cartesian tree induced by some priority function $p$ for $G$. This implies that every elimination tree of $G$ may appear as the output of the \pCT{} problem. Thus, for any fixed graph, every \pCT{} algorithm requires $\log |\ETs(G)|$ comparisons in the worst case by the information-theoretic bound. In this section, we estimate $|\ETs(G)|$ in the case when $G$ is a tree.\footnote{We could directly reduce from \pTS{} (\cref{def:tree-sorting}) instead. However, explicitly giving an information-theoretic lower bound is easier to generalize to arbitrary graphs.}

We need the following definition.
An elimination tree is called \emph{degenerate} if it has precisely one leaf, or, equivalently, if all inner nodes have degree one. Also recall the following definitions from \cref{sec:lb}:
Given a rooted tree $T$, a priority function $p$ is \emph{$T$-monotone} if for each node $v$ and each ancestor $u$ of~$v$ in $T$, we have $p(v) < p(u)$, and $\LEs(T)$ denotes the set of ascending orderings on $V(T)$ induced by $T$-monotone priority functions.

\begin{lemma}\label{p:mon-deg}
	Let $G$ be a tree, let $T$ be an arbitrary rooting of $G$, and let $p$ be a $T$-monotone priority function for $G$. Then $\Ind(G,p)$ is degenerate, and reading the nodes of $\Ind(G,p)$ from root to leaf yields the ordering induced by $p$.
\end{lemma}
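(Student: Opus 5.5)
We argue by induction on $n = |V(G)|$, following the recursive definition of $\Ind(G,p)$: the root is the minimum-priority vertex $v$, and the child subtrees are the Cartesian trees of the components of $G \gminus v$ under the restriction of $p$. For $n = 1$ the claim is trivial. For the inductive step, we first locate $v$. Because $p$ is $T$-monotone, every non-leaf $u$ of $T$ has a child $c$ with $p(c) < p(u)$, so $u$ cannot be the global minimum. Hence $v$ is a leaf of $T$. We also rule out the case that $v$ is the root of $T$ with $n \ge 2$: then $v$ has a child, whose priority is smaller, contradicting minimality.

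The key structural step is that removing a non-root leaf $v$ of $T$ from $G$ leaves a graph $G \gminus v$ that is still connected. Indeed, $v$ has degree one in $G$, namely its only neighbour is its parent in $T$. Thus $\Ind(G,p)$ has root $v$ and exactly one child subtree, namely $\Ind(G \gminus v, p|_{V(G)\setminus\{v\}})$. Next, $T \gminus v$ is a rooting of the tree $G \gminus v$ with the same root. The ancestor relation of $T \gminus v$ is the restriction of that of $T$, so the restricted priority function is again $(T \gminus v)$-monotone. By induction, $\Ind(G \gminus v, p|)$ is degenerate, and reading it from root to leaf gives the ascending order of $V(G)\setminus\{v\}$ by $p$.

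Prepending $v$, which has one child, keeps the tree degenerate: every inner node still has exactly one child. The root-to-leaf reading becomes $v$ followed by the remaining vertices in ascending order. Since $v$ is the global minimum, this is exactly the ordering induced by $p$.

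The only points needing care are the case $n \ge 2$ with the minimum at the root, handled above, and checking that monotonicity passes to $T \gminus v$. Both are routine, so there is no real obstacle.
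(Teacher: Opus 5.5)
Your proof is correct. It is a rigorous version of the intuition the paper states before its formal proof: the top-down construction keeps removing leaves of $T$, and the remaining graph stays connected.

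The paper's formal argument is organized differently, as a direct contradiction rather than an induction. It takes an arbitrary node $v$ of $S = \Ind(G,p)$ with at least two children. It notes that $T[V(S_v)]$ is connected while $T[V(S_v)] - v$ is not. Hence $v$ has degree at least two in $T[V(S_v)]$, and so has a $T$-child $u$ inside $S_v$. Monotonicity gives $p(u) < p(v)$, contradicting that $v$ is the minimum of $S_v$. The ordering claim then follows from the heap property of Cartesian trees (each parent has smaller priority than its child), applied along the single root-to-leaf path.

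What each route buys:
\begin{itemize}
\item The paper's route is local. It uses only two generic properties of Cartesian trees: subtrees induce connected subgraphs, and subtree roots are subtree minima. So it never has to check that $T - v$ is again a rooting of $G - v$ with an inherited monotone priority function.
\item Your route follows the recursive definition directly. It obtains degeneracy and the sorted order in one induction, and makes explicit why the current minimum is always a non-root leaf of $T$. The cost is those routine inheritance checks, which you carried out correctly.
\end{itemize}
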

\begin{proof}
	Intuitively, constructing $\Ind(G,p)$ in a top-down manner will progressively remove leafs of $T$ (which are also leaves of $G$). The remaining graph stays connected, so every inner node has one child.
	
	For a more formal proof, let $S = \Ind(G,p)$, and suppose some node $v \in V(S)$ has two or more children. By definition, the induced subgraph $T[V(S_v)]$ is connected, and by assumption, $T[V(S_v)] \gminus v$ is not connected. Then $v$ cannot be a leaf of $T[V(S_v)]$. But then $v$ has a child $u$ in $T[V(S_v)]$, and $p(u) < p(v)$ by assumption. Therefore, $v$ is not the minimum-priority node in $T[V(S_v)]$, a contradiction.
	
	The second statement of the lemma is directly implied by the fact that for every node $u$ and its parent $v$ in $\Ind(G,p)$, we have $p(v) < p(u)$.
\end{proof}

\begin{corollary}\label{p:diff-mon-deg}
	Let $G$ ba a tree, let $T$ be a rooting of $G$, and let $p, p'$ be $T$-monotone priority functions for $G$ that induce different orders on $V(G)$. Then $\Ind(G,p) \neq \Ind(G,p')$.
\end{corollary}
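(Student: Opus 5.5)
This follows directly from \cref{p:mon-deg}; no new machinery is needed. By \cref{p:mon-deg}, both $\Ind(G,p)$ and $\Ind(G,p')$ are degenerate elimination trees, so each is a single root-to-leaf path through all of $V(G)$. The same lemma says that reading $\Ind(G,p)$ from root to leaf gives the ascending order $\pi_p$ of $V(G)$ under $p$. Likewise, reading $\Ind(G,p')$ from root to leaf gives $\pi_{p'}$.

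Suppose, for contradiction, that $\Ind(G,p) = \Ind(G,p')$ as rooted trees on the vertex set $V(G)$. A degenerate rooted tree determines its root-to-leaf reading uniquely: start at the root and repeatedly move to the unique child. Hence the two readings coincide, so $\pi_p = \pi_{p'}$. This contradicts the hypothesis that $p$ and $p'$ induce different orders on $V(G)$.

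The only point needing care is the equality notion: $\Ind(G,p)$ and $\Ind(G,p')$ are compared as labelled rooted trees on the same vertex set $V(G)$, which is why equal trees have equal root-to-leaf sequences. There is no real obstacle; all the substantive work is already in \cref{p:mon-deg}.
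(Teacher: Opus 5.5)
Your proposal is correct and is exactly the argument the paper intends: the paper gives no separate proof and treats the statement as immediate from \cref{p:mon-deg}, since both Cartesian trees are degenerate and their root-to-leaf readings are the orders induced by $p$ and $p'$. Equal trees would therefore force equal orders, contradicting the hypothesis.
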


\Cref{p:diff-mon-deg} immediately implies that the number $|\ETs(G)|$ of elimination trees is at least the number of orderings of $V(G)$ induced by $T$-monotone priority functions. By definition, the latter set is precisely $\LEs(T)$. Hence, we have:

\begin{lemma}\label{p:lb-ct-trees}
	Let $G$ be a tree and $T$ be an arbitrary rooting of $G$.
	Then $|\ETs(G)| \ge |\LEs(T)|$, so solving \pCT{} on a graph $G$ requires at least $\log |\LEs(T)|$ comparisons.
\end{lemma}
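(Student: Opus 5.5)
The statement follows almost immediately from \cref{p:diff-mon-deg}. I would set up an injection from $\LEs(T)$ into $\ETs(G)$ and then apply the information-theoretic argument already given at the start of \cref{sec:et-lb}.

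\textbf{Step 1: the injection.} Each $\pi \in \LEs(T)$ is, by definition, the ascending order of some $T$-monotone priority function $p_\pi$; fix one such $p_\pi$ for each $\pi$. Define the map $\pi \mapsto \Ind(G,p_\pi)$. If $\pi \neq \pi'$, then $p_\pi$ and $p_{\pi'}$ are $T$-monotone and induce different orders on $V(G)$, so \cref{p:diff-mon-deg} gives $\Ind(G,p_\pi) \neq \Ind(G,p_{\pi'})$. Hence the map is injective, and $|\ETs(G)| \ge |\LEs(T)|$. Alternatively, by \cref{p:mon-deg} one can recover $\pi$ directly as the root-to-leaf reading of the degenerate tree $\Ind(G,p_\pi)$.

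\textbf{Step 2: the comparison lower bound.} Restrict attention to inputs $(G,p_\pi)$ with $\pi \in \LEs(T)$. A comparison-based \pCT{} algorithm's output is determined by the outcomes of its comparisons. After $c$ comparisons it can distinguish at most $2^c$ cases, so it can produce at most $2^c$ distinct outputs. By Step 1, the outputs on these inputs are pairwise distinct, so some input forces $c \ge \log |\LEs(T)|$.

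\textbf{Main obstacle.} The only subtle point is making sure we apply \cref{p:diff-mon-deg} to priority functions inducing \emph{different} orders. Choosing one representative $p_\pi$ per ordering handles this. Everything else is bookkeeping.
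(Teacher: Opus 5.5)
Your proof is correct and follows the paper's argument: the paper likewise gets $|\ETs(G)| \ge |\LEs(T)|$ directly from \cref{p:diff-mon-deg} and then applies the information-theoretic bound. The only difference is cosmetic: you run the counting argument on the family of $T$-monotone inputs directly, whereas the paper passes through the $\log |\ETs(G)|$ bound stated at the start of \cref{sec:et-lb}.
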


Recall that $\log \LEs(T) \ge H(T) - \fO(n)$ (\cref{p:tree-les}). To show optimality of \cref{p:ct-algo-trees}, we additionally need the following lower bound, which we can already state for general graphs:

\begin{lemma}\label{p:et-lb-easy}
	Let $G$ be a connected graph on $n$ nodes. Then $|\ETs(G)| \ge 2^{n-1}$, implying that $\log |\ETs(G)| \ge n - 1$.
\end{lemma}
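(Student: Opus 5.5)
I would prove this by induction on $n$, building elimination trees explicitly. For $n=1$ there is exactly one elimination tree, and $2^{0}=1$, so the base case holds. For the inductive step I would not recurse on arbitrary root choices; instead I would find two vertices whose removal keeps the graph connected, and show that each yields at least $2^{n-2}$ elimination trees, with the two families disjoint.

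First, fix a spanning tree of $G$. If $n \ge 2$, it has at least two leaves $a \neq b$. Removing a leaf of a spanning tree leaves the rest of that spanning tree intact, so $G \gminus a$ and $G \gminus b$ are both connected graphs on $n-1$ vertices.

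Next, every elimination tree $T'$ on $G \gminus a$ gives an elimination tree on $G$: take $a$ as the root and attach $T'$ as its single child subtree. This is valid by the recursive definition, because $G \gminus a$ has exactly one component. Distinct $T'$ give distinct trees, so by induction there are at least $2^{n-2}$ elimination trees on $G$ with root $a$. The same argument gives at least $2^{n-2}$ elimination trees with root $b$.

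Finally, trees with root $a$ and trees with root $b$ are different, since their roots differ. The two families are therefore disjoint, and $|\ETs(G)| \ge 2 \cdot 2^{n-2} = 2^{n-1}$. Taking logarithms gives $\log|\ETs(G)| \ge n-1$.

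The only delicate point is making sure both removed vertices leave $G$ connected, so that the induction hypothesis, which needs a connected graph, applies. Using leaves of a spanning tree handles this directly. Apart from that, the argument is routine.
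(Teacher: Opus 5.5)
Your proof is correct and follows essentially the same approach as the paper: both rely on two leaves of a spanning tree being removable without disconnecting $G$, which gives two root choices at every level. The paper phrases this as counting degenerate elimination trees built by repeated choices, and your induction is the same count written out explicitly.
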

\begin{proof}
	Each connected graph with two or more vertices has at least two vertices $u, v$ such that $G - u$ and $G-v$ are both connected (e.g., take two leaves of a spanning tree). Thus, when constructing a degenerate elimination tree, there are always two vertices that can be chosen to be the root, except when $|V(G)| = 1$.
\end{proof}

Thus, combining \cref{p:lb-ct-trees,p:et-lb-easy}, we obtain a lower bound of $\max \{ H(T) - \fO(n), n - 1 \} \ge \Omega( H(T) )$. This means that \cref{p:ct-algo-trees} is optimal.

\subsection{Computing Cartesian trees on graphs}\label{sec:ct-graphs}

In this section, we show how to reduce the general \pCT{} problem to the tree case, in universally optimal time. Ultimately, we will prove:

\restateCartTree*

The following lemma is the main insight needed for the reduction.

\begin{lemma}[Zhu and Mutchler \cite{ZhuMutchler1994}]\label{p:min-edge}
	Let $(G,p)$ be a graph with vertex priorities.
	Let $C$ be a cycle and let $e$ be an edge incident to the minimum-priority vertex of~$C$. Then $\Ind(G,p) = \Ind(G \gminus e,p)$.
\end{lemma}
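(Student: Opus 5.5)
We prove the statement by induction on $|V(G)|$, using the recursive characterization of $\Ind(G,p)$. Let $C$ be a cycle, let $m$ be its minimum-priority vertex, and let $e = \{m,u\}$ be an edge of $C$ incident to $m$. First we note that $G \gminus e$ has the same connected components as $G$. Indeed, the rest of $C$ is a path from $m$ to $u$ that avoids $e$. So both Cartesian trees are defined on the same vertex sets, componentwise. Let $r$ be the minimum-priority vertex of the component $K$ containing $C$. Then $r$ is the root of the Cartesian tree on $K$ in both $G$ and $G \gminus e$. It remains to compare the recursive calls on the components of $K \gminus r$ and of $(K \gminus e) \gminus r$.

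\textbf{Case 1: $r$ lies on $C$.} Then $r = m$, since $m$ is the minimum of $C$ and $r$ is the minimum of all of $K$. Removing $r$ also removes $e$, so $K \gminus r = (K \gminus e) \gminus r$. The recursive subproblems are identical, and the two trees agree.

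\textbf{Case 2: $r$ does not lie on $C$.} Then $C$ survives intact in $K \gminus r$, inside a single component $K'$. Every other component of $K \gminus r$ is unaffected by deleting $e$. In $(K \gminus e) \gminus r$, the component corresponding to $K'$ is exactly $K' \gminus e$. It has the same vertex set as $K'$, again because $C$ minus $e$ still connects $m$ and $u$. Now $C$ is still a cycle in $K'$, $m$ is still its minimum, and $e$ is still incident to $m$. Since $|V(K')| < |V(K)|$, the induction hypothesis gives $\Ind(K',p) = \Ind(K' \gminus e,p)$. Hence the children subtrees attached to $r$ coincide, and so do the whole trees.

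For the base case, a graph with at most two vertices has no cycle, so the claim holds vacuously. The only point needing care is that deleting $e$ never disconnects anything. This is what ensures the component structure, and hence the choice of subtree roots, is preserved at every level of the recursion. There is no real obstacle beyond setting up the induction so that the cycle $C$, with its minimum and the edge $e$, is passed intact to the recursive subproblem in Case 2.
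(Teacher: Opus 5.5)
Your proof is correct and matches the paper's: both induct with a case split on whether the minimum-priority vertex $r$ lies on $C$ (then $r$ is the minimum of $C$, $e$ is incident to it, and $G - r = (G - e) - r$) or not (then $C$ lies in a single component of $G - r$ and the induction hypothesis applies). Your explicit remark that deleting $e$ preserves connectivity, because $C - e$ still joins its endpoints, spells out a detail the paper leaves implicit.
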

\begin{proof}
	Let $v$ be the minimum-priority vertex of $G$, and let $G' = G \gminus e$.
	If $v$ is contained in $C$, then $e$ is incident to $v$ by definition. Thus $G \gminus v = G' \gminus v$. The Cartesian trees $\Ind(G,p)$ and $\Ind(G',p)$ both have root $v$, and the child subtrees of $v$ are Cartesian trees on the components of $G \gminus v = G' \gminus v$. Thus, we have $\Ind(G,p) = \Ind(G',p)$
	
	Now suppose $v$ is not contained in $C$. Then $C$ is fully contained in some connected component $H$ of $G \gminus v$. By induction, $\Ind(H,p) = \Ind(H \gminus e, p)$. This implies that also $\Ind(G,p) = \Ind(G',p)$.
\end{proof}

\Cref{p:min-edge} was previously used by Zhu and Mutchler~\cite{ZhuMutchler1994} to compute Cartesian trees on arbitrary graphs in the setting where priorities are already pre-sorted. Interestingly, \cref{p:min-edge} is also useful to lower bound $|\ETs(G)|$, as we show now.

\begin{lemma}\label{p:reduce-deg-et}
	Let $G$ be a graph, let $H$ be a spanning tree of $G$, and let $T$ be a rooting of~$H$. Then, for every $T$-monotone priority function for $G$, we have $\Ind(G,p) = \Ind(H,p)$.
\end{lemma}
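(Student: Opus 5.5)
The plan is to delete the non-tree edges of $G$ one at a time, each time invoking \cref{p:min-edge}, until only $H$ remains. The priority function $p$ does not change during this process. Since $p$ is $T$-monotone, every vertex has smaller priority than all of its ancestors in $T$; in particular, on any tree path in $H$ the minimum is attained at a specific, predictable place. For an edge $e = \{u,v\} \in E(G) \setminus E(H)$, let $C_e$ be the fundamental cycle formed by $e$ together with the $H$-path between $u$ and $v$. The aim is to show that the minimum-priority vertex of $C_e$ is an endpoint of $e$. Then $e$ is incident to the minimum of $C_e$, and \cref{p:min-edge} gives $\Ind(G,p) = \Ind(G \gminus e, p)$.

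The key step is to locate the minimum of $C_e$. Let $a$ be the lowest common ancestor of $u$ and $v$ in $T$. The vertices of $C_e$ are exactly those on the $T$-paths from $u$ up to $a$ and from $v$ up to $a$. By $T$-monotonicity, priorities strictly increase as we move up each of these two paths. Hence the minimum over the $u$–$a$ path is $u$, the minimum over the $v$–$a$ path is $v$, and the minimum of $C_e$ is whichever of $u$ and $v$ has the smaller priority. This covers the case $a \in \{u,v\}$ as well: there the cycle is a single ancestor path, and its minimum is the lower endpoint. In every case the minimum is an endpoint of $e$.

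Finally, I would argue by induction on $|E(G) \setminus E(H)|$. Pick any non-tree edge $e$ and apply the argument above to get $\Ind(G,p) = \Ind(G \gminus e,p)$. Then observe that $H$ is still a spanning tree of $G \gminus e$, that $T$ is still a rooting of $H$, and that $p$ is unchanged and hence still $T$-monotone. The hypotheses are therefore preserved, and the induction continues until the graph equals $H$. I expect no real obstacle beyond checking that the cycle used in \cref{p:min-edge} can be taken to be this fundamental cycle; \cref{p:min-edge} allows an arbitrary cycle of the current graph, and $C_e$ lies in $H + e \subseteq G$, so this holds. The only subtlety is that the priority function must be $T$-monotone with respect to a rooting of the spanning tree itself, which is exactly the hypothesis of the lemma.
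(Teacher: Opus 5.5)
Your proof is correct and follows essentially the same route as the paper: you repeatedly delete a non-tree edge using \cref{p:min-edge}, and $T$-monotonicity guarantees that the deleted edge is incident to the minimum of a suitable cycle. The only difference is the direction of the argument. The paper starts from an arbitrary cycle, notes that no $T$-child of its minimum vertex can lie on it, and concludes that one of the two cycle edges at that vertex is not in $H$. You instead start from an arbitrary non-tree edge and locate the minimum of its fundamental cycle via the lowest common ancestor, which additionally shows that the non-tree edges can be removed in any order.
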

\begin{proof}
	We show how to transform $G$ to $H$ by progressively removing edges with \cref{p:min-edge}. Suppose $G \neq H$, then $G$ contains a cycle $C$. Let $v$ be the minimum-priority vertex in~$C$. Then no child of $v$ in $T$ (if any exists) can be contained in $C$; otherwise, that child would have lower priority (since $p$ is $T$-monotone). Thus, at most one edge incident to $v$ in $C$ can lead to the parent of $v$ in $T$, so the other edge $e$ is not contained in $H$. We can remove $e$; by \cref{p:min-edge}, we have $\Ind(G,p) = \Ind(G-e,p)$. Repeating this will eventually transform $G$ into $H$.
\end{proof}

Using \cref{p:reduce-deg-et,p:diff-mon-deg,p:et-lb-easy}, we obtain our lower bound:

\begin{lemma}\label{p:et-lb-general}
	Let $G$ be a graph, and let $T$ be a rooting of a spanning tree of $G$. Then $\log |\ETs(G)| \ge \Omega(H(T))$.
\end{lemma}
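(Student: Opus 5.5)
The plan is to combine two lower bounds, one of order $H(T) - \fO(n)$ and one of order $n-1$. Their maximum is $\Omega(H(T))$, because $H(T) \ge n-1$ by \cref{p:entropy-lbs}: when $H(T)$ is at least a large constant times $n$, the first bound dominates, and otherwise the second bound $n-1 \ge \Omega(H(T))$ suffices. The second bound comes straight from \cref{p:et-lb-easy}, since $G$ is connected (it has a spanning tree) and so $\log |\ETs(G)| \ge n-1$.

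For the first bound, I will show that $|\ETs(G)| \ge |\LEs(T)|$, generalizing \cref{p:lb-ct-trees} from trees to arbitrary graphs. Let $H$ be the spanning tree of which $T$ is a rooting. I will take two $T$-monotone priority functions $p, p'$ that induce different orderings of $V(G)$. By \cref{p:reduce-deg-et}, $\Ind(G,p) = \Ind(H,p)$ and $\Ind(G,p') = \Ind(H,p')$. By \cref{p:diff-mon-deg} applied to the tree $H$ with rooting $T$, we have $\Ind(H,p) \neq \Ind(H,p')$. So distinct elements of $\LEs(T)$ give distinct elimination trees of $G$, and therefore $|\ETs(G)| \ge |\LEs(T)|$.

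It remains to apply \cref{p:tree-les}, which gives $\log |\LEs(T)| \ge H(T) - n\log e$. Combining,
\[ \log|\ETs(G)| \ge \max\{H(T) - n \log e,\; n-1\} \ge \Omega(H(T)). \]
To make the last inequality explicit: if $H(T) \ge 2n\log e$, the first term is at least $H(T)/2$. Otherwise $H(T) < 2n\log e$, so $n - 1 \ge \Omega(H(T))$; for $n = 1$ both sides are $0$. There is no real obstacle here, since the arguments are assembled from existing lemmas. The only point that needs care is checking that the hypotheses of \cref{p:reduce-deg-et} hold, namely that $p$ is $T$-monotone on all of $V(G) = V(T)$. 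This holds by the choice of $p$.
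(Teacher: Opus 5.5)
Your proof is correct and follows the paper's argument. The paper likewise extends the tree-case injection from $T$-monotone orderings into $\ETs(G)$ to general $G$ via \cref{p:reduce-deg-et} and \cref{p:diff-mon-deg}, then combines Knuth's bound $\log|\LEs(T)| \ge H(T) - n\log e$ with the $n-1$ bound of \cref{p:et-lb-easy}; your case split (using $H(T)\ge n-1$ and handling $n=1$) just makes explicit the final $\max$ step the paper leaves implicit.
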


We now show how to generalize our \pCT{} algorithm to arbitrary graphs.
For a graph $(G,p)$ with vertex priorities, define $\wmin_p \colon E(G) \rightarrow \Prios$ as follows. For each edge $e = \{u,v\}$, let $\wmin_p(e) = \min( p(u), p(v) )$. From \cref{p:min-edge}, we easily get:

\begin{corollary}[Zhu and Mutchler \cite{ZhuMutchler1994}]\label{p:max-sptree}
	Let $(G,p)$ be a graph with vertex priorities. Let $H$ be the \emph{maximum} spanning tree on $(G,\wmin_p)$. Then $\Ind(G,p) = \Ind(H,p)$.
\end{corollary}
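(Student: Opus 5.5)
The natural approach is to start from \cref{p:min-edge} and remove non-tree edges of $G$ one at a time, each time checking that we delete an edge incident to the minimum-priority vertex of some cycle. We may assume $G$ is connected; otherwise we argue per component, and the maximum spanning forest restricts to a maximum spanning tree on each component. Since priorities are distinct, $\wmin_p$ need not be injective: many edges may share the same minimum endpoint. We therefore fix an arbitrary tie-breaking and let $H$ be any maximum spanning tree of $(G,\wmin_p)$. The argument below works for every such $H$.

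The key step is the cycle property of maximum spanning trees. Let $e=\{a,b\}\in E(G)\setminus E(H)$. Adding $e$ to $H$ closes a unique cycle $C_e$, consisting of $e$ and the $H$-path between $a$ and $b$. By maximality of $H$, we have $\wmin_p(f)\ge \wmin_p(e)$ for every edge $f$ on $C_e$; otherwise exchanging $f$ for $e$ would give a spanning tree of larger weight.

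Now let $v$ be the minimum-priority vertex of $C_e$. We claim $e$ is incident to $v$. Vertex $v$ has two incident edges on $C_e$, and each of them has $\wmin_p$-value $p(v)$. This is the smallest value any edge of $C_e$ can take, since every edge of $C_e$ has both endpoints on $C_e$. Suppose $e$ were not incident to $v$. Then both endpoints of $e$ differ from $v$ and have larger priority, so $\wmin_p(e)>p(v)$. But the edges of $C_e$ at $v$ have value $p(v)<\wmin_p(e)$, contradicting the cycle property. Hence $e$ is incident to $v$, and \cref{p:min-edge} gives $\Ind(G,p)=\Ind(G\gminus e,p)$.

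To iterate, note that $H$ remains a maximum spanning tree of $G\gminus e$, since it is a spanning tree there and no heavier spanning tree exists in a subgraph. The cycle $C_e$ for the next non-tree edge is still defined with respect to the same $H$ and still lies in the current graph. So we can repeat the step for every edge of $E(G)\setminus E(H)$, ending at $\Ind(G,p)=\Ind(H,p)$.

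The only delicate point is the tie-breaking and non-injectivity of $\wmin_p$. The argument above avoids it, because it uses only the weak cycle property, $\wmin_p(f)\ge\wmin_p(e)$, together with the strict inequality $p(v)<\wmin_p(e)$ that follows from distinct priorities.
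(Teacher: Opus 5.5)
Your proof is correct and takes the route the paper intends. The paper derives this corollary directly from \cref{p:min-edge} by deleting the non-tree edges one at a time, and your cycle-property argument supplies the step it leaves implicit: every non-tree edge $e$ is incident to the minimum-priority vertex $v$ of its fundamental cycle $C_e$, since otherwise the $C_e$-edges at $v$ would be tree edges with value $p(v)<\wmin_p(e)$. Your treatment of ties in $\wmin_p$ via the weak cycle property is also sound, so the conclusion holds for every maximum spanning tree $H$.
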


Computing the maximum spanning tree is equivalent to computing the minimum spanning tree in the comparison model. Thus, using Chazelle's algorithm~\cite{Chazelle2000}, we can compute $H$ in time $\fO(m \cdot \alpha(m,n))$, where $m = |E(G)|$, $n = |V(G)|$, and $\alpha$ is an inverse of the Ackermann function. Since $\Omega(m)$ is an obvious lower bound, this gives us an \emph{almost} universally optimal algorithm. However, we can modify a different algorithm~\cite{HaeuplerHladikEtAl2024a} to keep the running time within our budget.

\begin{restatable}{lemma}{restateJarnikPrim}\label{p:jp}
	There exists an algorithm that computes a maximum spanning tree of an edge-weighted graph $(G,w)$ in time $\fO( |E(G)| + |\ETs(G)| )$.
\end{restatable}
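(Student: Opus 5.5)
The plan is to run the Jarník--Prim algorithm (grow a tree from an arbitrary start vertex $s$, always adding the heaviest edge leaving the current tree) with the \emph{working-set heap} of Haeupler et al.~\cite{HaeuplerHladikEtAl2024a} as priority queue, and to bound its cost by $\fO(|E(G)| + \log |\ETs(G)|)$. This is stronger than the stated bound, since $\log |\ETs(G)| \le |\ETs(G)|$. The approach mirrors their universally optimal Dijkstra: Prim is Dijkstra with the key update ``$\max$ of edge weight'' instead of ``distance plus edge weight'', and the structural part of their analysis transfers unchanged.

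\textbf{Step 1 (implementation and cost accounting).} Keys are stored as the edges realizing them, and comparisons go through the weight oracle. Each edge causes at most one insert or increase-key, so edges contribute $\fO(|E(G)|)$ in total. The working-set heap guarantees that extracting a vertex $v$ costs $\fO(1 + \log |W_v|)$ amortized. Here $W_v$ is the working set of $v$: the vertices inserted into the heap while $v$ was present, including $v$. So the total time is $\fO(|E(G)| + n + \sum_v \log |W_v|)$. The term $n$ is absorbed because $\log|\ETs(G)| \ge n-1$ by \cref{p:et-lb-easy}.

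\textbf{Step 2 (counting orderings).} Call a vertex ordering \emph{connected} if every prefix induces a connected subgraph. Reversing such an ordering gives an elimination order in which the remaining graph always stays connected, that is, a degenerate elimination tree. Distinct connected orderings give distinct degenerate elimination trees, so $\log |\ETs(G)|$ is at least the logarithm of the number $N$ of connected orderings starting at $s$ (or with the first vertex free, which loses at most $\log n$). It therefore suffices to show $\sum_v \log |W_v| \le \log N + \fO(n)$.

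\textbf{Step 3 (the main obstacle).} This is the combinatorial heart of Haeupler et al.'s proof of universal optimality for Dijkstra, and I would adapt it directly. Every vertex in the heap is adjacent to the current tree. Hence at each extraction step, every heap element could legally be extracted next in some connected ordering. Their argument turns the per-step freedom $|W_v|$ into a lower bound on the number of valid exploration orders: it uses the working set rather than the raw heap size, via a charging and exchange argument that shows $\prod_v |W_v| \le 2^{\fO(n)} \cdot N$. I would first check that their proof uses only this property (every element present in the heap is a valid next extraction) together with the heap semantics, and not the additive distance structure. If so, the lemma follows verbatim. This verification is where I expect the difficulty. If some step genuinely uses distances, I would redo their exchange argument explicitly for connected orderings. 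Correctness of the output as a maximum spanning tree is just correctness of Prim and needs no new argument.
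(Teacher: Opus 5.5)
Your proposal is correct and follows essentially the same route as the paper: Dijkstra--Jarn\'{\i}k--Prim with a working-set (timestamp-optimal) heap, with extraction costs bounded by the known Dijkstra analysis (the paper cites van der Hoog, Rotenberg, and Rutschmann's $\sum_x \log(t_x - s_x) = \fO(n + \log|\Lin(G,s)|)$), then connected orderings from $s$ injected into degenerate elimination trees and the $n$ term absorbed via \cref{p:et-lb-easy}. The verification you flag in Step~3 can be skipped: Prim's extraction order is a connected ordering from $s$, hence Dijkstra's order for suitable edge weights with identical insertion and extraction times, so the cited bound applies as a black box.
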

\begin{proof}[Proof sketch.]
	We use a variant of the algorithm of Haeupler, Hladík, Rozhoň, Tarjan, and Tětek~\cite{HaeuplerHladikEtAl2024a} for Single-Source-Shortest-Paths. Their algorithm is Dijsktra's algorithm with a special heap, called a \emph{working-set fibonacci-like heap}. We can use this heap in the classical \emph{Dijkstra-Jarník-Prim} algorithm~\cite{Jarnik1930,Prim1957,Dijkstra1959} for minimum spanning trees.
	The analysis is the same, and with some simple observations, we can bound the running time by $\fO( \log |\ETs(G)| )$.
	A full technical proof is found in \cref{app:jp}, using the simplified analysis of Hoog, Rotenberg, and Rutschmann~\cite{HoogRotenbergEtAl2025}.
\end{proof}

We are now read to prove:
\restateCartTree*
\begin{proof}
	First, compute the edge priority function $w^{\min}_p$. Then, compute a maximum spanning tree $S$ of $(G,w^{\min}_p)$ with \cref{p:jp}. Finally, compute $\Ind(S,p)$ using \cref{p:ct-algo-trees}. By \cref{p:max-sptree}, we have $\Ind(S,p) = \Ind(G,p)$.
	
	Let $n = |V(G)|$ and $m = |E(G)|$.
	The running time of computing $w^{\min}_p$ is $\fO(m+n)$, and the running time to compute $S$ is $\fO( m + |\ETs(G)|)$. Computing $\Ind(S,p)$ takes $\fO(H(T))$ time, where $T$ is some rooting of $S$. By \cref{p:et-lb-general}, we have $H(T) \le \fO(\log |\ETs(G))$, so the running time is $\fO(m + \log |\ETs(G))$, as desired. This is universally optimal by the information-theoretic lower bound $\Omega(\log |\ETs(G)|)$ and the trivial lower bound $\Omega(m)$.
\end{proof}

\subsection{Path Minima}\label{sec:pm}

We now use Cartesian trees to build a data structure for the \pPM{} problem. We follow common approaches for the edge-weighted variant of the problem~\cite{Chazelle1987,DemaineLandauEtAl2014}.¸
The following lemma is crucial.

\begin{lemma}
	Let $G$ be a tree with priority function $p$, and let $T = \Ind(G,p)$. Then, for every pair $u, v \in V(G)$, the minimum-priority vertex on the path between $u$ and $v$ is precisely the lowest common ancestor of $u$ and $v$ in $T$.
\end{lemma}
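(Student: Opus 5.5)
The plan is to argue by induction on the number of vertices of $G$, following the recursive definition of $\Ind(G,p)$. Let $r$ be the minimum-priority vertex of $G$; then $r$ is the root of $T = \Ind(G,p)$. The child subtrees of $r$ are $\Ind(G_i,p)$, where $G_1, \dots, G_k$ are the components of $G \gminus r$. Since $G$ is a tree, each $G_i$ is again a tree, and the subtree of $T$ on $G_i$ is exactly $\Ind(G_i,p)$. We may assume the statement holds for each $G_i$.

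Fix $u, v \in V(G)$ and let $P$ be the path between them in $G$. I will distinguish two cases.

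In the first case, $u$ and $v$ lie in the same component $G_i$. Because $G_i$ is a connected subgraph of the tree $G$, the unique $u$--$v$ path in $G$ is contained in $G_i$; otherwise we would obtain a second path and hence a cycle. So $P$ is also the $u$--$v$ path in $G_i$. Moreover, the LCA of $u$ and $v$ in $T$ equals their LCA in the subtree $\Ind(G_i,p)$. The claim then follows from the induction hypothesis.

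In the second case, $r \in \{u,v\}$ or $u$ and $v$ lie in different components. Then the LCA of $u$ and $v$ in $T$ is $r$. The path $P$ must pass through $r$: either $r$ is an endpoint, or removing $r$ separates $u$ from $v$. Since $r$ has the globally minimum priority, it is the minimum-priority vertex on $P$. Distinctness of priorities makes this minimizer unique.

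The only mildly delicate step is the claim in the first case that the path stays inside $G_i$ and that LCAs in a subtree agree with LCAs in $T$. Both follow directly from uniqueness of tree paths and from the definition of rooted subtrees, so I expect no real obstacle.

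The base case, $|V(G)| = 1$, is trivial.
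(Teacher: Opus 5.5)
Your proof is correct and uses the same three facts as the paper's: the $u$--$v$ path is confined to a connected piece by uniqueness of tree paths, the root of that piece separates $u$ from $v$, and that root has minimum priority in the piece. The paper applies these directly to $G[V(T_a)]$ for the lowest common ancestor $a$ instead of descending to it by induction; your explicit case $r \in \{u,v\}$ also covers the situation $a \in \{u,v\}$, which the paper's wording (``$H \gminus a$ has a component containing $u$ and one containing $v$'') passes over.
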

\begin{proof}
	Consider the lower common ancestor $a$ of $u$ and $v$, let $H = G[V(T_a)]$, and let $P$ be the path between $u$ and $v$ in $G$. Then $H$ is connected by definition, but $H \gminus a$ has at least two connected components, one containing $u$, and one containing $v$. Now observe that:
	\begin{itemize}
		\itemsep0pt
		\item Since $G[V(T_a)]$ is connected, we have $V(P) \subseteq V(H)$.
		\item Since removing $a$ from $H$ disconnects $u$ and $v$, we have $a \in V(P)$.
		\item By definition, $a$ is the minimum-priority vertex in $H$.
	\end{itemize}
	These three observations imply that $a$ is the minimum-priority vertex of $P$.
\end{proof}

We can now solve \pPM{} as follows: Compute a Cartesian tree $T$ on $(G,p)$, and then preprocess $T$ such that lowest common ancestors in $T$ can be found in constant time. This is possible in time $\fO(|V(T)|) \le \fO(\log |\ETs(G)|)$ using any one of a variety of data structures~\cite{HarelTarjan1984,SchieberVishkin1988,BenderFarach-Colton2000,FischerHeun2006}.
Thus, we have:

\restatePMDS*

We now prove a matching lower bound. The idea is that we can solve \pCT{} solely with a (static!) \pPM{} data structure. This means that the overall number of comparisons made by the \pPM{} data structure must be at least $\log |\ETs(G)|$. Since the number of $\alg{path-min}$ queries is large, we do not get any nontrivial lower bound for the running time. However, if we assume that no comparisons are allowed within $\alg{path-min}$ queries, then preprocessing needs $\Omega(\log |\ETs(G)|)$ comparisons.

\begin{lemma}
	Let $D$ be a \pPM{} data structure such that \alg{path-min} does not make any comparisons (but may take arbitrary time). Then, initializing $D$ with a graph $G$ requires $\Omega( \log |\ETs(G)| )$ time.
\end{lemma}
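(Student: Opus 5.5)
The plan is to reduce \pCT{} on $G$ to \pPM{} queries alone. Then the information-theoretic bound of \cref{sec:et-lb} forces all the needed comparisons into the preprocessing phase. Fix a tree $G$ and an arbitrary \pPM{} data structure $D$ whose \alg{path-min} queries make no comparisons. I will describe an algorithm that, after initializing $D$ on $(G,p)$, outputs $\Ind(G,p)$ using only \alg{path-min} queries and no further comparisons or oracle calls. Since every elimination tree of $G$ arises as $\Ind(G,p)$ for some $p$, any comparison-based procedure computing it must distinguish $|\ETs(G)|$ outcomes, so it makes at least $\log |\ETs(G)|$ comparisons on some input. All of these comparisons occur during initialization of $D$. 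Each comparison costs at least unit time, so initialization takes $\Omega(\log |\ETs(G)|)$ time for that worst-case $p$.

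The core step is reconstructing $\Ind(G,p)$ from \alg{path-min} answers. I will do this recursively on connected vertex sets $U$, starting with $U = V(G)$.

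\textbf{Find the root.} The root of $\Ind(G[U],p)$ is the minimum-priority vertex of $U$. To find it, take an arbitrary $u \in U$ and compute $a \gets \alg{path-min}(u,u')$ for every $u' \in U$. Since $G[U]$ is a subtree, every path between vertices of $U$ stays inside $U$. Each answer $a$ is therefore a vertex of $U$ whose priority is at most $p(u)$ and at most $p(u')$.

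\textbf{Compare via queries.} I need to pick out the overall minimum without comparing priorities directly. The key observation is that for any $x, y \in U$, the query $\alg{path-min}(x,y)$ returns a vertex no heavier than either endpoint. So $\alg{path-min}(x,y) = x$ certifies $p(x) \le p(y)$ when $x$ lies on the path, which it always does as an endpoint. This means $\alg{path-min}(x,x') = x$ if and only if $x$ is the minimum on that path.

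\textbf{Tournament.} Using this, run a tournament over $U$. Maintain a current candidate $c$, and for each $u' \in U$ replace $c$ by $\alg{path-min}(c,u')$. The new value is the minimum on a path containing both $c$ and $u'$, so its priority is at most both, and it lies in $U$. After processing all of $U$, $c$ has priority at most every vertex of $U$, so $c$ is the root. This uses only query answers, no comparisons.

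\textbf{Recurse.} Remove $c$. The components of $G[U] - c$ are computable from the topology of $G$ alone, with no priority access. Recurse on each component and attach the resulting roots as children of $c$. Correctness follows from the recursive definition of the Cartesian tree.

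The main obstacle is formal rather than combinatorial: making precise the claim that the comparison count of initialization alone is lower bounded by $\log|\ETs(G)|$. I would argue via a decision tree. The combined procedure (initialization, then queries) is a comparison tree whose internal nodes all lie in the initialization phase, since queries make no comparisons and the reconstruction step is priority-oblivious. Its leaves must be labeled by at least $|\ETs(G)|$ distinct outputs, so its depth is at least $\log|\ETs(G)|$. The query phase may take huge time, but that does not matter, because it contributes no branching.
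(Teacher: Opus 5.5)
Your proposal is correct and follows essentially the same route as the paper. Both reconstruct $\Ind(G,p)$ recursively by finding the minimum of each connected subgraph through a chain of \alg{path-min} queries, so all $\log|\ETs(G)|$ comparisons are forced into initialization. The only difference is that the paper chains the queries over the leaves of $H$ alone (correct because those paths cover $H$), while your tournament runs over all vertices; this is equally valid, since the number of queries is irrelevant here, and your preliminary fixed-$u$ ``Find the root'' step is superfluous because the tournament supersedes it.
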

\begin{proof}
	As outline above, we show how to compute $\Ind(G,p)$ using $D$, with zero additional comparisons (but arbitrary running time). Observe that is suffices to show how to find the minimum-priority vertex of any connected subgraph $H$ of $G$.
	
	To do this, first enumerate the leafs $u_1, u_2, \dots, u_k$ of $H$, in any order. Then arbitrarily choose $v_0 \in V(H)$. For each $i \in [k]$, compute $v_i = D.\alg{path-min}(u_i, v_{i-1})$. Then $v_k$ is the minimum-priority vertex of $H$, since the \alg{path-min} queries cover the whole tree $G$.
\end{proof}

Recall that our data structure from \cref{p:pmds} performs no comparisons in \alg{path-min} queries. Thus, we have shown:

\restatePMDSUOpt*

Note that we do not achieve universal optimality even if \emph{one} comparison in each \alg{path-min} query is allowed. If the input graph $G$ is a star, then it suffices to compare every leaf with the central vertex of $G$ in preprocessing. Afterwards, only a single comparison is needed in each query.
On the other hand, for general trees, a $\Omega(n \log\log n)$ preprocessing lower bound is known~\cite{Pettie2006}.\footnote{A one-comparison \alg{path-min} data structure would imply a two-comparisons \emph{online MST verification} data structure, which requires $\Omega(n \log \log n)$ preprocesssing~\cite{Pettie2006}.}
Finding a universally optimal preprocessing algorithm for one-comparison \alg{path-min} queries is an interesting open problem.

\paragraph{Bottleneck vertex queries.}
Recall the following definition from the introduction. Let $G$ be an arbitrary graph and $p$ be a priority function for $G$. The \emph{bottleneck} of a path in $G$ is the minimum-priority vertex on that path, and a $\alg{bottleneck}(u,v)$ query returns the maximum bottleneck among all paths between $u$ and $v$ in $G$.

Observe that if $G$ is a tree, then \alg{bottleneck} queries are exactly \alg{path-min} queries. As observed previously~\cite{Hu1961,ShapiraYusterEtAl2011,DemaineLandauEtAl2014}, the general case reduces to the tree case, in essentially the same way as for the \pCT{} problem.

\begin{lemma}[Shapira, Yuster, Zwick~\cite{ShapiraYusterEtAl2011}]\label{p:min-edge-bottleneck}
	Let $(G,p)$ be a graph with vertex priorities.
	Let $C$ be a cycle and let $e$ be an edge incident to the minimum-priority vertex of~$C$.
	Then, for every pair $u,v \in V(G)$, the query $\alg{bottleneck}(u,v)$ returns the same answer for $(G,p)$ and $(G \gminus e, p)$.
\end{lemma}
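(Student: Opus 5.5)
We will show that removing $e$ changes neither the set of bottleneck values achievable between $u$ and $v$ in the relevant sense nor the maximum. Let $C$ be the cycle, $a$ its minimum-priority vertex, and $e=\{a,b\}$ with $b\in V(C)$. Write $C$ as $a, b, c_1, \dots, c_j, a'$ back to $a$, so that $C-e$ contains a path $Q$ from $a$ to $b$ using the other edge of $C$ at $a$. Since $G-e$ is a subgraph of $G$, every $u$--$v$ path in $G-e$ is also a path in $G$. Hence the maximum bottleneck in $G-e$ is at most that in $G$. We must prove the reverse inequality.

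Take a $u$--$v$ path $P$ in $G$ attaining the maximum bottleneck $\beta$. If $P$ avoids $e$ we are done. Otherwise $P$ traverses $e$. We replace that occurrence of $e$ by the path $Q$, which yields a $u$--$v$ walk $P'$ in $G-e$. Every vertex of $Q$ lies on $C$, so it has priority at least $p(a)$, and $a$ itself was already on $P$. The minimum priority on $P'$ is therefore $\min(\beta, \min_{Q} p) = \beta$, because $\beta \le p(a) \le \min_Q p$.

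Next we shortcut repeated vertices in $P'$ to obtain a simple path $P''$ in $G-e$. Removing vertices can only raise the minimum, so the bottleneck of $P''$ is at least $\beta$. Since $\beta$ is the maximum over all paths in $G$, it is exactly $\beta$. Under distinct priorities, the answer vertex should also be the same; if $\alg{bottleneck}$ is required to return a vertex rather than a value, this follows because the value determines the vertex. The only subtle point is the endpoint case where $u$ or $v$ equals $a$ or $b$. The substitution argument still works there, because $Q$ contains both endpoints of $e$.
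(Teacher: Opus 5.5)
Your proof is correct and matches the paper's argument: replace the occurrence of $e$ on an optimal path by the rest of the cycle $C$, which cannot lower the bottleneck because the minimum-priority vertex of $C$ is an endpoint of $e$ and so is already on the path. You also spell out the easy inequality and the shortcutting of the resulting walk to a simple path, both of which the paper's sketch leaves implicit.
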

\begin{proof}[Proof sketch.]
	Consider a path containing the edge $e$. Then, we can replace $e$ by $C \setminus \{e\}$. This cannot change the bottleneck of the path, since the minimum-priority vertex of $C$ is an endpoint of $e$, and was thus already contained in the path.
\end{proof}

As in \cref{sec:ct-graphs}, \cref{p:min-edge-bottleneck} implies that we can first compute a minimum spanning tree $H$ on $(G,\wmin_p)$ using the Jarník-Prim Dijkstra algorithm (\cref{p:jp}), and then use our \pPM{} data structure.
\Cref{p:min-edge-bottleneck} also implies that \alg{bottleneck} queries on $G$ are not easier than \alg{path-min} queries on $H$; thus, our data structure is universally optimal.

\section{Semigroup sums}\label{sec:semigroups}

In this section, we show:

\restateDSTSMain*

Recall that the input is now a a vertex-weighted tree $(T,w)$, with weights from some commutative semigroup $(S,+,0)$, and we need to answer the query $\alg{tree-sum}(v)$, which returns the sum of all nodes in the same tree as $v$.

Most of our data structure can be used as-is, with the only change being changing $\min(x,y)$ to $x+y$. We list all places where our data structure computes minima in the following:
\begin{itemize}
	\item The dynamic tree $D$ maintains the minimum super-node.
	We can easily maintain the sum of all super-node weights instead.
	
	\item In our splay-tree-based \pSSM{} data structure, each node $x$ in the splay tree $T$ maintains the minimum priority $q_x$ of $T_x$. We can instead maintain the sum of weights in $T_x$, and easily update it for each relevant binary tree operation (rotation, attaching to the root, detaching from the root).
	The crucial observation here is that a sum $q_x$ can be recomputed from the weight of $x$ and the sums of its (up to two) children.
	
	\item In contrast, the path-DTM data structures for each chain depend on certain properties of the minimum function and need to be entirely replaced; see below.
	
	\item Finally, to answer $\alg{tree-min}$ queries, we compute the minimum of three values; instead, we can simply take their sum.
\end{itemize}

It remains to show how to solve \pDSTS{} on a path. As mentioned in the introduction, a sufficiently fast \emph{semigroup range query} data structure does not exist. However, we can instead directly solve the problem, with linear preprocessing time and constant query time.

Our data structure is based on an algorithm of Weiss~\cite{Weiss1994} for constructing \emph{Cartesian trees} on paths in linear time. While a much simpler algorithm for this task exists~\cite{GabowBentleyEtAl1984}, Weiss's approach is still useful to us.

\begin{lemma}
	There is a data structure for the \pDSTS{} problem on paths with linear preprocessing time and constant query time.
\end{lemma}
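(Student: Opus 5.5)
We need a data structure for a path $v_1,\dots,v_n$ (rooted at $v_1$) with commutative semigroup weights, supporting \alg{cut} and \alg{tree-sum}. The target is $\fO(n)$ preprocessing and $\fO(1)$ amortized time per operation. The constant should be amortized, which suffices for universal optimality since the path structure only enters total running time.

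The plan is to use a block decomposition (micro/macro), in the spirit of Gajewska--Tarjan and Weiss. Divide the path into blocks of size $b = \Theta(\log n)$. Inside each block, precompute prefix sums and suffix sums, taking $\fO(n)$ additions in total. At the macro level the path of $n/b$ blocks is handled by a structure that may spend $\fO(\log n)$ per block-level event. Concretely, for each maximal run of whole blocks inside a current component we store its total sum, and a cut recomputes affected run totals lazily. A component then consists of a suffix of one block, a run of whole blocks, and a prefix of another block, so \alg{tree-sum} is the sum of three stored values.

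The difficulty is that a cut inside a block destroys the prefix/suffix sums of that block. The first fix I would try is recursion plus amortization: when a block is cut at position $i$, the two pieces are new shorter segments whose prefix/suffix sums could be rebuilt at cost linear in their size, but repeated cuts make that quadratic. So instead I would use the Weiss-style idea of doubling from the cut point outward. Each component $[a,c]$ maintains its sum implicitly via a balanced set of precomputed sums over canonical intervals. I would store, for each position, prefix sums within blocks at levels $2^j$ (a sparse-table-like structure restricted to $\fO(n)$ total size via the block trick). After a cut at $i$, we compute the sums of $[a,i]$ and $[i+1,c]$ by scanning the shorter side, which costs $\fO(\min(i-a+1, c-i))$ additions, and obtain the other side... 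Subtraction is not allowed, so we cannot deduce the longer side from the total. Instead, the longer side's sum must be assembled from canonical pieces in $\fO(1)$. This suggests maintaining, for each component, sums aligned to a Cartesian-tree/finger decomposition. The charge for scanning the shorter side gives $\fO(n\log n)$ by the usual smaller-half argument, which is too much. The block level is meant to reduce this: shorter-side scanning happens only over blocks, giving $\fO((n/b)\log n)=\fO(n)$, while within a block of size $\log n$ we use word-level tables of all $\fO(b^2)$ interval sums. That would cost $\fO(n\log n)$ additions, so within blocks I would recurse once more with blocks of size $\log\log n$, and repeat a constant number of times, until blocks are small enough that precomputing all interval sums costs $\fO(n)$ overall.

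The main obstacle is exactly this bottom level. In the semigroup model we cannot use lookup tables that circumvent additions, so interval sums in tiny blocks must actually be computed. I would resolve it by computing sums lazily: only when a \alg{tree-sum} query needs the partial-block sum do we scan it, charging the scan to the cuts that created that partial block. This relies on the fact that each partial block is only ever shrunk by cuts, giving amortized $\fO(1)$ per operation. I expect this bottom-level charging argument to be the hardest part to make airtight.
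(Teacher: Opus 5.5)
Your macro level is workable: a balanced tree of block sums over the $n/b$ blocks costs $\fO(\log n)$ per block-level event, and there are only $\fO(n/b)$ such events. The argument fails exactly at the point you flagged. Two cuts in the same block create a component lying strictly inside that block, and its sum is neither a stored prefix sum nor a stored suffix sum. Neither of your remedies gives linear time.

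First, iterating the blocking a constant number $c$ of times leaves bottom blocks of size $\log^{(c)} n=\omega(1)$. Storing all interval sums of blocks of size $b$ costs $\Omega(nb)$ additions, since each block has $\Theta(b^2)$ distinct interval sums. So you only get $\fO(n\log^{(c)} n)$, and reaching constant block size needs $\log^* n$ levels.

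Second, the lazy-scanning charge is not merely hard to make airtight; it is false. Since you cannot subtract, a cached sum is useless as soon as the segment shrinks, so ``only ever shrunk'' is precisely what breaks the amortization. Concretely: isolate a block, cut off its first vertex, then repeatedly cut off the last vertex of the remaining segment and query it. The $j$-th query rescans about $b-j$ vertices. So $\Theta(b)$ operations cost $\Theta(b^2)$, which is $\Theta(b)$ amortized per operation and $\Theta(nb)$ over all blocks.

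The missing idea is a representation from which both sides' sums are available after a cut without subtraction, while only logarithmically many precomputed pieces of the smaller side are touched. The paper (following Weiss and Gajewska--Tarjan) does this as follows:
\begin{enumerate}[(1)]
\item Precompute, bottom-up in $\fO(n)$ time, the sums of all $2n-1$ dyadic intervals $I_{p,q}$.
\item Represent each current subpath by its minimal set of dyadic intervals, stored in a deque built from two stacks. Every stack entry also stores the sum of itself and everything below it.
\item The component's sum is then one addition of the two stack tops, and popping never needs subtraction.
\end{enumerate}
On a cut, one pops intervals from both ends alternately until reaching the interval containing the cut. This touches $\fO(1+\log\min(a,L-a))$ intervals, where $a$ and $L-a$ are the sizes of the two new subpaths. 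The interval containing the cut is replaced by its dyadic sub-pieces. Each dyadic interval is broken at most once, so only $\fO(n)$ pieces are ever created. Since $\sum(1+\log\min(a,L-a))$ over any recursive splitting of a path of length $L$ is $\fO(L)$, this gives linear total time with constant-time \alg{tree-sum}.

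So your smaller-half instinct is right. But the smaller side must be scanned in units of canonical intervals rather than vertices, and the larger side's sum must be read off the stack aggregates. With that, no blocking is needed at all.
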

\begin{proof}[Proof sketch.]
	Weiss's algorithm for Cartesian trees works as follows: Identify the minimum vertex of the path, make it the root of the Cartesian tree, remove it, and then recurse on the (up to) two remaining paths.
	It is easy to check that, actually, the algorithm supports removing an \emph{arbitrary edge} instead of specifically the (up to) two edges incident to the minimum vertex. The minimum of each path is always maintained explicitly. This gives us a data structure for DTM.
	
	Generalizing this data structure to semigroup turns out to be quite easy. Suppose the input path has $n = 2^k$ nodes.
	There are two places where minima are computed:
	\begin{itemize}\itemsep0pt
		\item At the start, for each interval of the form $I_{p,q} = [q \cdot 2^p+1, (q+1) \cdot 2^p]$ with $0 \le p \le k$ and $0 \le q < n / 2^p$, the algorithm computes the minimum of the subpath corresponding to $I$, in $\fO(n)$ total time. This can easily be done for the respective semigroup sums.
		
		\item For each current subpath, the data structure stores precisely the minimum set of intervals $I_{p,q}$ whose union corresponds to the subpath. These intervals are stored in a double-ended queue (deque) that supports finding the minimum in addition to the usual deque operations, all in constant time.
		
		This deque is due to Gajewska and Tarjan~\cite{GajewskaTarjan1986}, and is called, somewhat deceptively, a \emph{deque with heap order}. This data structure implements a deque with the familiar technique of using two stacks for the left and right part of the deque. An amortization argument shows that $m$ deque operations can be performed using $\fO(m)$ stack operations.
		
		Each stack maintains the minimum of each of its suffixes. This requires $\fO(1)$ time to maintain per stack operation, and allows computing the minimum of the deque by taking the minimum of the two whole stacks.
		Observe that the same can be done with semigroup sums.\qedhere
	\end{itemize}
\end{proof}

With this, we have shown:

\restateDSTSMain*

\paragraph{Remark.} The \pPM{} problem can also be generalized to semigroups~\cite{Tarjan1979,AlonSchieber1987}. However, our Cartesian-tree based approach does not work for this generalization.

\section{Edge-weighted graphs}\label{sec:edge-weights}

\newcommand{\tG}{\widetilde{G}}
\newcommand{\tT}{\widetilde{T}}
\newcommand{\tp}{\widetilde{p}}
\newcommand{\tS}{\widetilde{S}}
\newcommand{\tk}{\widetilde{k}}

In this section, we consider variants of our problems where priorities are given for edges instead of vertices. The following construction is central.

Let $G$ be a graph, and let $p \colon E(G) \rightarrow \Prios$ be a priority function on the \emph{edges} of $G$. The graph $\widetilde{G}$ is constructed as follows (see \cref{fig:edge-reduct}): Replace each edge $e$ between vertices $u, u'$ by a vertex $v_e$, which is adjacent to $u$ and $u'$. I.e., the edge $e$ is \emph{subdivided} with a vertex $v_e$. Define $\widetilde{p}$ as $\widetilde{p}(u) = \infty$ for $u \in V(G)$, and $\tp(v_e) = p(e)$ for $e \in E(G)$.\footnote{Note that several vertices can have the same priority $\widetilde{p}(u) = \infty$; this will be addressed later when necessary.}

\begin{figure}
	\centering
	\includegraphics{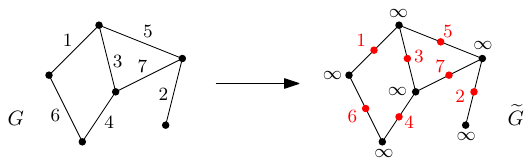}
	\caption{The reduction from edge-weighted to vertex-weighted graphs. Priorities are indicated by numbers, new vertices $v_e$ are colored red.}\label{fig:edge-reduct}
\end{figure}

For a rooted tree $T$ with a priority function $p$ on its edges, we construct $\tT$ similarly: Replace each edge $e$ from child $u$ to parent $u'$ with a new node $v_e$, and make $u$ a child of $v_e$ and $u'$ the parent of $v_e$. We also write $v_u = v_e$.
Define $\tp$ as above. Clearly, if $T$ is a rooting of a tree $G$, then $\tT$ is a rooting of $\tG$.

We now show that the transformation from $T$ to $\tT$ does not increase the tree entropy by much.

\begin{lemma}\label{p:Hm-edge-bound}
	Let $T$ be a rooted tree on $n$ nodes, and let $k \in \N_+$. Then $H_{k}(\tT) \le 2 H_k(T) + 2k$.
\end{lemma}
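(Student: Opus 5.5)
The plan is to pick an optimal set for $H_k(\tT)$, project it onto the original nodes of $T$, and compare the two sums term by term. The image set has at most $k$ nodes and is descendant-closed, while each original node is hit by at most two nodes of the optimal set.

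First, I apply \cref{p:Hm-ass-closed-le} to $\tT$. This gives a descendant-closed set $\tS \subseteq V(\tT)$ with $|\tS| \le k$ and $H_{\tS}(\tT) = H_k(\tT)$. I use the lemma rather than \cref{p:Hm-ass-closed}, so that I need not assume $k \le |\tT|$.

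Next, let $S = \tS \cap V(T)$ be the set of original nodes in $\tS$. Two facts follow from descendant-closure:
\begin{itemize}
\item If $v_u \in \tS$, then its child $u$ is also in $\tS$, so $u \in S$. Hence $\tS \subseteq S \cup \{v_u : u \in S\}$, and so $|\tS| \le 2|S|$. Also $|S| \le |\tS| \le k$.
\item If $u \in S$, then every descendant of $u$ in $\tT$ lies in $\tS$. This includes every original descendant of $u$ in $T$, so $S$ is descendant-closed in $T$.
\end{itemize}

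By \cref{p:Hm-desc-formula}, applied in $\tT$, and writing $\tk = |\tS|$, we have $H_{\tS}(\tT) = \sum_{x \in \tS} \log \frac{\tk}{|\tT_x|}$. Now I assign each $x \in \tS$ to an original node: $x = u$ is assigned to itself, and $x = v_u$ is assigned to $u$. In both cases $V(\tT_x)$ contains all original nodes of $T_u$, so $|\tT_x| \ge |T_u|$. Each $u \in S$ receives at most two nodes of $\tS$, and every term is nonnegative because $\tT_x \subseteq \tS$. Therefore
\[
H_{\tS}(\tT) \le 2 \sum_{u \in S} \log \frac{\tk}{|T_u|} \le 2 \sum_{u \in S} \log \frac{2|S|}{|T_u|} = 2|S| + 2 H_S(T).
\]
The middle step uses $\tk \le 2|S|$, and the last step uses \cref{p:Hm-desc-formula} in $T$.

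Finally, since $|S| \le k$, we get $2|S| + 2H_S(T) \le 2k + 2H_k(T)$, which proves the claim.

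The only delicate point is the direction of the inequality $|\tT_x| \ge |T_u|$, together with the nonnegativity of the dropped terms. This is where descendant-closure of $\tS$ is essential. It guarantees that the intersections with $\tS$ are full subtrees, so that both formulas from \cref{p:Hm-desc-formula} apply.
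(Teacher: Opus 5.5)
Your proof is correct and follows essentially the same route as the paper: take an optimal descendant-closed $\widetilde{S}$, project it to $S = \widetilde{S} \cap V(\widetilde{T}) \cap V(T)$, charge each subdivision node $v_u$ to its child $u$, and combine $|\widetilde{S}| \le 2|S|$ with \cref{p:Hm-desc-formula} in both trees. Your version is slightly tidier in two places. You invoke \cref{p:Hm-ass-closed-le}, so you need no assumption $k \le |\widetilde{T}|$, unlike the paper's use of \cref{p:Hm-ass-closed}. You also state explicitly the nonnegativity of the doubled terms, which the paper's strict-inequality step relies on without saying so.
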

\begin{proof}
	By \cref{p:Hm-ass-closed}, there exists a set $\tS \subseteq V(\tT)$ of size $k$ that is closed under taking descendants and satisfies $H_k(\tT) = H_{\tS}(\tT)$. Let $S = \tS \cap V(T)$. Since $|S| \le k$, we have $H_k(T) \ge H_S(T)$ by definition. Thus, it suffices to show that $H_{\tS}(\tT) \le 2 H_S(T) + 2|S|$.
	
	Observe that $S$ is also closed under taking descendants.
	Since each $u \in \tS \setminus V(T)$ has a single child in $T$, and that child is contained in $S$, we have $|\tS| \le 2 |S|$.
	
	We now estimate $H_{\tS}(\tT)$.
	For each node $u \in S$, we have $|\tT_u| = 2|T_u| - 1$. Further, for each non-root node $u \in S$, we have $|\tT_{v_u}| > |\tT_u|$. Using \cref{p:Hm-desc-formula} and the fact that $S$ and $\tS$ are both closed under taking descendants, we have:
	\begin{align*}
		H_{\tS}(\tT) = \sum_{u \in \tS} \log \frac{|\tS|}{|\tT_u|}
			< 2 \sum_{u \in S} \log \frac{|\tS|}{|\tT_u|}
			\le 2 \sum_{u \in S} \log \frac{2|S|}{2|T_u|-1}\\
			\le 2 \sum_{u \in S} \log \frac{2|S|}{|T_u|}
			\le 2|S| + 2 \sum_{u \in S} \log \frac{|S|}{|T_u|}
			\le 2|S| + 2 H_S(T).\tag*{\qedhere}
	\end{align*}
\end{proof}

\subsection{Decremental tree minima with edge weights}\label{sec:edtm}
\newcommand{\pEDTM}{\problem{Edge Decremental Tree Minima}}

Consider the following variant of \pDTM{}, which we call \pEDTM{} (EDTM). Given is an initial rooted tree $T$ and a priority function $p$ on the \emph{edge set} $E(T)$. $\alg{tree-min}(v)$ now reports the minimum-priority edge in the tree containing $v$, or $\infty$ if that tree contains no edges. \alg{cut} works as usual.

We can now use our DTM data structure to solve the EDTM problem as follows. Consider an EDTM input $(T,p)$. First, build the corresponding DTM input $(\tT, \tp)$, and initialize a DTM data structure on it. Whenever an edge $e$ is deleted in $T$, we delete the two edges adjacent to the corresponding vertex $v_e$ in $T$. More formally, for a given $\alg{cut}(u)$ in $T$, we perform $\alg{cut}(u)$ and $\alg{cut}(v_u)$ in $\tT$.
\alg{tree-min} queries can be answered directly.

It is easy to see that this is a correct EDTM data structure. From \cref{p:ub-refined}, we obtain:
\begin{lemma}\label{p:edtm-ub}
	Let $T$ be a rooted tree with $n$ nodes, let $p$ be a priority function for $E(T)$, and let $\sigma$ be a sequence of $m$ EDTM operation applied to $(T,p)$.
	Then, our data structure takes $\fO( H_{2m}(\tT) + m + n )$ total time for $(T,p,\sigma)$.
\end{lemma}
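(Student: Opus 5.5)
The plan is to apply \cref{p:ub-refined} directly to the simulated DTM instance $(\tT,\tp,\tilde\sigma)$. Here $\tilde\sigma$ is the sequence of DTM operations our reduction issues on $\tT$. Each EDTM $\alg{tree-min}$ becomes one DTM $\alg{tree-min}$. Each EDTM $\alg{cut}(u)$ becomes the two DTM operations $\alg{cut}(u)$ and $\alg{cut}(v_u)$. Thus $|\tilde\sigma|\le 2m$. Also $|V(\tT)| = 2n-1$, since $T$ has $n-1$ edges and each is subdivided once. The translation itself costs $\fO(1)$ per operation: computing $v_u$ from $u$ is a table lookup, and building $(\tT,\tp)$ takes $\fO(n)$ time.

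\Cref{p:ub-refined} bounds the total DTM time by $\fO(H_{|\tilde\sigma|}(\tT) + |\tilde\sigma| + |V(\tT)|)$. Since $H_k$ is monotone in $k$ (second inequality of \cref{p:Hm-mon}), we have $H_{|\tilde\sigma|}(\tT)\le H_{2m}(\tT)$. The remaining terms are $\fO(m+n)$. Adding the translation overhead gives the claimed $\fO(H_{2m}(\tT)+m+n)$.

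One technical point needs care: $\tp$ assigns $\infty$ to all original vertices, so priorities are not pairwise distinct. I would fix this by breaking ties among the $\infty$-vertices with an arbitrary fixed order, for instance by vertex index, realized without oracle calls. \Cref{p:ub-refined} holds for every priority function and counts only time, so it applies unchanged. Correctness then follows because $\alg{tree-min}$ returns a subdivision vertex $v_e$, and hence reports $e$, exactly when the component contains an edge. Otherwise it returns an original vertex, which we report as $\infty$.

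I do not expect a real obstacle. The only thing to verify is that the tie-breaking is compatible with the machine model, since comparisons involving $\infty$ are resolved without the oracle.
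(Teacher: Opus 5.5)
Your proposal is correct and matches the paper's argument: apply \cref{p:ub-refined} to the simulated DTM instance on $\widetilde{T}$. That instance has $2n-1$ nodes and at most $2m$ operations, and monotonicity of $H_k$ in $k$ then gives the stated bound. Your explicit tie-breaking of the $\infty$ priorities is a detail the paper leaves implicit at this point (it only introduces the $\infty_v$ tie-breaking later for Cartesian EPTs), and you handle it correctly.
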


We now prove a corresponding lower bound.
We show that the \pTTS{} problem on $T$ can be solved with an EDTM data structure on $T$, using $2k$ operations. Given is a tree $T$ and a $T$-monotone vertex priority function $p$. First, compute an edge priority function $p'$, where $p'(e) = p(u)$ for each edge $e$ that connects a child $u$ with its parent. Now build an EDTM data structure on $(T,p')$. Let $r$ be the root of $r$. Call $\alg{find-min}(r)$, which reports the minimum-priority edge $e$. Output $p'(e)$, delete $e$, and repeat until there are no more edges. At the end, output the root priority $p(r)$.

Observe that whenever this algorithm finds an edge $e$, one of its endpoints must currently be a leaf (since $p$ is $T$-monotone). Thus, all remaining edges stay in the same component as the initial root $r$.
The output of the algorithm is clearly the priorities of $(T,p)$ in sorted order. From our lower bound for \pTTS{} (\cref{sec:tts-lb}), we obtain:

\begin{lemma}\label{p:edtm-lb}
	For every EDTM data structure $X$, every tree $T$, and every positive integer $k \le |V(T)|-1$, there exists a sequence $\sigma$ of $2k$ operations such that $X$ takes total time $\log |\LEs_k(T)| \ge \Omega(H_k(T) - k)$ for $(T,p,\sigma)$.
\end{lemma}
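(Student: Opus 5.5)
The proof rests on the reduction described just before the statement, which shows that an EDTM data structure on $(T,p')$ can solve \pTTS{} on $(T,p)$. Two things need to be checked. First, that the reduction uses $2k$ operations and the information-theoretic bound $\log|\LEs_k(T)|$ applies, exactly as in \cref{p:lb-le-prefixes}. Second, that $\log|\LEs_k(T)| \ge H_k(T) - \fO(k)$, which is the calculation at the end of the proof of \cref{p:dtm-full-lb}.

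\emph{Step 1 (reduction).} Fix $T$ and $k \le |V(T)|-1$. For a $T$-monotone $p$, define $p'(e)=p(u)$, where $u$ is the child endpoint of $e$. Since $p'$ is injective on edges, it is a valid edge priority function. We perform $k$ rounds, each consisting of $e \gets X.\alg{tree-min}(r)$ followed by $X.\alg{cut}(u_e)$, for $2k$ operations in total.

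We maintain the following invariant: the component of $r$ consists of $r$ together with all nodes not yet output, and it is a subtree of $T$ containing $r$ that is closed under taking ancestors. If the minimum edge $e$ of that component had child endpoint $u$ with a child $c$ still present, then the edge above $c$ would have priority $p(c)<p(u)$, which is a contradiction. Hence $u$ is a leaf, and cutting it removes only $u$ from the component, so the invariant is preserved.

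By induction, then, the $i$-th reported edge corresponds to the $i$-th smallest node of $V(T)\setminus\{r\}$. The root $r$ is the global maximum, so it is never among the first $k\le n-1$ nodes. The sequence of reported edges therefore determines the top-$k$ prefix.

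Distinct prefixes in $\LEs_k(T)$ give distinct outputs. Since $p'$ is accessed only through comparisons, each of which simulates a single comparison of $p$, the data structure needs at least $\log|\LEs_k(T)|$ comparisons, and hence that much time, on some input.

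\emph{Step 2 (counting).} By \cref{p:Hm-ass-closed}, choose a descendant-closed set $S$ with $|S|=k$ and $H_S(T)=H_k(T)$. Prefixes of linear extensions restricted to a descendant-closed set $S$ give $|\LEs_k(T)|\ge|\LEs_S(T)|$: any ordering of $S$ consistent with $P_T$ can be extended by the remaining nodes afterwards, because no node of $S$ is required to come after a node outside $S$. Then \cref{p:num-partial-tree-orders} together with $\log k!\ge k\log k - k\log e$ gives
\[
\log|\LEs_S(T)| \ge \sum_{u\in S}\log\tfrac{k}{|V(T_u)\cap S|} - k\log e = H_k(T)-\fO(k),
\]
which is $\Omega(H_k(T)-k)$.

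The main obstacle is the leaf invariant in Step 1. Everything else is quoted directly from \cref{sec:tts-lb}.
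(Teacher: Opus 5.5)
Your proposal is correct and follows the paper's own proof. It uses the same reduction, $p'(e)=p(u)$ for the child endpoint $u$, followed by $k$ rounds of $\alg{tree-min}(r)$ and $\alg{cut}$, and the same counting, namely a descendant-closed optimal set $S$ together with \cref{p:num-partial-tree-orders}, exactly as in the proof of \cref{p:dtm-full-lb}. Your ancestor-closed invariant simply makes rigorous the paper's one-line remark that every reported edge has a leaf endpoint, so that all remaining edges stay in the component of $r$.
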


We now show universal optimality. Let $T$ be a rooted tree, and let $m \in \N_+$ be the number of operations. By \cref{p:edtm-ub}, the running time of our algorithm for any edge priority function $p$ and any operation sequence of length $m$ is $\fO(H_{2m}(\tT) + m + n )$, which is $\fO( H_m(T) + m + n )$ by \cref{p:Hm-mon,p:Hm-edge-bound}. Considering the trivial lower bound $\Omega(m+n)$, this matches \cref{p:edtm-lb}.

\begin{theorem}
	Our EDTM data structure is universally optimal.
\end{theorem}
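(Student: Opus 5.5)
The plan is to put the upper bound of \cref{p:edtm-ub} next to the lower bound of \cref{p:edtm-lb}, with both expressed through $H_m(T)$. Fix a rooted tree $T$ on $n$ nodes and a number of operations $m \in \N_+$.

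\textbf{Upper bound.} By \cref{p:edtm-ub}, for every edge priority function and every sequence of $m$ operations, our data structure runs in $\fO(H_{2m}(\tT) + m + n)$ time.

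First I remove the subdivided tree. \cref{p:Hm-edge-bound} with $k = 2m$ gives $H_{2m}(\tT) \le 2H_{2m}(T) + 4m$.

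Next I remove the factor $2$ in the subscript. The first inequality of \cref{p:Hm-mon}, with $k = m$ and $k' = 2m$, reads $\tfrac12 H_{2m}(T) - m \le H_m(T)$. Hence $H_{2m}(T) \le 2H_m(T) + 2m$.

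Chaining these, the upper bound becomes $\Time(X,T,m) \le \fO(H_m(T) + m + n)$.

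\textbf{Lower bound.} I must show that every EDTM data structure $X'$ has $\Time(X',T,m) \ge \Omega(H_m(T) + m + n)$.

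The $\Omega(m)$ term is trivial. The $\Omega(n)$ term comes from an initial \alg{tree-min}, which needs to compare all $n-1$ edges. So it remains to show $\Omega(H_m(T)) - \fO(m)$. Since a $\Theta(m)$ loss is absorbed by the trivial $\Omega(m)$ term, the combined bound is $\Omega(H_m(T)+m+n)$.

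Set $k = \min\{\lfloor m/2\rfloor, n-1\}$. If $m=1$, then $H_1(T)=0$ and nothing is needed. Otherwise, \cref{p:edtm-lb} gives a sequence of $2k \le m$ operations costing $\Omega(H_k(T) - k)$. I pad this with dummy \alg{tree-min} queries to reach length exactly $m$; this only increases the running time.

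Now I relate $H_k(T)$ to $H_m(T)$, in two cases.

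\emph{Case $k = \lfloor m/2 \rfloor$.} \cref{p:Hm-mon} with $k' = m$ gives $H_k(T) \ge \tfrac km H_m(T) - k\log\tfrac mk \ge \tfrac13 H_m(T) - \fO(m)$.

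\emph{Case $k = n-1 < \lfloor m/2\rfloor$.} Here $H_m(T) = H(T)$. The same lemma with $k' = n$ gives $H_{n-1}(T) \ge \tfrac{n-1}{n}H(T) - \fO(n)$. This suffices, because the $\fO(n)$ loss is covered by the $\Omega(n)$ term.

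\textbf{Main obstacle.} I expect the delicate step to be bookkeeping rather than anything conceptual. The issues are the mismatch of the subscripts $2m$, $m$ and $k$, the cap $k \le |V(T)|-1$ in \cref{p:edtm-lb}, and making sure every additive loss is $\fO(m+n)$ so that it is absorbed by the trivial bounds.

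Combining the two sides gives $\Time(X,T,m) \le c\cdot\Time(X',T,m)$ for a global constant $c$, which is universal optimality.
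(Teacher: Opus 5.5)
Your proposal is correct and follows the paper's argument: the upper bound of \cref{p:edtm-ub} is reduced to $\fO(H_m(T)+m+n)$ using \cref{p:Hm-edge-bound} and \cref{p:Hm-mon}, and this is matched against \cref{p:edtm-lb} together with the trivial $\Omega(m+n)$ bounds. Your handling of the cap $k \le n-1$, the padding to length exactly $m$, and the conversion from $H_k(T)$ to $H_m(T)$ is bookkeeping that the paper leaves implicit in its one-line ``this matches'' remark.
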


\subsection{Cartesian edge-partition trees}
Recall the definition of \emph{Cartesian edge-partition trees} (Cartesian EPTs) from the introduction. A Cartesian EPT on a connected graph $G$ with edge priority function $p$ is a rooted binary tree $T$ with $V(T) = E(G) \cup V(G)$, where the edges of $G$ correspond to inner nodes of $T$, and vertices of $G$ correspond to leaves of $T$. If $G$ consists of a single vertex, then $T$ also consists of that vertex. Otherwise, the root of the EPT is the minimum-priority edge of $(G,p)$, and the the up to two child subtrees are recursively constructed on the up to two components of $G \gminus e$.
We also write $\IndE(G,p) = T$.
Let $\EPTs(G)$ denote the set of possible Cartesian EPTs on $G$.

It is known that Cartesian EPTs on \emph{trees} can be constructed in $\fO( n \log n )$ time~\cite{Chazelle1987}, or in $\fO( n \log \ell )$ time if $G$ has $\ell$ leaves~\cite{DeanMohan2013}. In this section, we give a universally optimal algorithm for arbitrary graphs.

Fix some connected graph $G$ and edge priority function $p$ on $G$. Recall that $\tp$ assigns multiple vertices the same priority $\infty$. Since this does not play well with our definition of Cartesian trees, we introduce the following tie-breaking.
Fix an arbitrary ordering $\pi$ on $V(G)$, and redefine $\tp(v) = \infty_v$; where $\infty_v$ is a priority that is larger than all edge priorities, and $\infty_v < \infty_u$ if $v$ precedes $u$ in $\pi$. We can now prove the following.

\begin{lemma}\label{p:epts-reduct}
	Let $G$ be a graph with edge priority function $p$. Then $\IndE(G,p)$ is isomorphic to $\Ind(\tG,\tp)$ under the mapping $v \rightarrow v$ for $v \in V(G)$ and $e \rightarrow v_e$ for $e \in E(G)$.
\end{lemma}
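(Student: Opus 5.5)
We prove the statement by induction on $|E(G)|$, following the recursive definitions of both structures side by side. Note first that $\tG$ is connected whenever $G$ is. Also, $\tp$ is now injective: edge vertices carry the distinct edge priorities, and every $\infty_v$ exceeds all edge priorities and is distinct by the tie-breaking. Hence $\Ind(\tG,\tp)$ is well defined.

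The base case is $|E(G)|=0$. Then $G$ is a single vertex $v$, and both $\IndE(G,p)$ and $\Ind(\tG,\tp)$ consist of the single node $v$, so they agree under the mapping.

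For the inductive step, let $e=\{u,u'\}$ be the minimum-priority edge of $G$. Since every original vertex has a priority $\infty_\cdot$ larger than all edge priorities, the minimum-priority vertex of $\tG$ is $v_e$. So $v_e$ is the root of $\Ind(\tG,\tp)$, matching the root $e$ of $\IndE(G,p)$.

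Next we compare the components after removing the roots. Removing $v_e$ from $\tG$ yields exactly $\widetilde{G\gminus e}$, because deleting the subdivision vertex is the same as subdividing $G\gminus e$. Its components are therefore $\widetilde{H}$ for the (one or two) components $H$ of $G\gminus e$. The restriction of $\tp$ to $\widetilde H$ is precisely the construction applied to $(H,p|_{E(H)})$, with the tie-breaking order on $V(H)$ inherited from $\pi$. By induction, $\Ind(\widetilde H,\tp)\cong \IndE(H,p)$ under the same mapping. The children subtrees of $v_e$ and of $e$ then correspond bijectively, which gives the isomorphism.

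The only subtle points are two bookkeeping checks. First, the case where $e$ is a bridge versus a non-bridge corresponds to one versus two components, and this is the same on both sides. Second, the induced tie-breaking on subgraphs must remain a valid instance of the construction. Neither check poses a real obstacle. One could also note that the leaves of $\Ind(\tG,\tp)$ are exactly the original vertices, but the induction already yields this.
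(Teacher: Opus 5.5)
Your proof is correct and follows the paper's argument: induction along the recursive definition, with $v_e$ as the root on both sides, and the components of $\tG \gminus v_e = \widetilde{G \gminus e}$ matched to the $\widetilde{H}$ for components $H$ of $G \gminus e$. You handle the bridge and non-bridge cases uniformly and check the inherited tie-breaking, which the paper skips; one cosmetic slip is that removing a bridge gives two components and a non-bridge one, the reverse of what your last paragraph says, but this does not affect the argument.
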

\begin{proof}
	If $G$ contains only one vertex, then both $\IndE(G,p)$ and $\Ind(\tG,\tp)$ consist of that vertex only.
	
	Otherwise, let $e$ be the minimum-priority edge of $G$, and recall that $v_e$ must be the minimum-priority vertex of $\tG$.
	
	Suppose first that $G \gminus e$ is disconnected, so it has two components $H_1$, $H_2$.
	Then $\tG \gminus v_e$ also has two components $H'_1$, $H'_2$. It is easy to see that $H'_1 = \widetilde{H}_1$ and $H'_2 = \widetilde{H}_2$ (for some assignment of the names $H_1, H_2$).
	By induction, $\Ind(H_i,p) = \Ind(\widetilde{H}_i,\tp)$ for $i \in \{1,2\}$. Since $\IndE(G,p)$ consists of the root $e$ with children $\Ind(H_1,p)$, $\Ind(H_2,p)$, and $\Ind(\tG,\tp)$ consists of the root $v_e$ with children $\Ind(\widetilde{H}_1,\tp)$, $\Ind(\widetilde{H}_2,\tp)$, our claim follows.
	
	Suppose now $H \coloneq G \gminus e$ is connected. Clearly, $\tG \gminus v_e = \widetilde{H}$. Our claim follows similarly.
\end{proof}

Observe that \cref{p:epts-reduct} in particular implies that the choice of $\pi$ for tie-breaking does not affect the Cartesian tree $\Ind(\tG,\tp)$.

We now give the universally optimal algorithm. Given is a graph $G$ with $n$ vertices and $m$ edges, and an edge priority function $p$. Start by computing $(\tG,\tp)$ in $\fO( m + n )$ time. Then, compute $\Ind(\tG,\tp)$ with \cref{p:univ-opt-et}. Using \cref{p:epts-reduct}, we can map the result to the correct EPT $\IndE(G,p)$.

The total running time is $\fO( m + n + \log |\ETs(\tG)| )$. Since \cref{p:epts-reduct} defines a bijection between $\EPTs(G)$ and $\ETs(\tG)$, we have $|\EPTs(G)| = |\ETs(\tG)|$.

\begin{theorem}\label{p:ct-graphs-ub}
	Given a connected graph $G$ with edge priority function $p$, we can compute $\IndE(G,p)$ in time $\fO( |E(G)| + \log |\EPTs(G)|$.
\end{theorem}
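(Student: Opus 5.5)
The algorithm is already described just before the statement: build $(\tG,\tp)$ with the tie-broken priorities $\infty_v$, compute $\Ind(\tG,\tp)$ with \cref{p:univ-opt-et}, and map the result back via \cref{p:epts-reduct}. The proof then has three parts: correctness, a bound on the running time, and the identity $|\ETs(\tG)| = |\EPTs(G)|$, which turns the $\log|\ETs(\tG)|$ term into $\log|\EPTs(G)|$.

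\textbf{Correctness.} By \cref{p:epts-reduct}, the Cartesian tree $\Ind(\tG,\tp)$ is isomorphic to $\IndE(G,p)$. Relabeling each $v_e$ as $e$ therefore yields $\IndE(G,p)$. The priorities $\infty_v$ are pairwise distinct, so $\tp$ is a valid priority function and \cref{p:univ-opt-et} applies. Each comparison involving some $\infty_v$ is resolved without the oracle: either against a real edge priority, or against another $\infty_u$ via $\pi$. Such comparisons still cost constant time.

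\textbf{Running time.} The graph $\tG$ has $n+m$ vertices and $2m$ edges, and building it takes $\fO(m+n)$ time. By \cref{p:univ-opt-et}, computing $\Ind(\tG,\tp)$ costs $\fO(|E(\tG)| + \log|\ETs(\tG)|) = \fO(m + \log|\ETs(\tG)|)$. Since $G$ is connected, $n \le m+1$, so the linear terms collapse to $\fO(|E(G)|)$; the degenerate case of a single vertex is trivial. Mapping back is linear.

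\textbf{The counting identity.} This is the main step. Every EPT in $\EPTs(G)$ arises as $\IndE(G,p)$ for some $p$, and hence, by \cref{p:epts-reduct}, as an elimination tree of $\tG$. This gives an injection from $\EPTs(G)$ into $\ETs(\tG)$, and so $|\EPTs(G)| \le |\ETs(\tG)|$.

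The reverse inequality is the delicate point, because an arbitrary elimination tree of $\tG$ may place an original vertex above a subdivision vertex. I expect this to be the main obstacle. I would not attempt a bijection. Instead I would bound $\log|\ETs(\tG)|$ directly by $\fO(m + \log|\EPTs(G)|)$, as follows.
\begin{enumerate}
\item Fix a spanning tree $H$ of $G$ and a rooting $T$ of $H$.
\item By \cref{p:et-lb-general} in the upper-bound direction used in the proof of \cref{p:univ-opt-et}, the running time of the tree part is $\fO(H(\tT))$, where $\tT$ is the subdivided rooting of $H$.
\item By \cref{p:Hm-edge-bound} with $k$ large, $H(\tT) \le 2H(T) + \fO(n)$.
\item A $T$-monotone vertex priority on $H$ induces edge priorities (give each edge the priority of its child endpoint). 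Arguing as in \cref{p:diff-mon-deg} and \cref{p:reduce-deg-et}, distinct orders yield distinct Cartesian EPTs of $G$. This gives $\log|\EPTs(G)| \ge \log|\LEs(T)| \ge H(T) - \fO(n)$ by \cref{p:tree-les}.
\end{enumerate}
For the spanning-tree step of the algorithm, \cref{p:jp} costs $\fO(|E(\tG)|)$ plus its heap term. I would bound that heap term by the same entropy argument and hope it too is charged to $\fO(m + H(T))$. Combining these estimates gives the claimed running time $\fO(|E(G)| + \log|\EPTs(G)|)$.
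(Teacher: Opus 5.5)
You are right to distrust a bijection. The paper's proof asserts that \cref{p:epts-reduct} gives $|\EPTs(G)| = |\ETs(\tG)|$, but that lemma only yields an injection: for a single edge, $|\EPTs(G)| = 1$, whereas $\tG$ is a three-vertex path with $5$ elimination trees. Your correctness and linear-term bookkeeping are fine, but the replacement running-time argument has a genuine gap, for two reasons.

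\emph{Step~2 analyses the wrong tree.} The tree phase of \cref{p:univ-opt-et} runs on a maximum spanning tree $S$ of $(\tG,\wmin_{\tp})$. Both edges at $v_e$ have weight $p(e)$, so $S$ is the subdivision of a spanning tree of $G$ \emph{together with a pendant leaf $v_e$ for each of the $m-n+1$ non-tree edges}. What \cref{p:ct-algo-trees} gives is $\fO(H(T_S))$ for a rooting $T_S$ of $S$. By \cref{p:lb-leaves}, $H(T_S) \ge (m-n)\log(n+m)$, which is not $\fO(m+H(T))$ for dense $G$. (Also, \cref{p:et-lb-general} is a lower bound on $\log|\ETs|$; it is not the source of an $\fO(H(\cdot))$ cost.)

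\emph{The heap term is only a hope.} The available bound, \cref{p:rt-lins}, gives $\log|\Lin(\tG,s)|$. For $G=K_n$ this is $\Omega(n^2\log n)$, since after a subdivided Hamiltonian path from $s$ the remaining subdivision vertices may follow in any order.

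In fact these two estimates cannot both be repaired, because together they would give total time $\fO(m+H(T))$, which is impossible. In $K_n$, give a Hamiltonian path $P$ the largest priorities and the other $N=\binom{n}{2}-n+1$ edges the smallest ones, in any order. Each of these $N$ deletions keeps the graph connected, so the Cartesian EPT starts with a chain listing them, and there are $N!$ distinct outputs. Hence some such input needs $\Omega(n^2\log n)$ time. Yet $P$ is then the maximum spanning tree, and $m+H(T)=\fO(n^2)$ for $T$ the path rooted at an end.

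The missing idea is a lower bound on $\log|\EPTs(G)|$ that sees all $m$ edges, not only a spanning tree of $G$. Your step~4 is correct but far too weak.

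For the tree phase, root $S$ at an original vertex and take edge orders with $p(e)<p(e')$ whenever $v_{e'}$ is a proper ancestor of $v_e$ in $T_S$. Then the remaining edges always lie in one component, and every deletion isolates at most one vertex. So the EPT is a chain (with pendant leaves) listing all $m$ edges, and distinct orders give distinct EPTs. Combining this with \cref{p:tree-les} and a comparison of $T_S$ with its contraction onto the subdivision vertices (as in \cref{p:Hm-edge-bound}) yields $H(T_S) \le \fO(m+\log|\EPTs(G)|)$.

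The spanning-tree phase also needs its own counting argument. The simplest route is to run the algorithm behind \cref{p:jp} on $(G,p)$ itself and hang each non-tree $v_e$ at an arbitrary endpoint, which \cref{p:min-edge} permits. The heap cost is then $\fO(m+\log|\Lin(G,s)|)$, and $|\Lin(G,s)|\le|\EPTs(G)|$. To see this, delete, in reverse linearization order, all edges joining each vertex to earlier vertices; this gives a chain-shaped EPT from which the linearization can be read off.

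Finally, the injection $|\EPTs(G)|\le|\ETs(\tG)|$ that you prove points in the wrong direction for an upper bound and can be dropped.
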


By the information-theoretic lower bound and the trivial $\Omega(|E(G)|)$ lower bound, \cref{p:ct-graphs-ub} is universally optimal.

\subsection{Edge-weighted path-minimum and bottleneck queries.}
It is well known that Cartesian EPTs can be used to answer edge-weighted variants of \alg{path-min} and \alg{bottleneck} queries~\cite{Chazelle1987,DemaineLandauEtAl2014}, much like Cartesian trees can be used for the vertex-weighted variants. For \alg{bottleneck} queries on general graphs, again a spanning tree reduction is used~\cite{Hu1961,ShapiraYusterEtAl2011,DemaineLandauEtAl2014}. Using the same techniques as above, we obtain a universally optimal algorithm to preprocess a graph $G$ with edge priorities such that \alg{bottleneck} queries can be answered without further comparisons. We omit the details.

\paragraph{Acknowledgments.} The author would like to thank Tomasz Kociumaka and Marek Sokołowski for useful suggestions.

\small
\bibliographystyle{alphaurl}
\bibliography{info}

\normalsize
\appendix

\section{The potential of complete binary trees}\label{app:tree-pot}

Let $T$ be a binary tree and $w \colon V(T) \rightarrow \N_+$ be a node weight function. Recall that the \emph{rank} of a node $v \in V(T)$ is $\log w(T_v)$ and the \emph{potential} $\Phi(T)$ of $T$ is defined as the sum of all node ranks. In the following, we explicitly write $\Phi(T,w)$ to denote the potential of $T$ w.r.t.\ the weight function $w$.

The \emph{complete binary tree} $C_n$ on $n = 2^k-1$ nodes is the tree where all leaves are at depth $k$. If $n+1$ is not a power of two, then $C_n$ is constructed from $C_{n'}$ with $n' = 2^{\lceil \log (n+1) \rceil} - 1$ by progressively removing leaves from right to left until the number of nodes is $n$.
In this section, we show:

\begin{theorem}\label{p:tree-pot}
	Let $T = B_n$ and let $w \colon V(T) \rightarrow \N_+$. Let $W = w(T)$ be the sum of weights in $T$. Then $\Phi(T,w) \in \fO(W)$. 
\end{theorem}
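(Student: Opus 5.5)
The plan is to bound the potential by charging, for every node, the weight of each node inside its subtree, and then to exploit that subtree sizes shrink geometrically with depth. Write $T = B_n$ (the complete binary tree $C_n$), and for a node $u$ let $d(u)$ be its depth, with the root at depth $0$. Let $h = \lceil \log(n+1) \rceil$ denote the height. Since $T_v$ is contained in a complete binary tree of height $h - d(v)$, we have $|T_v| \le 2^{h-d(v)+1}$.

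The key inequality is $\log(a_1 + \dots + a_s) \le \sum_i \log(1+a_i)$, valid for positive integers $a_i$. It holds because $\prod_i (1+a_i) \ge 1 + \sum_i a_i$. Taking the $a_i$ to be the weights in $T_v$ gives
\[ r(v) = \log w(T_v) \le \sum_{u \in V(T_v)} \log(1+w(u)). \]
However, summing this over all $v$ charges each $u$ once per ancestor, giving a factor of $d(u)+1$. That is up to $\log n$ per unit and too weak, so a sharper split is needed.

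Instead, apply concavity within each subtree. Because $w(u) \ge 1$, every node of $T_v$ has weight at least one, and writing the rank as
\[ r(v) = \log |T_v| + \log\bigl( w(T_v)/|T_v| \bigr) \]
separates the two effects. The first terms sum to $\Phi(T,\mathbf{1})$, the potential under unit weights. For a complete tree this is $\sum_j 2^j (h-j+1) = \fO(2^h) = \fO(n) \le \fO(W)$. For the second terms, use $\log x \le x/\ln 2$ to get
\[ \log\bigl( w(T_v)/|T_v| \bigr) \le \fO\bigl( w(T_v)/|T_v| \bigr). \]
Summing over $v$ and exchanging the order of summation, each $u$ contributes
\[ w(u) \sum_{v \text{ ancestor of } u \text{ (incl.)}} \frac{1}{|T_v|}. \]

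The main step is to bound $\sum_{v} 1/|T_v|$ over the ancestors $v$ of $u$ by a constant. In a complete binary tree, the sizes of the ancestors of $u$, going upward, at least roughly double: an ancestor at depth $j$ has a subtree containing a complete tree of height about $h-j-1$, since only the last level may be partial. The one subtlety is a node whose subtree is nearly empty on the last level, so the estimate should use $|T_v| \ge 2^{h-d(v)-1}$, which is valid for $d(v) \le h-1$. The ancestors of $u$ have distinct depths $d \le d(u) \le h-1$, so the sum is at most
\[ \sum_{d} 2^{-(h-d-1)} \le 2. \]
Hence the second part is $\fO(\sum_u w(u)) = \fO(W)$. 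Combining the two parts gives $\Phi(T,w) = \fO(n + W) = \fO(W)$, since every weight is at least one and so $n \le W$.
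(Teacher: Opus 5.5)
Your proof is correct, and it follows a different route from the paper's.

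\textbf{The paper's route.} It first reduces to a perfect tree ($n = 2^k-1$) by padding with weight-one leaves, and bounds the unit-weight potential by $\fO(n)$. It then raises weights one node at a time. An increase of $x$ at $v$ changes the potential by $\sum_i \log(1 + x/w(T_{a_i}))$ over the ancestors $a_i$ of $v$. This is at most $\sum_i (x/2^i)\log e \le 3x$, because $w(T_{a_i}) \ge 2^{i+1}-1$ in a perfect tree.

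\textbf{Your route.} You do everything in one step. You split each rank into $\log|T_v|$ plus the logarithm of the average weight in $T_v$, linearize the latter, and exchange the order of summation. Everything then reduces to showing that the sum of $1/|T_v|$ over the ancestors $v$ of $u$ is $\fO(1)$.

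\textbf{Comparison.} Both arguments rest on the same fact: subtree sizes grow geometrically along root paths, so the linearized logarithms form a convergent geometric series. Your version avoids the incremental process and the padding step, and it handles the partial last level directly. The paper's perfect-tree assumption instead makes the growth bound $|T_{a_i}| \ge 2^{i+1}-1$ exact and immediate. That bound genuinely fails without padding: a node at depth $h-2$ whose single child lies on the last level has a parent with subtree size $2 < 3$.

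\textbf{A detail to fix.} That same partial last level is where your one slip occurs. The bound $|T_v| \ge 2^{h-d(v)-1}$ is false for nodes whose subtree misses the last level. For $n = 4$ (so $h = 3$), the right child of the root has depth $1$ and $|T_v| = 1 < 2$. The full levels $d(v), \dots, h-2$ only give $|T_v| \ge 2^{h-d(v)-1} - 1$, together with $|T_v| \ge 1$. Hence $|T_v| \ge 2^{h-d(v)-2}$ for every $d(v) \le h-1$, and the ancestor sum is at most $\sum_{j \ge 0} 2^{1-j} = 4$ rather than $2$. The conclusion $\Phi(T,w) = \fO(n + W) = \fO(W)$ is unaffected.
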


The following lemma is central:
\begin{lemma}\label{p:tree-pot-single}
	Let $T = B_n$, let $v \in V(T)$, and let $w, w' \colon V(T) \rightarrow \N_+$ be weight functions such that $w'(u) = w(u)$ for all $u \neq v$ and $w'(v) > w(v)$. Then $\Phi(T,w') \le \Phi(T,w) + 3(w'(v) - w(v))$.
\end{lemma}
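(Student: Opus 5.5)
Increasing the weight of $v$ only changes the ranks of $v$ and its ancestors in $T$, since $w(T_u)$ depends on $w(v)$ exactly when $u$ is an ancestor of $v$ (or $v$ itself). So I would write $\delta = w'(v) - w(v) > 0$, let $v = u_0, u_1, \dots, u_d$ be the path from $v$ to the root, and note that
\[ \Phi(T,w') - \Phi(T,w) = \sum_{i=0}^{d} \log \frac{w(T_{u_i}) + \delta}{w(T_{u_i})} \le \sum_{i=0}^{d} \log\Bigl(1 + \frac{\delta}{s_i}\Bigr), \]
where $s_i = w(T_{u_i})$. The whole task is to bound this sum by $3\delta$.

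Since all weights are at least one, $s_i \ge |T_{u_i}|$, and subtree sizes in the complete binary tree $B_n$ grow geometrically along the root path. The subtree of $u_i$ contains a perfect binary subtree of height $i-1$ hanging below $u_i$ on the side away from the path, or at least the path itself. I would first establish $|T_{u_i}| \ge 2^{i-1}$ for $i \ge 1$, using the near-completeness of $B_n$: all leaves lie in the last two levels, so $u_i$ has depth at most $k-i$ and all levels above the last are full. This gives $s_i \ge 2^{i-1}$ for $i \ge 1$, and $s_0 \ge 1$.

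Then I would use $\log(1+x) \le \log e \cdot x$ only for large $i$, and handle small $i$ more carefully. Using $\log(1+\delta/s) \le \log(1+\delta)$ for every term would give $d\log(1+\delta)$, which is too large when $d$ is large. Instead I would split at $i^* = \lceil \log \delta\rceil + 1$.
\begin{itemize}
\item For $i \le i^*$, each term is at most $\log(1 + \delta/2^{i-1})$. Summing, this behaves like $\sum_{j} \log(1+\delta/2^j)$ over $j \le \log\delta$, which is about $\sum_j (\log \delta - j + 1) = O(\log^2 \delta)$.
\item For $i > i^*$, the terms are at most $\log e \cdot \delta/2^{i-1}$, a geometric tail bounded by $O(1)$.
\end{itemize}
This yields $O(\log^2\delta + 1)$, which is $\le 3\delta$ for all $\delta \ge 1$ after checking the constants. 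Those constants are the main obstacle: I need $\sum_{i\ge0}\log(1+\delta/\max(1,2^{i-1}))\le 3\delta$ for every integer $\delta\ge1$. At $\delta = 1$ the sum is $1 + 1 + \log 1.5 + \log 1.25 + \cdots \approx 2.9$, which is tight against $3$. So I would verify small $\delta$ directly and use the asymptotic bound for larger $\delta$, possibly tightening the geometric growth claim by noting that $s_i \ge 2^i$ whenever the off-path sibling subtree of $u_{i-1}$ is full.

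Finally, if the path length bound fails at the last, incomplete level, I would absorb that by replacing $2^{i-1}$ with $2^{i-1}$ counted from the lowest full level, at the cost of shifting one index. If the tight constant $3$ still fails, a constant-factor slack is harmless for \cref{p:tree-pot}, since any constant $c$ works there.
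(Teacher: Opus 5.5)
\textbf{Gap: the bound $s_i \ge 2^{i-1}$ cannot give the constant $3$.} Your reduction to $\sum_{i=0}^{d}\log(1+\delta/s_i)$ matches the paper's. The trouble is the subtree-size bound you settle on. Take $s_0=s_1=1$ and $s_i=2^{i-1}$ for $i\ge2$. At $\delta=1$ the sum is $2+\sum_{j\ge1}\log(1+2^{-j})\approx 3.25$, not $\approx 2.9$. You stopped after the $\log 1.25$ term, but $\log 1.125+\log 1.0625+\cdots$ adds roughly another $0.35$. So your plan to ``verify small $\delta$ directly'' fails at $\delta=1$. The split at $i^*$ only helps for large $\delta$, so it cannot rescue this case. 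Falling back on a larger constant would prove a weaker lemma than the one stated.

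\textbf{The fix: $|T_{u_i}|\ge 2^i$ for every $i\ge0$.} This is the sharper bound you suspected, and it holds unconditionally. Let $v$ have depth $d$. Only the last level of $B_n$ can be incomplete, so every level of depth less than $d$ is full. Hence $T_{u_i}$ contains all $2^i-1$ nodes of depths $d-i,\dots,d-1$ below $u_i$, plus $v$ itself.

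With $s_i\ge 2^i$, the termwise estimate $\log(1+y)\le y\log e$ already suffices. The sum is at most $\delta\log e\sum_{i\ge0}2^{-i}=2\delta\log e<2.89\,\delta\le 3\delta$. No case split and no small-$\delta$ check are needed. This is exactly the paper's proof, which uses $w(T_{a_i})\ge 2^i$, monotonicity of $\tfrac{a+x}{a}$ in $a$, and the geometric sum.
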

\begin{proof}
	Let $x = w'(v) - w(v)$. Let $a_0 = v$ and let $a_1, a_2, \dots, a_k$ be the proper ancestors of $v$ in $T$, ordered from $v$ to the root. The weight increase of $v$ only affects itself and its ancestors, so we have:
	\[ \Phi(T,w') - \Phi(T,w) = \sum_{i=0}^k \log w'(T_{a_i}) - \log w(T_{a_i}) = \sum_{i=0}^k \log \frac{w(T_{a_i})+x}{w(T_{a_i})}. \]
	
	Observe that $w(T_{a_i}) \ge 2^{i+1}-1 \ge 2^i$. Using the fact that $\tfrac{a+x}{a} \le \tfrac{b+x}{b}$ for all $a \ge b$, we get
	\begin{align*}
		\Phi(T,w') - \Phi(T,w) & \le \sum_{i=0}^k \log \frac{2^i+x}{2^i}
			\le \sum_{i=0}^k \log \left( 1 + \frac{x}{2^i} \right)
			\le \sum_{i=0}^k \log e^{x/2^i}\\
			& \le 2 x \log e \le 3x.\tag*{\qedhere}
	\end{align*}
\end{proof}

\begin{proof}[Proof of \cref{p:tree-pot}.]
	Suppose that $n = 2^k-1$; the other cases follow by padding with weight-one leaves.
	
	We start with the case that $w(v) = 1$ for all $v \in V(T)$.
	Then
	\[ \Phi(T,w) = \sum_{i=0}^{k-1} 2^i \log \tfrac{n}{2^i} \in \fO(n), \]	
	as desired.
	
	We can now transform $w$ into an arbitrary weight function $w'$ by repeatedly increasing the weight of some node by some amount $x$. Every time we do this, the potential increases by at most $3x$ by \cref{p:tree-pot-single}. Thus, the total potential increase is at most $3(w'(T)- w(T)) = 3(w'(T)-n)$, for a total potential of $\fO(w'(T) + n) = \fO(w'(T))$.
\end{proof}

\section{The Dijkstra-Jarník-Prim algorithm with working-set heaps}\label{app:jp}

In this section, we give a technical description of how to compute a minimum spanning tree in the time required by \cref{p:univ-opt-et}. The main theorem follows.

\restateJarnikPrim*

We first specify the priority queue data structure used in the Dijkstra-Jarník-Prim algorithm. We use the classical operation names, where our \emph{priorities} are called \emph{keys}.
\begin{definition}
	A \emph{priority queue} maintains a set of elements $S$ and an injective priority assignment $p \colon S \rightarrow \Prios$, under the following operations:
	\begin{itemize}
		\item $\texttt{delete-min}() \rightarrow x$ removes and returns the minimum-priority element $x$ in $S$.
		\item $\texttt{insert}(x, q)$ inserts an element $x$ and sets $p(x) \gets q$.
		\item $\texttt{decrease-key}(x, y)$ sets $p(x) \gets q$, assuming $x \in S$ and $q$ is smaller than the previous value for $p(x)$.
		\item $\texttt{get-key}(x) \rightarrow q$ returns $q = p(x)$.
	\end{itemize}
\end{definition}

\begin{algorithm}
	\caption{The Dijkstra-Jarník-Prim MST algorithm. The input is an edge-weighted graph $(G,w)$ and a starting vertex $s \in V(G)$}\label{alg:jp}
	\begin{algorithmic}
		\Procedure{DJP}{$G, w, s$}
			\State Initialize empty priority queue $Q$
			\State $Q.\texttt{insert}(s, ?)$ \Comment{Priority is arbitrary}
			\State Initialize map $R \colon V(G) \rightarrow V(G) \cup \{\bot\}$ with $\bot$ \Comment{Predecessors}
			\State Initialize map $p \colon V(G) \rightarrow \Prios$ with $\infty$ \Comment{Current priorities}
			\State Initialize set $U \gets \emptyset$ \Comment{Visited vertices}
			\While{$Q$ is not empty}
				\State $u \gets Q.\texttt{delete-min}()$
				\State $U \gets U \cup \{u\}$
				\ForEach{neighbor $v$ of $u$ with $v \notin U$}
					\If{$v \notin Q$}
						\State $Q.\texttt{insert}(v, w(u,v))$.
						\State $R[v] = u$
					\ElsIf{$w(u,v) < p[v]$}
						\State $Q.\texttt{decrease-key}(v, w(u,v))$
						\State $R[v] = u$
					\EndIf
					\State $p[v] = w(u,v)$
				\EndFor
			\EndWhile
			\State \Return Spanning tree $T$ with $E(T) = \{ \{v, R(v)\} \mid v \in V(G), R(v) \neq \bot \}$.
		\EndProcedure
	\end{algorithmic}
\end{algorithm}

Pseudocode for the Dijkstra-Jarník-Prim algorithm is given in \cref{alg:jp}. With a classical priority queue implementation such as \emph{Fibonacci trees}~\cite{FredmanTarjan1987}, with constant-time $\texttt{decrease-key}$, the running time of \cref{alg:jp} is $\fO( |E(G)| + |V(G)| \log |V(G)| )$. We use a stronger priority queue implementation,
which is said to have the \emph{working set property}~\cite{HaeuplerHladikEtAl2025a} or is alternatively called \emph{timestamp-optimal}~\cite{HoogRotenbergEtAl2025}. We use the terminology of van der Hoog, Rotenberg, and Rutschmann here~\cite{HoogRotenbergEtAl2025}.

Consider a sequence of priority queue operations. Define the \emph{timestamp} of each operations to be the number of \texttt{insert} operations executed so far, including the current operation. Observe that multiple operations may share the same timestamp.

\begin{theorem}[{\cite{HaeuplerHlad'ikEtAl2024,HoogRotenbergEtAl2025}}]\label{p:wspq}
	There exists a priority queue implementation, called a \emph{timestamp-optimal priority queue}, where each $\alg{insert}$ and $\alg{decrease-key}$ takes $\fO(1)$ amortized time,
	and each $\alg{delete-min}(x)$ takes $\fO( \log (t - s_x) )$ time, where $t$ is the timestamp of the $\alg{delete-min}$ operation, and $s_x$ is the timestamp of the $\alg{insert}$ operation that inserted $x$.
\end{theorem}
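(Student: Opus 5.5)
This result is imported from \cite{HaeuplerHlad'ikEtAl2024,HoogRotenbergEtAl2025}, and I would reprove it by the standard ``doubling segments of Fibonacci heaps'' construction.

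\textbf{Structure.} Elements are kept in Fibonacci heaps $H_0, H_1, H_2, \dots$, indexed from youngest to oldest. Heap $H_i$ holds a contiguous range of insertion timestamps, and its nominal capacity is $2^i$. New elements go into $H_0$. Whenever some $H_i$ exceeds its capacity, it is melded into $H_{i+1}$ in $\fO(1)$ time by Fibonacci meld, exactly like a carry in a binary counter. Since each element only moves towards older heaps, it is touched $\fO(\log n)$ times, which seems fine, but the point is that an element in $H_i$ has at least $\Omega(2^{i-1})$ younger insertions after it. Hence $i \le \fO(\log(t - s_x))$. Alongside the heaps I store suffix minima $S_j = \min(H_j \cup H_{j+1} \cup \dots)$, so the global minimum is $S_0$.

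\textbf{Operations.} $\alg{insert}$ puts the element into $H_0$, updates $S_0$, and performs the amortized carries. Each meld of $H_i$ into $H_{i+1}$ leaves $S_{i+1}$ and all later suffix minima unchanged; only $S_i$ needs recomputing, so the melds cost constant amortized time by the counter argument.

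$\alg{delete-min}$ reads $S_0$ and finds the heap $H_i$ containing the minimum $x$. It extracts $x$ there in $\fO(\log |H_i|) = \fO(i)$ amortized time, recomputes $S_i, S_{i-1}, \dots, S_0$ in $\fO(i)$ time, and performs lazy cleanup. By the age observation, all of this is $\fO(\log(t - s_x))$.

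\textbf{Decrease-key (the main obstacle).} Decreasing the key of an element in $H_i$ may lower all of $S_0, \dots, S_i$, which naively costs $\Theta(i)$ rather than $\fO(1)$. I would first try the following. Mark the old copy in $H_i$ as dead, and insert a fresh copy with the new key into $H_0$ without advancing the timestamp. The fresh copy then sits in a younger heap, so its eventual deletion cost is bounded by a smaller index, and in particular still by $\fO(\log(t - s_x))$. Dead copies are discarded lazily when they surface as the minimum of their heap. The cost of such a discard is charged to the $\alg{decrease-key}$ that created it: the dead copy surfaces in some $H_j$ whose age is at most the age of the live element, so it is paid within the same logarithmic budget.

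What remains to verify is that capacities stay meaningful even though copies inflate heap sizes without advancing timestamps. I would do this by defining capacities by timestamp ranges rather than by element counts. Then $\log |H_i|$ is bounded via $|H_i| \le 2^i +{}$(number of decrease-keys charged to $H_i$), and the Fibonacci extraction cost $\fO(\log |H_i|)$ is amortized against those decrease-keys. If this accounting fails, I would fall back to the timestamp-ordered analysis of \cite{HoogRotenbergEtAl2025}, which handles exactly this issue.
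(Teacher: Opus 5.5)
The paper gives no proof of this statement; it imports it from \cite{HaeuplerHlad'ikEtAl2024,HoogRotenbergEtAl2025}. Your sketch therefore has to stand on its own, and its central step, the dead-copy implementation of \alg{decrease-key}, fails. You remove a dead copy only when it surfaces, via an extract-min in its Fibonacci heap, which costs up to $\Theta(\log|H_j|)$ amortized. That cannot be charged to the $\fO(1)$ \alg{decrease-key} that created it. Nor can it be charged to the single $\fO(\log(t-s_x))$ budget of the live element, because one element can be decreased arbitrarily often.

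This is a real failure, not missing bookkeeping. Insert $x_1,\dots,x_n$. Then run $n$ rounds, each decreasing every $x_j$ once and then inserting one dummy with a huge key. Choose the last-round keys smallest overall, and all other keys of the $x_j$ below the dummies. Finally perform $n+1$ \alg{delete-min}s. All timestamps are at most $2n$, so the theorem allows $\fO(n^2)$ total time, and up to the last \alg{delete-min} your structure makes only $\fO(n^2)$ comparisons. That last operation, however, discards the $n^2$ dead entries in increasing key order before any dummy surfaces. Their relative order is constrained only by $n$ disjoint chains of length $n$, so about $\log\frac{(n^2)!}{(n!)^n}=\Theta(n^2\log n)$ bits are still undetermined, and $\Omega(n^2\log n)$ comparisons are forced. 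Any repair has to avoid creating garbage or dispose of it without effectively sorting it. It must then still bound how live copies inflate $|H_i|$, which is exactly the part you leave unverified.

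The carry scheme is also wrong as stated. With capacity $2^i$ and an overflowing $H_i$ melded into $H_{i+1}$, after $2^k$ insertions every element sits in $H_k$, including the one just inserted. So your claim that an element of $H_i$ has $\Omega(2^{i-1})$ younger insertions, on which all your \alg{delete-min} bounds rest, is false. The standard remedy is a redundant counter over timestamp blocks: keep at most two blocks per size, and when a third appears, merge the two \emph{oldest}. Then the block of $x$ spans $\fO(1+t-s_x)$ timestamps and has only $\fO(\log(2+t-s_x))$ younger blocks. As a minor point, melding $H_i$ into $H_{i+1}$ changes $S_{i+1}$, not $S_i$, though the update is still $\fO(1)$. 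Finally, your fallback to the analysis of \cite{HoogRotenbergEtAl2025} amounts to citing the result, which is exactly what the paper does.
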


Let $G$ be a graph and $s \in V(G)$. A \emph{linearization} of $(G,s)$ is any ordering of $V(G)$ that can be obtained by assigning edge weights to $G$ and ordering the vertices by distance to $s$ (assuming all distances are distinct)~\cite{HaeuplerHlad'ikEtAl2024,HoogRotenbergEtAl2025}. Let $\Lin(G,s)$ denote the set of linearizations of $(G,s)$. Berendsohn~\cite{Berendsohn2025} showed that linearizations are precisely the orderings produced by the \emph{vertex search antimatroid} of $(G,s)$, which is the set of orderings starting with $s$ where each prefix forms a connected subgraph of $G$. Consequently, the \emph{reverses} of linearizations are obtained by progressively removing vertices other than $s$ without ever producing a disconnected graph. This means that linearizations are in bijection with degenerate elimination trees with root $s$, so $|\Lin(G,s)| \le |\ETs(G)|$.

We now prove that \cref{alg:jp} with a timestamp-optimal priority queue has running time $\fO(m+n+\log|\Lin(G,s)|)$.
By the discussion above and using \cref{p:et-lb-easy}, this implies \cref{p:jp}. We closely follow van der Hoog, Rotenberg, and Rutschmann~\cite{HoogRotenbergEtAl2025}.
Fix an edge-weighted input graph $(G,w)$ with $n$ vertices and $m$ edges, and fix a starting vertex $s \in V(G)$. Run \cref{alg:jp}. Observe that that each vertex is inserted exactly once into $Q$ and deleted exactly once from $Q$. Let $s_x$ denote the timestamp of insertion and let $t_x$ denote the timestamp of deletion. The sequence of deletions forms a linearization, which means we can use the following lemma, originally stated for Dijkstra's algorithm:

\begin{lemma}[{\cite[Lemmas 7, 8]{HoogRotenbergEtAl2025}}]\label{p:rt-lins}
	$\sum_{x \in V(G)} \log(t_x-s_x) \in \fO( n + \log |\Lin(G,s)| )$.

\end{lemma}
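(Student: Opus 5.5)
We prove $\sum_{x} \log(t_x - s_x) \in \fO(n + \log|\Lin(G,s)|)$ by an injection/counting argument. From the run of \cref{alg:jp} we build many distinct linearizations, and then take logarithms.

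Let $\pi = (x_1 = s, x_2, \dots, x_n)$ be the deletion order; it is a linearization. For each vertex $x$, let $d_x = t_x - s_x$. This counts the insertions that happen while $x$ sits in $Q$, so at least $d_x$ vertices are inserted after $x$ and before $x$ is removed.

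The key structural claim is as follows. At the moment $x$ is inserted, a neighbor of $x$ has already been deleted, so $x$ is adjacent to the visited prefix $U$. The same holds for every vertex that is in $Q$ at time $t$. Hence, when we fix the prefix of deletions made before timestamp $t_x$, we may move $x$ to any position among the roughly $d_x$ slots after its insertion time. Each such order still has every prefix connected, so it is still a linearization. It remains a valid vertex-search ordering because $x$'s attaching neighbor precedes it.

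To turn this into a count, I would follow the standard potential/encoding argument of van der Hoog et al. Process the vertices in deletion order. For each vertex, choose independently a "slot" among $\lfloor d_x/2\rfloor$ (or a dyadic-rounded number of) admissible positions, and argue that distinct choice vectors yield distinct orders. This gives $|\Lin(G,s)| \ge \prod_x \Omega(d_x)/c^n$, and taking logarithms yields $\sum_x \log d_x \le \log|\Lin(G,s)| + \fO(n)$.

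The main obstacle is making the choices independent and the map injective. Moving one vertex shifts the positions of others and may interact with the windows of other vertices. I would handle this by rounding each $d_x$ down to a power of two and grouping vertices by scale. Alternatively, I would define the orders recursively: re-insert vertices one by one in reverse deletion order, so that $x$'s choice occurs among the vertices already placed in its window. The decoding step then recovers each choice from the final order by reading off the relative position of $x$ among vertices inserted after $s_x$. This is exactly the content of Lemmas 7 and 8 in \cite{HoogRotenbergEtAl2025}, which transfer verbatim since they use only the insertion/deletion timestamps and connectivity of prefixes, not the fact that keys are distances.
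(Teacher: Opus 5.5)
\textbf{The citation is fine, but the argument you sketch before it has a genuine gap.} Your closing sentence is essentially all the paper does: it notes that the DJP deletion sequence is a linearization, with insertions happening exactly as in Dijkstra's algorithm, and cites \cite[Lemmas 7, 8]{HoogRotenbergEtAl2025}. The problem is that your own argument is set in the wrong reference order. In the deletion order $\pi$, the quantity $d_x = t_x - s_x$ counts \emph{insertions} during $x$'s stay in $Q$, not available slots. Moreover, moving $x$ later in $\pi$ can strand vertices whose only earlier neighbor is $x$, so checking that $x$'s own attaching neighbor precedes it is not enough. For example, let $s$ be the center of a star whose adjacency list starts with $x$. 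Add a vertex $z$ adjacent only to $x$, and let the run be $\pi = (s, x, z, y_1, \dots, y_k)$. Then $d_x = k = n-3$, yet with all other vertices kept in place, $x$ has exactly one admissible position in $\pi$.

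Your fallback of re-inserting in reverse deletion order also fails, because it lets $x$ choose only among window vertices deleted \emph{after} $x$. Suppose $s$ has $k$ pendant neighbors that are inserted first. Then a path of $m$ vertices attached to $s$ is explored, and finally the pendants are extracted in reverse insertion order. Each pendant's whole window then precedes it in $\pi$, so it gets a single choice, whereas $\sum_x \log(1+d_x) = \Theta(n\log n)$ for $k = m = n/2$. Independence and injectivity, which you rightly flag as the crux, are therefore never established.

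\textbf{The missing idea is to work in the insertion order instead.} In the star example above, $x$ can then slide freely past $y_1,\dots,y_k$. For $x \neq s$, let $p(x)$ be the vertex whose processing inserted $x$. Every insertion counted in $t_{p(x)}$ precedes that processing, so $t_{p(x)} < s_x$.
\begin{itemize}
\item \emph{Random orders are linearizations.} Draw independent real keys $k_x$ uniformly from $[s_x, t_x+1)$ and sort by them. Since $k_{p(x)} < t_{p(x)}+1 \le s_x \le k_x$, and $k_s < 2 \le k_x$ for all $x \neq s$, every vertex follows its parent in the insertion tree. Hence every prefix is connected and starts with $s$, so the resulting order lies in $\Lin(G,s)$.
\item \emph{Volume count.} The key space has volume $\prod_x (1+t_x-s_x)$. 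All key vectors inducing one fixed order lie in the part of $[1,n+1)^n$ that is increasing along that order, which has volume $n^n/n! \le e^n$.
\end{itemize}
So at least $e^{-n}\prod_x (1+t_x-s_x)$ distinct linearizations occur, which gives $\sum_x \log(1+t_x-s_x) \le \log|\Lin(G,s)| + n\log e$. This settles independence and injectivity in one step, without any dyadic grouping. Since $t_x = s_x$ can occur, the lemma should be read with $\log(1+t_x-s_x)$.
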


By \cref{p:wspq,p:rt-lins}, the total running time of all \texttt{delete-min} operations is $\fO( n + \log |\Lin(G,s)|)$.
Observe that the algorithm performs $\fO(m)$ operations \texttt{insert} and \texttt{decrease-key}, and $\fO(m+n)$ further work, so the running time is $\fO( m + n + \log |\Lin(G,s)|)$.
Since $|\Lin(G,s)| \le |\ETs(G)|$, as observed above, and $|\ETs(G)| \ge \Omega(n)$, by \cref{p:et-lb-easy}, we can conclude \cref{p:jp}.

\end{document}